\newcommand{\CIC}{\mathsf{CIC}}
\newcommand{\I}{\mathsf{I}}
\newcommand{\Hsf}{\mathsf{H}}
\newcommand{\AND}{\mathrm{AND}}
\newcommand{\OPT}{\mathsf{OPT}}
\DeclareMathOperator{\TV}{\mathsf{TV}}
\DeclarePairedDelimiter\parens{\lparen}{\rparen}
\DeclarePairedDelimiter\braces{\lbrace}{\rbrace}
\DeclarePairedDelimiter\bracks{\lbrack}{\rbrack}
\let\angle\Angle
\DeclarePairedDelimiter\angle{\langle}{\rangle}
\title{Sketching Algorithms for Sparse Dictionary Learning: PTAS and Turnstile Streaming}
\author{%
Gregory Dexter \\
Department of Computer Science \\
Purdue University \\
\texttt{gdexter@purdue.edu} \\
\And
Petros Drineas \\
Department of Computer Science \\
Purdue University \\
\texttt{pdrineas@purdue.edu} \\
\And
David P. Woodruff \\
Computer Science Department \\
Carnegie Mellon University \\
\texttt{dwoodruf@cs.cmu.edu} \\
\And
Taisuke Yasuda \\
Computer Science Department \\
Carnegie Mellon University \\
\texttt{taisukey@cs.cmu.edu} \\
}
\begin{document}

\maketitle

\begin{abstract}
Sketching algorithms have recently proven to be a powerful approach both for designing low-space streaming algorithms as well as fast polynomial time approximation schemes (PTAS). In this work, we develop new techniques to extend the applicability of sketching-based approaches to the \emph{sparse dictionary learning} and the \emph{Euclidean $k$-means clustering} problems. In particular, we initiate the study of the challenging setting where the dictionary/clustering \emph{assignment} for each of the $n$ input points must be output, which has surprisingly received little attention in prior work. On the fast algorithms front, we obtain a new approach for designing PTAS's for the $k$-means clustering problem, which generalizes to the first PTAS for the sparse dictionary learning problem. On the streaming algorithms front, we obtain new upper bounds and lower bounds for dictionary learning and $k$-means clustering. In particular, given a design matrix $\Ab\in\mathbb R^{n\times d}$ in a turnstile stream, we show an $\tilde O(nr/\epsilon^2 + dk/\epsilon)$ space upper bound for $r$-sparse dictionary learning of size $k$, an $\tilde O(n/\epsilon^2 + dk/\epsilon)$ space upper bound for $k$-means clustering, as well as an $\tilde O(n)$ space upper bound for $k$-means clustering on random order row insertion streams with a natural ``bounded sensitivity'' assumption. On the lower bounds side, we obtain a general $\tilde\Omega(n/\epsilon + dk/\epsilon)$ lower bound for $k$-means clustering, as well as an $\tilde\Omega(n/\epsilon^2)$ lower bound for algorithms which can estimate the cost of a single fixed set of candidate centers.
\end{abstract}

\section{Introduction}

A classic idea in machine learning and signal processing for efficiently handling large datasets is to approximate them by simpler or more structured surrogate datasets. Many methods in this direction have long been considered, including low rank approximation, which approximates a given dataset by one that lies on a low-dimensional subspace, $k$-means clustering, which approximates a given dataset by at most $k$ distinct points, and sparse dictionary learning \cite{OF1997}, which approximates a given dataset by linear combinations of elements of a small dictionary of size $k$ with $r$-sparse coefficient vectors (i.e., a vector with at most $r$ nonzero entries). We focus on the latter two problems in this work: %

\begin{definition}[$r$-sparse dictionary learning]
\label{def:sparse_dict}
Let $\{a^i\}_{i=1}^n\subseteq\mathbb R^d$ be a set of $n$ vectors in $d$ dimensions, and let $\Ab\in\mathbb R^{n\times d}$ be the matrix with the $i$th row set to $a^i$. Then for a matrix $\Xb\in\mathbb R^{n\times k}$ with $r$-sparse rows and a dictionary $\Db\in\mathbb R^{k\times d}$, we define the dictionary learning cost to be
\[
    \cost(\Xb,\Db) \coloneqq \norm{\Xb\Db - \Ab}_F^2
\]
In the \emph{$r$-sparse dictionary learning problem}, we seek to minimize $\cost(\Xb,\Db)$ over all $\Xb\in\mathcal X$ and $\Db\in\mathbb R^{k\times d}$, where $\mathcal X$ denotes the set of all $n\times k$ matrices with $r$-sparse rows.
\end{definition}

\begin{definition}[Euclidean $k$-means clustering]
\label{def:k-means}
Let $\{a^i\}_{i=1}^n\subseteq\mathbb R^d$ be a set of $n$ vectors in $d$ dimensions, and let $\Ab\in\mathbb R^{n\times d}$ be the matrix with the $i$th row set to $a^i$. Then, for a matrix $\Xb\in\mathbb R^{n\times k}$ with standard basis vectors in its rows and a set of centers $\Cb\in\mathbb R^{k\times d}$, we define the $k$-means clustering cost to be
\[
    \cost(\Xb,\Cb) \coloneqq \norm{\Xb\Cb-\Ab}_F^2.
\]
In the \emph{$k$-means clustering problem}, we seek to minimize $\cost(\Xb,\Cb)$ over all $\Xb\in\mathcal X$ and $\Cb\in\mathbb R^{k\times d}$, where $\mathcal X$ denotes the set of all $n\times k$ matrices with standard basis vectors as rows.
\end{definition}

While dictionary learning and clustering have found extraordinary success in various applications in practice, they are known to be computationally difficult problems to solve \citep{MNV2012, Nat1995}, and thus there has been intense focus on developing approximation algorithms and heuristics for these problems, such as those based on greedy methods \citep{Llo1982, DK2011} or convex relaxations \citep{DE2003, Fuc2004, CEMN2022}. 

In this work, we study algorithms for sparse dictionary learning and $k$-means clustering in two distinct settings via a unified set of techniques based on \emph{sketching}. Sketching \citep{woodruff2014sketching}, broadly speaking, refers to techniques for compressing large matrices by linear maps, and includes methods such as oblivious sketching and nonuniform sampling. Classically, sketching has been applied to design low-memory algorithms in the \emph{streaming setting}, when the input is presented to the algorithm as a sequence of updates. More recently, sketching has been shown to be invaluable for designing fast algorithms as well. In particular, there has been a line of work which shows how sketching techniques can be applied to obtain \emph{polynomial time approximation schemes} (PTAS) for a variety of NP-hard problems ranging from clustering \citep{feldman2007ptas} to weighted low rank approximation \citep{RSW2016} to tensor decompositions \citep{SWZ2019}. We study such sketching-based algorithms for sparse dictionary learning and Euclidean $k$-means clustering, both in the offline setting where we obtain the first PTAS for sparse dictionary learning, as well as in the turnstile streaming and other streaming models. In particular, in the streaming setting, we initiate the study of solving these problems in the setting where the algorithm must output the assignment of the points to the dictionary/clustering, which has received surprisingly little attention in prior work.

\subsection{Our contributions}

\subsubsection{PTAS for dictionary learning and clustering}

We start with a discussion of our results on designing fast PTAS's. Our main contribution that we highlight from this section is the \emph{first PTAS for sparse dictionary learning}, which also gives a new and simple approach towards designing a PTAS for $k$-means clustering.

A typical approach for designing PTAS's for shape fitting problems such as dictionary learning and clustering is to first find a smaller instance whose solution approximates the original instance, and then to solve the smaller instance using any algorithm, where even an inefficient algorithm will be tractable due to the smaller size of the instance. A representative work which takes such an approach for the $k$-means clustering problem is that of \citet{feldman2007ptas}, which uses \emph{coresets} to implement the first step of finding a smaller instance. Here, coresets for $k$-means clustering are a weighted subset of the original data points such that the cost of any candidate set of centers approximates the cost when applied to the original dataset. Furthermore, the size of this coreset can be taken to be $\poly(k/\epsilon)$, and thus solving for an optimal set of centers on this subset of points can be done in time independent of the number of points $n$. Due to this natural approach, there has been a long line of work on obtaining smaller coresets for $k$-means clustering \citep{FL2011, BFL2016, bachem2018one, CSS2021, CLSS2022, CLSSS2022}. 

On the other hand, for the sparse dictionary learning problem, similar results are strikingly lacking. The only previous work we are aware of is a coreset construction for the sparse dictionary learning problem due to \citet{feldman2013learning}. However, the construction of the coreset in this work requires an algorithm for computing an approximately optimal dictionary, which prevents its use in designing fast PTAS's to solve the dictionary learning problem in the first place. To address this problem, we first show that a completely different coreset technique due to \cite{tukan2022new} for the projective clustering problem can in fact be applied to the sparse dictionary learning problem. Notably, this technique uses John ellipsoids to construct coresets rather than using a nearly optimal solution to the dictionary learning problem, and thus avoids computing approximately optimal dictionaries. In turn, this allows us to obtain the first PTAS for the dictionary learning problem. Our argument additionally combines this coreset construction with a sparsity-counting technique together with polynomial system solvers \cite{renegar1992computational_decision, renegar1992computational_optimization} to efficiently solve a smaller version of the original problem. Our techniques also yield a new PTAS for $k$-means clustering, which is arguably simpler than prior approaches such as the algorithm of \cite{feldman2007ptas}. We give a full discussion of our results and techniques for our PTAS for sparse dictionary learning in Section \ref{sec:ptas}.

\subsubsection{Dictionary learning and clustering on streams}

As our next contribution, we study algorithms for dictionary learning and clustering in turnstile streams and other related models of streaming. In the turnstile streaming model, the input undergoes arbitrary entrywise insertions and deletions:

\begin{definition}[Turnstile stream]
\label{def:turnstile-stream}
We say that an input matrix $\Ab\in\mathbb R^{n\times d}$ is presented in a \emph{turnstile stream} if $\Ab$ is initialized to $0$ and receives entrywise updates $\Ab_{i,j}\gets \Ab_{i,j} + \Delta$ for $\Delta\in\mathbb R$.
\end{definition}

We initiate a systematic study of the dictionary learning and clustering problems in the setting where the assignment of the points to their sparse set of dictionary elements or clusters must be output together with the dictionary/cluster centers. Indeed, even for the popular Euclidean $k$-means clustering problem, almost all prior work that we are aware of only focus on outputting either only the cluster partitions, or the centers, but do not study the problem of recovering both. We address this problem by providing a dimensionality reduction technique that applies to $k$-means, sparse dictionary learning, and more generally to any problem of the form $\min_{\Xb\in\Xcal,\Db\in\R^{k\times d}}\|\Xb\Db - \Ab\|_F^2$.

A typical approach for designing low-space streaming algorithms for clustering is to apply the standard Johnson--Lindenstraus lemma \citep{JL1984, BZD2010, cohen2015dimensionality, becchetti2019oblivious, makarychev2019performance}. This result states that if $\Gb\in\mathbb R^{d\times s}$ is an appropriately scaled dense sub-Gaussian matrix for $s = O(\epsilon^{-2}\log(k/\epsilon))$, then for any partition of $\Ab$ into $k$ clusters, the $k$-means clustering cost of $\Ab\Gb$ approximates the $k$-means clustering cost of $\Ab$ up to a $(1\pm\epsilon)$ factor. Furthermore, $\Ab\Gb$ can be efficiently maintained in the turnstile streaming model (Definition \ref{def:turnstile-stream}) using just $ns = \tilde O(\epsilon^{-2}n)$ space, %
due to the linearity of the sketch $\Gb$. Note however that, na\"ively, we cannot retrieve the corresponding centers of a clustering found by this method, since we have only stored the $s$-dimensional sketches of the $n$ points, and additional information must be stored in order to retrieve $d$-dimensional cluster centers which achieve a $(1+\epsilon)$ approximation. In fact, we note in Theorem \ref{thm:cluster-center-lb} that there is in fact a $\tilde\Omega(dk/\epsilon)$ space lower bound if we wish to output centers $\Cb\in\mathbb R^{k\times d}$ which achieve a $(1+\epsilon)$ approximation, so the sketch $\Ab\Gb$ is \emph{provably} insufficient for outputting both a nearly optimal assignment $\Xb$ and centers $\Cb$ when $n = \tilde o(\epsilon dk)$. We give a full discussion of our approaches for sketching and streaming algorithms for $k$ means clustering and dictionary learning and how we overcome this problem in Sections \ref{sec:ptas} and \ref{sec:turnstile-streaming}.

On the other hand, a study of lower bounds for the $k$-means clustering problem in the streaming setting when the assignment of points must be output is notably lacking in prior work as well. The main challenge in this setting is in obtaining the right dependence on $n$ and $\epsilon$. Indeed, an $\Omega(n)$ lower bound is immediate, since the size of the output is at least $\Omega(n)$ when we need to output assignments of the $n$ points to its appropriate cluster (in fact, we show in Theorems \ref{thm:lb-cost} and \ref{thm:lb-center} that an $\Omega(n)$ lower bound follows even for outputting a constant factor approximation of the cost or centers). On the other hand, the previous upper bound using the Johnson--Lindenstrauss lemma to compute a nearly optimal assignment to clusters requires $\tilde O(\epsilon^{-2}n)$ bits of space. Note that there are many lower bounds that show that roughly $\epsilon^{-2}$ dimensions are required to apply the Johnson--Lindenstrauss lemma in various settings \cite{NN2014, KNW2010, LN2016, LN2017, makarychev2019performance}. However, it is not clear whether or not this implies that $\epsilon^{-2}$ bits must be stored for all $n$ points in order to cluster them to a $(1+\epsilon)$-approximately optimal clustering solution. Indeed, it may be possible that $\epsilon^{-2}$ bits are required only for much fewer than $n$ points, while the vast majority of the $n$ input points requires only $\tilde O(n)$ bits of space to assign to an approximately optimal center. 

We present two lower bounds to partially address the question of impossibility results for assigning points to clusters in turnstile streams. Our main lower bound result is the following, which establishes an $\tilde \Omega(\epsilon^{-1}n)$ lower bound to output a $(1+\epsilon)$-nearly optimal clustering. While this does not match the upper bound given by the Johnson--Lindenstrauss lemma, it shows that we cannot hope for a $\tilde O(n)$ upper bound in the turnstile streaming model in general.

\begin{restatable}[Informal restatement of Theorem \ref{thm:k-means-lb}]{theorem}{KMeansLB}
Let $k = d = \tilde O(1/\epsilon)$. Suppose a turnstile streaming algorithm outputs centers $\{\hat c^j\}_{j=1}^k\subseteq\mathbb R^d$ as well as assignments of $n$ points to the $k$ centers, which achieves a $(1+\epsilon)$-approximately optimal solution to the $k$-means clustering problem. Then, the algorithm must use at least $\tilde\Omega(n/\epsilon)$ bits of space over any constant number of passes.
\end{restatable}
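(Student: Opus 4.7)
The plan is a communication complexity reduction from the Index problem, where Alice holds $x\in\{0,1\}^N$, Bob holds $i^*\in[N]$, and Bob must output $x_{i^*}$ with constant advantage; this is known to require $\Omega(N)$ bits of one-way randomized communication. I would set $N = \tilde\Theta(n/\epsilon)$ and partition $x$ into $n$ blocks $x^{(1)},\ldots,x^{(n)}\in\{0,1\}^L$ with $L = \tilde\Theta(1/\epsilon)$, and aim to encode each block into a single point of a $k$-means instance with $k = d = \tilde\Theta(1/\epsilon)$. A constant-pass lower bound then follows by replacing Index with a multi-round hard problem such as Augmented Indexing, or by a hybrid/round-elimination argument, losing only a constant factor.

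The core step is the construction of a geometric gadget that encodes block $x^{(i)}$ into $a^i\in\R^d$. I would place $k$ candidate anchor centers near scaled standard-basis directions and put $a^i$ near one such anchor, plus a small perturbation whose sign pattern across the $d = L$ coordinates records the bits of $x^{(i)}$. The perturbation scale is chosen so that $\OPT = \Theta(1)$ and misreading any single encoded bit (that is, outputting an assignment-plus-center pair inconsistent with that coordinate's sign) adds $\Theta(\OPT/N)$ to the squared-error cost. Since a $(1+\epsilon)$-approximate solution has only $\epsilon\cdot\OPT$ total slack, it can misread at most an $O(\epsilon)$-fraction of the $N$ encoded bits. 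Alice writes her encoded instance into the turnstile stream, sends the algorithm's intermediate state to Bob, who simulates the algorithm to completion and reads $x_{i^*}$ off from the pair $(\text{assignment of } a^{\lceil i^*/L\rceil}, \text{output centers})$; for any fixed $i^*$, the recovered bit is correct with probability at least $1 - O(\epsilon) - \delta$ over the algorithm's randomness, which suffices to solve Index with constant advantage once $\epsilon$ is a sufficiently small constant.

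The main obstacle is making the gadget robust enough that each of the $\tilde\Omega(1/\epsilon)$ bits per point is individually recoverable from every $(1+\epsilon)$-approximate solution, rather than only from the exact optimum. I plan to use a product structure over the $d$ coordinates of each point so that each coordinate contributes one independent bit whose misreading incurs its own additive cost gap of $\Theta(\OPT/N)$; the $k$-means objective decomposes additively across points, so per-point contributions sum to a total separation that constrains the global slack budget. Verifying this requires showing both that the optimal centers are essentially the coordinate-wise majorities over their assigned blocks and that any cheaper-to-compute clustering is forced to match the sign pattern at nearly every coordinate of nearly every point. Combining this geometric encoding with the $\Omega(N)$ Index (or Augmented Indexing) lower bound and the standard reduction from streaming to communication yields $\tilde\Omega(n/\epsilon)$ space over any constant number of passes.
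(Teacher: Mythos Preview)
Your reduction has a genuine information-theoretic gap. In your Index reduction, Bob adds nothing to the stream; he simply reads the clustering output produced on Alice's encoded instance and extracts $x_{i^*}$ from it. But that output consists of $n$ cluster labels in $[k]$ together with $k$ centers in $\mathbb R^d$, so its total description length is at most $n\log k + \tilde O(kd) = \tilde O(n + 1/\epsilon^2)$ bits (at any reasonable precision for the centers). Since Bob's decoding procedure depends only on this output and $i^*$, solving Index for every $i^*$ would force the output to encode all $N = \tilde\Theta(n/\epsilon)$ random bits, which is impossible once $n\gg 1/\epsilon^2$. Concretely, in any gadget where $\Theta(n/k)$ points share a center, the optimal center is their centroid and its $l$th coordinate reflects only the \emph{majority} of the $l$th perturbation bits in that cluster; your reading rule $\mathrm{sign}(\hat c^{\,j}_l - \text{anchor}_l)$ therefore returns the majority bit, which coincides with $x^{(i)}_l$ only with probability about $1/2$ for random $x$. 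The claimed per-bit cost gap of $\Theta(\OPT/N)$ is already charged to $\OPT$ for every minority bit, so a $(1+\epsilon)$-approximate solution gives no advantage in recovering individual bits.

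The paper's proof avoids this obstruction by \emph{not} trying to encode $\tilde\Theta(1/\epsilon)$ bits per point in the output. Instead it uses an information-complexity argument via $t$-party set disjointness (with $t=\tilde O(1)$): each point $Z^i$ is the sum of $t$ players' bit vectors, and a near-optimal clustering must assign $Z^i$ according to the unique coordinate where all $t$ players' bits are $1$. The assignment of $Z^i$ is only $\log k$ bits of output, but \emph{computing} it forces the protocol transcript to reveal $\tilde\Omega(d)$ bits of information about the players' inputs on that point (via the BJKS conditional-information-cost bound for $\mathrm{AND}_t$ and a direct-sum argument over the $d$ coordinates). Summing over the $n$ points gives $\tilde\Omega(nd)=\tilde\Omega(n/\epsilon)$. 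Making this rigorous requires the machinery you did not mention: planting $\tilde\Theta(n/k)$ copies of each intended center so that any near-optimal solution is forced to use centers close to them, proving a cost lower bound for \emph{every} clustering of random bit vectors (via Chernoff plus a net over centers), and a separate lemma ruling out the possibility that many points are absorbed by the few ``bad'' centers that deviate from the planted ones. These ingredients are what makes the bound go through; an Index-style encoding argument cannot.
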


As a second lower bound result, we also show that the Johnson-Lindenstrauss lemma is nearly tight if we require our algorithm to give a nearly optimal assignment of the input points to a fixed set of candidate centers. That is, we show in Theorem \ref{thm:strong-lb} that there is a fixed set of centers such that, if a turnstile streaming algorithm can assign each of the $n$ input points to a cluster such that the cost is at most $(1+\epsilon)$ times the cost of the optimal assignment, then at least $\Omega(\epsilon^{-2}n)$ bits must be stored. A more detailed discussion of our lower bounds is given in Section \ref{sec:lower-bounds}.

Finally, we show that under some natural settings, one can obtain upper bounds that circumvent the lower bounds presented above. Indeed, we show that if we work in the \emph{random order row arrival} streaming model, in which the input stream corresponds to the rows of $\Ab$ that arrive in a uniformly random order, then we can obtain upper bounds that depend on the \emph{maximum sensitivity} of the input stream, %
and in particular, we obtain an upper bound using only $\tilde O(n)$ bits of space if the maximum sensitivity is sufficiently small (Theorem \ref{thm:random-order-bounded-sensitivity}). Here, a bounded sensitivity assumption states that there are no points that can take up a significant fraction of the objective function, and can also be interpreted as a way to formalize a ``well-clustered'' instance.

\section{Fixed parameter PTAS for sparse dictionary learning}
\label{sec:ptas}

\subsection{PTAS for $r$-sparse dictionary learning}

In this section, we provide an algorithm which solves the $r$-sparse dictionary learning problem (Definition \ref{def:sparse_dict}) in time polynomial in the input matrix size $(n)$ and dimension $(d)$ up to $\epsilon$-relative error, for fixed $k$ and $\epsilon$. Additionally, we show that a similar approach can be used to provide an algorithm for $k$-means (Definition \ref{def:k-means}) that matches the current best dependency on $n,d,\epsilon$ and $k$ up to lower terms. First, we introduce a dimensionality reduction method that applies to both problems. 

\subsection{Dimensionality reduction}\label{sec:dim_reduction}

Our first step is to reduce the dimensionality of the given problem.  Since the only difference between $k$-means and sparse dictionary learning is the constraint on the left factor, $\Xb$, we can use the same sketching approach to reduce both problems.  Consider the following general definition:

\textbf{General problem}:
Let $\Xcal \subset \R^{n \times k}$ and $\Ab \in \R^{n \times d}$. Let $k \ll n,d$. Define the optimal solution as:
\begin{gather}\label{eq:general_frobenius}
    (\Xb^*, \Db^*)  = \argmin_{\Xb \in \Xcal, \Db \in \R^{k \times d}} \|\Xb\Db - \Ab\|_F^2
\end{gather}

The following theorem states that one may efficiently reduce the dimensionality of $\Ab$ in sparse dictionary learning or $k$-means. We briefly sketch the ideas behind the reduction. Intuitively, the regression guarantee of Theorem 3.1 in \cite{clarkson2009numerical} states that if $\Sb$ is a rank $k \ll d$ $\ell_2$-embedding matrix, then $\Dbtil = \argmin_{\Db \in \R^{k \times d}}\|\Sb(\Xb^*\Db - \Ab)\|_F^2$ will be a good approximation to the optimal solution of the original problem. While we do not know $\Xb^*$, this guarantee implies that there is an approximately optimal dictionary, $\Dbtil$, 
in the row %
space of $\Sb\Ab$. We can then restrict the optimization problem to consider only dictionaries in this lower dimensional space. Therefore, we only need to consider the error residual in this lower dimensional space, so we may reduce the dimension of the problem by applying an affine-embedding matrix $\Tb$ 
and then applying SVD to find the dominant singular subspace of $\Sb\Ab\Tb$. Finally, we project the rows of $\Ab$ to this dominant subspace. We can then solve the lower dimensional problem and map the solution to the original space.
\begin{theorem}\label{thm:dim_reduction}
    There is an algorithm which solves the problem in \eqref{eq:general_frobenius} up to $\epsilon \in (0,1)$ relative error with constant probability in $\Ocal(\nnz{\Ab} + (n + d)\poly(k/\epsilon))$ time plus the time needed to solve:
    \begin{gather*}
        \min_{\Xb \in \Xcal, \Db \in \R^{k \times s}} \|\Xb\Db - \Ab'\|_F^2,
    \end{gather*}
    to within $\epsilon$-relative error for $s = \Ocal(k\log(k)/\epsilon )$ and some $\Ab' \in \R^{n \times s}$ with constant probability.
\end{theorem}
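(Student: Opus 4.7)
My plan is a two-stage sketching argument that preserves the Frobenius cost uniformly over all $\Xb \in \Xcal$ while compressing the column dimension from $d$ down to $s = \Ocal(k \log k / \epsilon)$.

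\textbf{Step 1: left sketch restricts $\Db$.} Apply an oblivious $\ell_2$-subspace embedding $\Sb \in \R^{m \times n}$ with $m = \poly(k/\epsilon)$ rows (e.g., a sparse embedding matrix), so that $\Sb\Ab$ costs $\Ocal(\nnz{\Ab})$ time. By Theorem 3.1 of \cite{clarkson2009numerical}, applied uniformly over all rank-$k$ column spaces via the obliviousness of $\Sb$, for every $\Xb \in \R^{n \times k}$ the sketched regression minimizer $\tilde{\Db}(\Xb) := (\Sb\Xb)^\dagger (\Sb\Ab)$ satisfies $\|\Xb \tilde{\Db}(\Xb) - \Ab\|_F^2 \leq (1+\epsilon)\min_{\Db} \|\Xb\Db - \Ab\|_F^2$. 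Since $\tilde{\Db}(\Xb)$ lies in the row span of $\Sb\Ab$, I may restrict attention to dictionaries of the form $\Db = \Yb(\Sb\Ab)$ with $\Yb \in \R^{k \times m}$, thereby reducing the effective column count of $\Db$ from $d$ to $m$.

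\textbf{Step 2: right affine embedding + SVD to reach $s$ columns.} The residual $\|\Xb\Yb(\Sb\Ab) - \Ab\|_F^2$ still involves the full $n \times d$ matrix $\Ab$. Apply an oblivious affine embedding $\Tb \in \R^{d \times t}$ with $t = \poly(k/\epsilon)$ columns so that $\Sb\Ab\Tb \in \R^{m \times t}$ can be computed and its SVD taken in $\poly(k/\epsilon)$ time. The top-$s$ singular subspace of $\Sb\Ab\Tb$, with $s = \Ocal(k \log k / \epsilon)$, pulls back through $\Tb$ and the row span of $\Sb\Ab$ to an $s$-dimensional subspace $V \subseteq \R^d$ which approximately contains a near-optimal dictionary. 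Let $\Vb \in \R^{d \times s}$ be an orthonormal basis for $V$. For dictionaries with rows in $V$ we may write $\Db = \Db'\Vb^\ts$ with $\Db' \in \R^{k \times s}$, and the Pythagorean decomposition gives
\[
\|\Xb\Db'\Vb^\ts - \Ab\|_F^2 \;=\; \|\Xb\Db' - \Ab\Vb\|_F^2 \;+\; \|\Ab(\Ib - \Vb\Vb^\ts)\|_F^2,
\]
in which the second term is an additive constant independent of $(\Xb, \Db')$.

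\textbf{Step 3: recovery and runtime.} Setting $\Ab' := \Ab\Vb \in \R^{n \times s}$, the reduced instance $\min_{\Xb \in \Xcal, \Db' \in \R^{k \times s}} \|\Xb\Db' - \Ab'\|_F^2$ has exactly the form promised by the theorem, and any $(1+\epsilon)$-approximate solution $(\tilde{\Xb}, \tilde{\Db}')$ lifts to $(\tilde{\Xb}, \tilde{\Db}'\Vb^\ts)$ as a $(1+\Ocal(\epsilon))$-approximate solution to the original problem (rescale $\epsilon$ by a constant). For runtime, computing $\Sb\Ab$ costs $\Ocal(\nnz{\Ab})$, computing $\Sb\Ab\Tb$ and its SVD costs $\poly(k/\epsilon)$, and $\Ab'$ is formed in $(n+d)\poly(k/\epsilon)$ time by exploiting that $\Vb$ is supported in the $m$-dimensional row span of $\Sb\Ab$, so that the multiplication $\Ab\Vb$ can be routed through the sketched space rather than computed naively as an $n \times d$ product.

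\textbf{Main obstacle.} The delicate point is justifying that $s = \Ocal(k \log k / \epsilon)$ suffices rather than the larger $\poly(k/\epsilon)$ one would obtain by merely combining $\Sb$ and $\Tb$. One must argue that the top-$s$ singular subspace of the small sketched matrix $\Sb\Ab\Tb$, when pulled back to $\R^d$, captures enough of the rank-$k$ action of $\Ab$ simultaneously for every $\Xb \in \Xcal$. This ties together three guarantees at once: the subspace embedding property of $\Sb$, the affine embedding property of $\Tb$, and a $(1+\epsilon)$-approximate low-rank preservation argument for the SVD truncation of the small sketched matrix, with errors composed multiplicatively without blowing up the $(1+\epsilon)$ factor.
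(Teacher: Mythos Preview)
Your overall architecture—a left sketch $\Sb$ to confine the dictionary to the row span of $\Sb\Ab$, a right affine embedding, then an SVD to reach $s$ columns—is exactly the paper's, but the ``main obstacle'' you flag is self-inflicted and is where your argument has a real gap. You take $\Sb$ with $m=\poly(k/\epsilon)$ rows and then need to argue that truncating to the top-$s$ singular subspace of $\Sb\Ab\Tb$ still contains a near-optimal dictionary uniformly over $\Xb\in\Xcal$; you do not prove this, and it is not clear how one would. The paper sidesteps the issue entirely by taking $\Sb$ with only $s=\Ocal(k\log k/\epsilon)$ rows from the start, realized as a small Gaussian composed with a CountSketch (their Lemma~\ref{thm:regression_approx}, which packages the Clarkson--Woodruff regression bound with a fast-to-apply sketch). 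Then $\Sb\Ab$ already has rank at most $s$, so $\Sb\Ab\Tb_1$ has rank at most $s$ and the SVD step merely extracts an orthonormal basis $\Tb_2$ for its \emph{full} column space—no truncation, no loss, nothing to prove. A related correction: your Step~1 claim that the regression guarantee holds ``for every $\Xb\in\R^{n\times k}$'' by obliviousness is false (obliviousness means $\Sb$ is independent of the input, not that the guarantee holds uniformly over inputs). The paper applies the regression guarantee only to the single fixed optimizer $\Xb^*$, which already places a near-optimal dictionary in $\mathrm{rowspan}(\Sb\Ab)$; uniformity over $\Xb$ enters later, through the affine-embedding property of $\Tb_1$ with respect to the \emph{fixed} rank-$s$ matrix $\Sb\Ab$, which genuinely holds for all left factors $\Mb=\Xb\Rb$ simultaneously.

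There is also a mismatch in how the reduced instance is formed and lifted back, which affects both correctness and runtime. The paper sets $\Ab'=\Ab\Tb_1\Tb_2$ (not $\Ab\Vb$ for an orthonormal $\Vb$ in $\mathrm{rowspan}(\Sb\Ab)$), so that $\Ab\Tb_1$ costs $\Ocal(\nnz{\Ab})$ via the CountSketch structure and the remaining product $(\Ab\Tb_1)\Tb_2$ costs $n\cdot\poly(k/\epsilon)$; by contrast, your proposed ``routing through the sketched space'' with $\Vb=(\Sb\Ab)^\top\Wb$ forces the product $\Ab(\Sb\Ab)^\top$, which is $\Ocal(\nnz{\Ab}\cdot m)$ rather than $\Ocal(\nnz{\Ab})$. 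The lift from the reduced solution is $\Db=\Db'(\Sb\Ab\Tb)^\dagger\Sb\Ab$, not $\Db'\Vb^\top$: this is what ties the $s$-dimensional solution back to the original $d$-dimensional problem through the affine-embedding guarantee on $\Tb_1$, rather than through a Pythagorean split in $\R^d$ whose premise (that a near-optimal dictionary lies in $V$) you have not established.
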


In the rest of this section, we assume that $d = \poly(k/\epsilon)$ for clearer exposition, since the above theorem implies we can reduce to this case efficiently.

\subsection{Algorithm for sparse dictionary learning}

The first component of our algorithm for sparse dictionary learning is a coreset construction that reduces the size of the problem from $n$ to a size that is logarithmic in $n$. We achieve this by first leveraging an existing coreset construction for projective clustering by \cite{tukan2022new}. In the $(\ell, m)$-projective clustering problem, the goal is to find a set of $\ell$ $m$-dimensional subspaces that minimizes the sum of the squared Euclidean distances of the input vectors $\{a^i\}_{i=1}^n$ to the closest subspace. Observe that, in the $r$-sparse dictionary problem, the minimum cost of a dictionary is the sum of the squared Euclidean distances of the input vectors to the ${k \choose r}$ subspaces spanned by any subset of $r$ vectors of the $k$ vectors in the dictionary.  Hence, a coreset which preserves the projective clustering cost when $\ell = {k \choose r}$ will also preserve the cost of a dictionary in sparse dictionary learning.

After applying the coreset, we have reduced the size of the sparse dictionary problem to be at most logarithmic in $n$. This allows us to guess the sparsity pattern of the optimal left factor $\Xb^*$, since at most $r$ entries in each row of $\Xb^*$ may be nonzero.  For each guess of the sparsity pattern of $\Xb^*$, we can find an approximately optimal solution under this constraint by recognizing this as a polynomial optimization problem. We apply the decision algorithm of \cite{renegar1992computational_decision} using binary search to determine each entry of $\Db$ and the nonzero entries of $\Xb$ as done in \cite{RSW2016}. At some point we guess the sparsity pattern of $\Xb^*$, and hence attain an $\epsilon$-relative error solution to the sparse dictionary problem. The next theorem formally states the assumptions and guarantees of our algorithm, which is formalized in Algorithm \ref{alg:ptas_rsparse_dict} in the appendix.

\begin{theorem}\label{thm:sparse_dict_ptas}
    For an input for the $r$-sparse dictionary learning problem (Definition \ref{def:sparse_dict}) with error tolerance $\epsilon \in (0,1)$ such that the entries of $\Ab$ have bounded bit complexity, Algorithm \ref{alg:ptas_rsparse_dict} returns $\Xbtil \in \Xcal$ and $\Dbtil \in \R^{k \times d}$ satisfying:
    \begin{gather*}
        \|\Xbtil\Dbtil - \Ab\|_F \leq (1+\epsilon) \min_{\Xb\in\Xcal, \Db\in\R^{k \times d}} \|\Xb\Db - \Ab\|_F,
    \end{gather*}
    in $\poly(n)$ time with constant probability, when $k$, $r$, and $1/\epsilon$ are bounded by a constant.\footnote{If $k$ and $r$ are not assumed to be constant, then the time complexity is $\exp((8k^{3r})^{O(k^{2r+1})}\log n)$.}
\end{theorem}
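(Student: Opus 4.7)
The plan is to chain together three ingredients in sequence: dimensionality reduction, a coreset for sparse dictionary learning, and a polynomial system solver applied to every possible sparsity pattern. First, I would invoke Theorem \ref{thm:dim_reduction} directly to reduce to the case $d = \poly(k/\epsilon)$, at a cost of $\Ocal(\nnz{\Ab} + (n+d)\poly(k/\epsilon))$ preprocessing time and a $(1+\epsilon)$ multiplicative distortion. After this step, the remaining task is to solve an $n \times s$ instance with $s = \Ocal(k\log k/\epsilon)$ up to $(1+\epsilon)$ relative error.

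Second, I would reduce $n$ to $\polylog(n)$ many weighted points by building a coreset. The key observation, as the excerpt notes, is that for any fixed dictionary $\Db \in \R^{k \times s}$ and any row $a^i$, the best $r$-sparse reconstruction of $a^i$ is exactly the Euclidean projection of $a^i$ onto the closest of the $\binom{k}{r}$ subspaces spanned by $r$-subsets of rows of $\Db$. Hence the $r$-sparse dictionary learning cost coincides with the $(\binom{k}{r}, r)$-projective clustering cost, and I can apply the coreset construction of \cite{tukan2022new} with $\ell = \binom{k}{r}$ and $m = r$ off the shelf. This yields a weighted subset of at most $N = \poly(k, r, 1/\epsilon)\cdot \polylog(n)$ points whose sparse dictionary learning cost approximates that of the original instance to a $(1+\epsilon)$ factor, for any candidate dictionary. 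Crucially, this coreset is built from John ellipsoids, so it does not require a near-optimal dictionary as input, which is what makes it usable inside a PTAS.

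Third, on the coreset of size $N$ I would enumerate all $\binom{k}{r}^N$ possible row-sparsity patterns of $\Xb$. Fixing any such pattern reduces the problem to minimizing $\|\Xb\Db - \Abar\|_F^2$ where the nonzero locations of $\Xb$ are prescribed; this is a polynomial minimization problem in $Nr + ks$ real variables of constant degree, whose coefficients inherit the bit complexity of the coreset weights and the rescaled rows. I would then do binary search on the target cost and invoke the decision procedure of \cite{renegar1992computational_decision, renegar1992computational_optimization} to test feasibility of each candidate cost, following exactly the template used in \cite{RSW2016}. The true sparsity pattern of $\Xb^*$ (restricted to the coreset) is one of the enumerated patterns, so the best candidate across all patterns achieves $(1+\epsilon)$ relative error on the coreset, and by the coreset guarantee it achieves $(1+\Ocal(\epsilon))$ relative error on the original dimension-reduced instance, and by Theorem \ref{thm:dim_reduction} on the original instance; rescaling $\epsilon$ by a constant closes the loop.

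The main obstacles are bookkeeping rather than conceptual. The delicate piece is controlling the running time: with $k$, $r$, $1/\epsilon$ constant, the number of sparsity patterns is $\binom{k}{r}^N = \exp(\polylog(n))$, and each Renegar call runs in time polynomial in the number of variables, the total degree, and the bit complexity, raised to a power depending on the number of quantifier alternations. I would have to check that the coreset weights, the reduced matrix $\Ab'$ from Theorem \ref{thm:dim_reduction}, and the binary search precision all have bit complexity bounded by $\poly(n)$, so that the overall runtime stays $\poly(n)$ when $k$, $r$, $1/\epsilon$ are constant and grows to $\exp((8k^{3r})^{O(k^{2r+1})}\log n)$ in the general case quoted in the footnote. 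The only other subtlety is that the coreset of \cite{tukan2022new} is stated for projective clustering with subspaces rather than affine flats; I would explicitly verify that it preserves costs uniformly over all candidate dictionaries, not just the optimum, since my enumeration queries many non-optimal dictionaries.
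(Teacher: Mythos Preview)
Your proposal is correct and follows essentially the same approach as the paper: reduce $d$ via Theorem \ref{thm:dim_reduction}, reduce $n$ via the projective-clustering coreset of \cite{tukan2022new} with $\ell=\binom{k}{r}$ and $m=r$, then enumerate all $\binom{k}{r}^N$ sparsity patterns on the coreset and solve each fixed-pattern instance as a constant-degree polynomial system via binary search plus Renegar's decision procedure in the style of \cite{RSW2016}. The only sharpening worth noting is that the coreset size is $O(\log n)$ (not merely $\polylog(n)$) once $k,r,1/\epsilon,b$ are fixed, which is precisely what makes the pattern enumeration $\binom{k}{r}^{O(\log n)}=\poly(n)$ rather than quasi-polynomial.
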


\subsection{Algorithms for $k$-means}\label{sec:ptas_kmeans}
The same general approach of applying dimensionality reduction and a coreset construction along with guessing the sparsity pattern of $\Xb^*$ can be used to achieve a fixed-parameter PTAS for $k$-means as well.  However, we can achieve an improved time complexity matching the current best dependency on $k$ and $\epsilon$ up to lower order terms by further reducing the problem using results on leverage score sampling. Specifically, we combine Theorem 17 in \cite{woodruff2014sketching} and Theorem 3.1 in \cite{clarkson2009numerical} to prove the following lemma.
\begin{lemma}\label{lemma:leverage_regression}
    There is a set of matrices $\Scal \subset \R^{s \times n}$ with exactly one non-zero entry per column such that for any $\Ab \in \R^{n \times k}$ and $\Bb\in\R^{n \times d}$, there exists $\Sb \in \Scal$, so that if:
    \begin{gather*}
        \Xbtil = \argmin_{\Xb \in \R^{k \times d}} \|\Sb(\Ab\Xb - \Bb)\|_F
        ~~\text{and}~~
        \Xb^* = \argmin_{\Xb \in \R^{k \times d}} \|\Ab\Xb - \Bb\|_F,
    \end{gather*}
    then, 
    \begin{gather*}
        \|\Ab\Xbtil - \Bb\|_F \leq (1+\epsilon)\|\Ab\Xb^* - \Bb\|_F.
    \end{gather*}
    Furthermore, $\Scal$ depends only on $n$, $k$, and $\epsilon$; and $|\Scal| = n^{\Ocal(\frac{k \log k}{\epsilon})}$.
\end{lemma}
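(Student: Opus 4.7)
My plan is to combine the leverage-score sampling subspace embedding with the regression reduction, and then derandomize over the random choices of sampled rows. The cited Theorem 17 of \cite{woodruff2014sketching} produces a randomized sampling matrix $\Sb\in\R^{s\times n}$ with $s=\Ocal(k\log k/\epsilon)$ whose rows are of the form $(1/\sqrt{s p_{i_j}})\,e_{i_j}^\ts$, where the indices $i_j\in[n]$ are drawn i.i.d.\ according to probabilities $p_i$ that upper bound the row leverage scores of $\Ab$. With constant probability, this $\Sb$ is simultaneously (a) an $(1\pm O(\sqrt\epsilon))$-subspace embedding for the column span of $\Ab$ and (b) satisfies an approximate matrix multiplication guarantee against $\Ab$ and any fixed matrix. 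Feeding these two properties into Theorem 3.1 of \cite{clarkson2009numerical} then yields exactly the desired $(1+\epsilon)$-regression bound on the sketched solution $\Xbtil$, establishing the approximation claim whenever a ``good'' $\Sb$ is available.

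What remains is to package this randomized construction as a fixed set $\Scal$ depending only on $n,k,\epsilon$. Each random $\Sb$ is specified by an ordered tuple of $s$ row indices in $[n]$, together with the $s$ scaling factors $1/\sqrt{s p_{i_j}}$. The combinatorial choice of the support contributes at most $n^s$ possibilities. I would then discretize the scaling factors to a $\poly(n,k,1/\epsilon)$-sized grid: since each leverage score lies in $[0,1]$, and rounding $p_i$ by an inverse-polynomial multiplicative factor perturbs the sketch guarantees by only an $\Ocal(\epsilon)$ amount (by a standard stability argument for diagonal rescalings of $\Sb^\ts\Sb$), this does not affect the conclusion. Setting $\Scal$ to be the collection of all $s\times n$ matrices whose rows (or, after transposition, columns) are standard basis vectors scaled by one of the discretized weights yields $|\Scal|=n^{\Ocal(k\log k/\epsilon)}$, and by the probabilistic method some $\Sb\in\Scal$ achieves the $(1+\epsilon)$ guarantee for any given $\Ab,\Bb$.

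The main obstacle is the weight-discretization step: we must verify that rounding each of the $s$ sampling weights to a coarse grid does not cumulatively destroy either the subspace-embedding distortion or the approximate matrix multiplication error, while keeping the grid small enough that $|\Scal|$ retains the claimed exponent $\Ocal(k\log k/\epsilon)$ rather than something like $\Ocal((k\log k/\epsilon)\log(n/\epsilon))$. I expect this follows from a routine multiplicative perturbation argument on $\Sb^\ts\Sb$, but the bookkeeping needs to be tight to avoid inflating the exponent. Once that step is in place, the regression guarantee is inherited verbatim from the combination of Theorem 17 of \cite{woodruff2014sketching} and Theorem 3.1 of \cite{clarkson2009numerical}, completing the proof.
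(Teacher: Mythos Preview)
Your approach is essentially the paper's: leverage-score sampling (Theorem 17 of \cite{woodruff2014sketching}) combined with the regression reduction of Theorem 3.1 in \cite{clarkson2009numerical}, followed by discretization and the probabilistic method. The one substantive difference is where the discretization happens, and this is exactly what resolves the obstacle you flag.

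You propose to sample according to the true leverage-score probabilities $p_i$ and then round the resulting scaling factors $1/\sqrt{sp_{i_j}}$ to a grid, which forces you into a perturbation argument for $\Sb^\ts\Sb$. The paper instead rounds the \emph{sampling probabilities} before sampling: set $q_i = 2^{-(t-1)}$ whenever $p_i\in[2^{-t},2^{-(t-1)})$ for $t\le\log n$, and $q_i=2/n$ otherwise. This $q$ still satisfies $q_i\ge\beta p_i$ for a constant $\beta$, so Theorem 17 applies \emph{directly} to the distribution $q$ with $s=\Ocal(k\log k/\epsilon^2)$, yielding an $\epsilon$-subspace embedding with constant probability and no stability analysis needed. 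Because each $q_i$ now takes one of only $O(\log n)$ values, the set $\Scal$ of all sampling-and-rescaling matrices with these discrete weights has size $n^{s}\cdot(\log n)^{s}=n^{\Ocal(k\log k/\epsilon^2)}$; substituting $\epsilon'=\sqrt{\epsilon}$ and invoking Theorem 3.1 of \cite{clarkson2009numerical} gives the stated $(1+\epsilon)$ regression guarantee with $|\Scal|=n^{\Ocal(k\log k/\epsilon)}$. So the bookkeeping you were worried about is unnecessary once you discretize at the level of probabilities rather than weights.
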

After applying a coreset construction to reduce the $k$-means problem to size $\poly(k/\epsilon)$, we can efficiently apply the above lemma to then reduce the problem to size $\Ocaltil(k/\epsilon)$.  Then, we brute force over all possible left-factors to find $\Xb^*$.  The following theorem states our results formally.
\begin{theorem}\label{thm:kmeans_result}
    For any input $\Ab \in \R^{n \times d}$ and $\epsilon \in (0, 1)$, Algorithm \ref{alg:ptas_kmeans} will return a feasible solution to the $k$-means clustering problem (Definition \ref{def:k-means}), $(\Xbtil, \Dbtil),$ satisfying:
    \begin{gather*}
        \|\Xbtil\Dbtil - \Ab\|_F \leq (1+\epsilon)\cdot \min_{\Db\in\R^{k\times d}, \Xb\in\Xcal} \|\Xb\Db - \Ab\|_F,
    \end{gather*}
    with constant probability. Furthermore, Algorithm \ref{alg:ptas_kmeans} runs in $n\cdot\poly(k/\epsilon) + \exp(\frac{k}{\epsilon} \polylog(k/\epsilon))$ time.
\end{theorem}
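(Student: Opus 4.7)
The plan is to compose three reductions—dimensionality reduction, a $k$-means coreset, and row sampling via $\Scal$—to shrink the instance to roughly $\tilde \Ocal(k/\epsilon)$ in both dimensions, and then solve the reduced problem by brute-force enumeration, finally lifting the answer back by greedy reassignment of the original rows.

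First, I would apply Theorem~\ref{thm:dim_reduction} to $\Ab \in \R^{n \times d}$, producing an instance $\Ab' \in \R^{n \times s}$ with $s = \Ocal(k\log(k)/\epsilon)$ in $\Ocal(\nnz{\Ab} + (n+d)\poly(k/\epsilon))$ time and at a $(1+\epsilon)$-factor cost. Second, I would apply any of the strong Euclidean $k$-means coresets cited in Section~\ref{sec:ptas} to the rows of $\Ab'$, yielding a weighted instance $\Ab'' \in \R^{n_1 \times s}$ with $n_1 = \poly(k/\epsilon)$ that preserves $\cost(\Xb,\Db)$ for every candidate $\Db$ up to $(1+\epsilon)$. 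A naive brute-force over $k^{n_1}$ assignments would only yield $\exp(\poly(k/\epsilon))$ time, so a further reduction is needed to match the claimed bound.

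To shave the brute-force down to $\exp((k/\epsilon)\polylog(k/\epsilon))$, I would invoke Lemma~\ref{lemma:leverage_regression} with parameters $(n_1,k,\epsilon)$, obtaining a data-independent family $\Scal$ of sampling matrices with $|\Scal| = n_1^{\Ocal(k\log(k)/\epsilon)} = \exp(\Ocal((k/\epsilon)\polylog(k/\epsilon)))$. For every $\Sb \in \Scal$ I would enumerate all $k^s = \exp(\Ocal((k/\epsilon)\polylog(k/\epsilon)))$ choices of cluster labels for the $s$ rows that the sketch effectively retains; each such guess turns the induced sketched problem $\min_{\Db}\|\Sb\Xb_{\mathrm{samp}}\Db - \Sb\Ab''\|_F^2$ into a plain least-squares problem, solvable in $\poly(k/\epsilon)$ time via the normal equations, and producing a candidate center matrix $\Db$. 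Each candidate $\Db$ would then be scored by greedily assigning every one of the $n$ rows of $\Ab$ to its nearest row of $\Db$; the algorithm returns the best candidate.

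Correctness would follow by instantiating Lemma~\ref{lemma:leverage_regression} with ``$\Ab$'' taken to be the optimal left factor $\Xb_1^*$ of the reduced instance and ``$\Bb$'' taken to be $\Ab''$: there must exist $\Sb^* \in \Scal$ for which the sketched regression produces $\Dbtil$ with $\|\Xb_1^*\Dbtil - \Ab''\|_F \le (1+\epsilon)\|\Xb_1^*\Db_1^* - \Ab''\|_F$. Since the inner loop exhausts all $k^s$ label patterns on the sampled rows, it will at some point coincide with the restriction of $\Xb_1^*$, so $\Dbtil$ is actually computed; greedy reassignment of rows only lowers cost, and the dimension-reduction plus coreset guarantees compose a constant number of $(1+\epsilon)$-factors, absorbed by rescaling $\epsilon$. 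Summing the $\Ocal(\nnz{\Ab} + (n+d)\poly(k/\epsilon))$ preprocessing, the $|\Scal|\cdot k^s\cdot\poly(k/\epsilon)$ enumeration, and the final $n\cdot\poly(k/\epsilon)$ assignment gives the claimed runtime. The hardest step to get right will be the interface between the sampling lemma and the combinatorial constraint $\Xb\in\Xcal$: one must verify that fixing the $s$ sampled-row labels really does reduce $\Sb\Xb\Db$ to a linear function of $\Db$, and that the $(1+\epsilon)$ guarantee of Lemma~\ref{lemma:leverage_regression}—which is stated for unconstrained $\Db$—still yields a $(1+\epsilon)$ bound for $(\Xb_1^*,\Dbtil)$ rather than for a jointly re-optimized pair.
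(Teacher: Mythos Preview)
Your plan is essentially the paper's own argument: dimension reduction via Theorem~\ref{thm:dim_reduction}, then the Bachem et al.\ coreset to shrink the number of rows to $\poly(k/\epsilon)$, then brute-force over $\Scal$ and over the $k^{\tilde\Ocal(k/\epsilon)}$ cluster labels on the sampled rows, solving a least-squares problem for each guess. The correctness sketch (instantiating Lemma~\ref{lemma:leverage_regression} with $\Xb_1^*$ as the design and observing that some iteration realizes $\Sb\Xb_1^*$) is exactly what the paper does.

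There is, however, one genuine slip that breaks the claimed running time. You write that each candidate $\Db$ is ``scored by greedily assigning every one of the $n$ rows of $\Ab$ to its nearest row of $\Db$,'' yet you account for the enumeration as $|\Scal|\cdot k^s\cdot\poly(k/\epsilon)$. Those two statements are inconsistent: if every candidate is scored against all $n$ original rows, the inner loop costs $n\cdot\poly(k/\epsilon)$ per candidate, and the total becomes $n\cdot\exp((k/\epsilon)\polylog(k/\epsilon))$ rather than the additive $n\cdot\poly(k/\epsilon) + \exp((k/\epsilon)\polylog(k/\epsilon))$ you claim. (There is also a dimension mismatch: the $\Db$ produced inside the loop lives in $\R^{k\times s}$, not $\R^{k\times d}$, so it cannot be compared to $\Ab$ directly without first lifting it back.) The fix, which is what the paper's Algorithm~\ref{alg:ptas_kmeans} actually does, is to score each candidate $\Db'$ against the \emph{coreset} $\Wb\Ab$, which has only $\poly(k/\epsilon)$ rows; the coreset guarantee ensures this proxy cost is within $(1\pm\epsilon)$ of the true cost for every fixed $\Db'$. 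Only after the best $\Dbtil$ has been identified do you perform a single greedy assignment of the $n$ original rows, contributing the separate $n\cdot\poly(k/\epsilon)$ term.
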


\section{Turnstile streaming algorithms}
\label{sec:turnstile-streaming}

In this section, we consider the the \emph{turnstile streaming model} (see Definition \ref{def:turnstile-stream}). We provide upper bounds on the space needed to compute an $\epsilon$-relative error solution to the $k$-means problem and a restricted form of the sparse dictionary learning problem in a turnstile stream. We do this by showing that these approximately optimal solutions can be computed from a few small linear sketches of the original data matrix, and any linear sketch can be trivially maintained in a turnstile stream by linearity of the updates. A key idea behind these algorithms is applying the \emph{guess-the-sketch} approach introduced in \cite{RSW2016} along with the following theorem.
\begin{theorem}\label{thm:regression_approx_dense}
    (Theorem 3.1 in \cite{clarkson2009numerical}) Given $\delta,\epsilon > 0$, suppose $\Ab$ and $\Bb$ are matrices with $n$ rows, and
    $\Ab$ has rank at most $k$. There is an $m = O(k \log(1/\delta)/\epsilon)$ such that, if $\Sb$ is an $m \times n$ sign matrix, then with probability at least $1 - \delta$, if $
        \Xbtil = \argmin_{\Xb}\|\Sb(\Ab\Xb - \Bb)\|_F^2$
        and
        $\Xb^* = \argmin_{\Xb}\|\Ab\Xb - \Bb\|_F^2,$ then
        $\|\Ab\Xbtil - \Bb\|_F \leq (1+\epsilon)\|\Ab\Xb^* - \Bb\|_F$.
\end{theorem}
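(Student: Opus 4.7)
The plan is to follow the standard subspace-embedding plus approximate-matrix-product framework pioneered by Sarl{\'o}s and refined by Clarkson--Woodruff. First I would reduce to the case where $\Ab$ has orthonormal columns: since $\Ab$ has rank at most $k$, write a thin SVD $\Ab = \Ub\Sigmab\Vb^\ts$ with $\Ub\in\R^{n\times k}$ having orthonormal columns, and observe that the regression cost and the approximation guarantee are invariant under right-multiplication by an invertible matrix, so it suffices to prove the claim with $\Ub$ in place of $\Ab$. Decompose $\Bb = \Ub\Yb^{*} + \Bb^{\perp}$ where $\Ub\Yb^{*} = \Ub\Ub^\ts\Bb$ and $\Bb^{\perp} = (\Ib - \Ub\Ub^\ts)\Bb$, so that $\Yb^{*}$ is the true regression optimum, $\Ub^\ts\Bb^{\perp} = 0$, and $\min_{\Xb}\|\Ub\Xb - \Bb\|_F^2 = \|\Bb^{\perp}\|_F^2$.

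Next, I would establish two structural properties of the sign matrix $\Sb$ with $m = O(k\log(1/\delta)/\epsilon)$ rows: (i) a constant-factor subspace embedding, namely every singular value of $\Sb\Ub$ lies in $[1/\sqrt{2},\sqrt{2}]$; and (ii) the approximate matrix product bound $\|\Ub^\ts\Sb^\ts\Sb\Bb^{\perp}\|_F^2 \leq \epsilon\|\Bb^{\perp}\|_F^2$. Property (i) follows from the Johnson--Lindenstrauss property of sign matrices applied to a $(1/4)$-net of the unit sphere in the $k$-dimensional image of $\Ub$ (of size $9^k$), combined with the standard successive approximation argument; this only needs $m = O(k\log(1/\delta))$ rows because the embedding distortion is a constant, not $\epsilon$. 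Property (ii) follows from the second-moment bound $\EE\|\Cb^\ts\Sb^\ts\Sb\Db - \Cb^\ts\Db\|_F^2 \leq (2/m)\|\Cb\|_F^2\|\Db\|_F^2$ for sign matrices applied with $\Cb = \Ub$ and $\Db = \Bb^{\perp}$: since $\|\Ub\|_F^2 = k$ and $\Ub^\ts\Bb^{\perp} = 0$, Markov's inequality gives the bound with probability $1 - \delta$ provided $m = \Omega(k/(\epsilon\delta))$, and the $\log(1/\delta)$ dependence is recovered by a standard repetition-and-median boosting that I would combine with the union bound used for property (i).

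Given (i) and (ii), the proof concludes by normal-equations manipulation. The sketched optimum $\Xbtil$ satisfies $(\Sb\Ub)^\ts\Sb(\Ub\Xbtil - \Bb) = 0$, and writing $\Ub\Xbtil - \Bb = \Ub(\Xbtil - \Yb^{*}) - \Bb^{\perp}$ and solving gives
\[
    \Sb\Ub(\Xbtil - \Yb^{*}) = (\Sb\Ub)\bigl[(\Sb\Ub)^\ts(\Sb\Ub)\bigr]^{-1}(\Sb\Ub)^\ts\Sb\Bb^{\perp}.
\]
Property (i) bounds the operator norm of the projection-like map on the right by a constant and lets me pass $\|\Sb\Ub(\Xbtil - \Yb^{*})\|_F$ back to $\|\Ub(\Xbtil - \Yb^{*})\|_F$ up to a constant factor; property (ii) bounds the Frobenius norm of the right-hand side by $O(\sqrt{\epsilon}\,\|\Bb^{\perp}\|_F)$. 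Combining yields $\|\Ub(\Xbtil - \Yb^{*})\|_F^2 \leq O(\epsilon)\|\Bb^{\perp}\|_F^2$, and the Pythagorean theorem (since $\Ub(\Xbtil - \Yb^{*})$ lies in $\vspan(\Ub)$ and $\Bb^{\perp}$ is orthogonal to it) gives
\[
    \|\Ub\Xbtil - \Bb\|_F^2 = \|\Ub(\Xbtil - \Yb^{*})\|_F^2 + \|\Bb^{\perp}\|_F^2 \leq (1 + O(\epsilon))\|\Bb^{\perp}\|_F^2,
\]
so rescaling $\epsilon$ by a constant yields the claim.

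The main obstacle is obtaining the linear-in-$1/\epsilon$ rather than quadratic dependence in $m$: the standard $\epsilon$-subspace-embedding route via JL nets would give $m = O(k/\epsilon^2)$. The key insight is that here $\epsilon$ enters only through the approximate matrix product property, which has a sharper second-moment bound scaling as $1/m$ instead of $1/\sqrt{m}$, while the subspace embedding is only needed at constant accuracy. A secondary technical point is merging the high-probability net argument for (i) with the repetition-based boosting for (ii) without losing an extra $\log(1/\delta)$ factor; I would handle this by taking $O(\log(1/\delta))$ independent sketches of $O(k/\epsilon)$ rows each and reasoning jointly, so that both properties hold simultaneously on a single sketch with probability $1 - \delta$.
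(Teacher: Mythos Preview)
The paper does not prove this statement: it is quoted verbatim as Theorem~3.1 of Clarkson--Woodruff~\cite{clarkson2009numerical} and used as a black box (the proof of Lemma~\ref{thm:regression_approx} likewise defers to that external argument). Your sketch is essentially the original Clarkson--Woodruff proof --- reduce to orthonormal $\Ub$, decompose $\Bb = \Ub\Yb^{*} + \Bb^{\perp}$, combine a constant-accuracy subspace embedding with an $O(k/m)$ approximate-matrix-product bound on $\Ub^\ts\Sb^\ts\Sb\Bb^{\perp}$, and finish via the normal equations and Pythagoras --- so in that sense it matches the intended argument, and in particular your observation that $\epsilon$ enters only through AMP (hence $1/\epsilon$ rather than $1/\epsilon^2$) is exactly the point.

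One place to tighten: your passage from constant probability to probability $1-\delta$ for property~(ii) via ``repetition-and-median'' or ``stacking $O(\log(1/\delta))$ independent sketches and reasoning jointly'' is not a proof as written. The median trick applies to scalar estimators, not to the event that a \emph{single} sign matrix $\Sb$ satisfies $\|\Ub^\ts\Sb^\ts\Sb\Bb^{\perp}\|_F^2 \le \epsilon\|\Bb^{\perp}\|_F^2$, and naive block-stacking does not control the contribution of the failed blocks to either the upper singular values in~(i) or the AMP error in~(ii). Clarkson--Woodruff instead establish the high-probability AMP bound directly from tail bounds for quadratic forms in the sign entries, and that is where the multiplicative $\log(1/\delta)$ in $m$ actually comes from. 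This is a fixable detail rather than a structural gap, but you should name the specific high-probability AMP lemma you rely on rather than gesturing at a generic boosting step.
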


Notice that, if we knew the optimal solution $\Xb^*$ exactly, then by the previous theorem we could compute an approximately optimal dictionary $\Dbtil$ exactly as $\Dbtil = (\Sb\Xb^*)^\dagger\Sb\Ab$. The key observation is that, since $\Sb$ is a random sign matrix and the rows of $\Xb$ are standard basis vectors, the set $\{\Sb\Xb ~|~\Xb\in\Xcal, \Sb\in\{\pm 1\}^{\Ocaltil(k/\epsilon) \times n}\}$ is not too large. Also, we can approximately solve $\min_{\Xb\in\Xcal}\|\Xb\Dbtil - \Ab\|_F^2$ for a fixed $\Dbtil$ with constant probability by solving $\Xbtil = \min_{\Xb\in\Xcal}\|(\Xb\Dbtil - \Ab)\Tb\|_F^2$, where $\Tb$ is a moderately sized affine embedding matrix. Since the number of possible $(\Xbtil, \Dbtil)$ is not too large, an $\ell_2$-embedding matrix, $\Wb$, can be used to approximate $\|\Xbtil\Dbtil - \Ab\|_F^2$ for every possible $(\Xbtil, \Dbtil)$. 

Our streaming algorithm relies on carefully balancing the roles of the three sketching matrices to minimize the size of the sketches, using the weakest guarantee possible for each component.  In particular, it is critical to use the affine embedding matrix $\Tb$ to only preserve the error for a fixed $\Dbtil$ instead of every subproblem and instead use the $\ell_2$-embedding matrix $\Wb$ to identify which subproblem provides an approximate solution to the overall problem.

\begin{theorem}
\label{thm:k-means-turnstile-streaming}
    (1) There are distributions of random sketching matrices $\Tb \in \R^{d \times t}$, $\Sb \in \R^{s \times n}$, and $\Wb \in \R^{w \times nd}$, with $t = \Ocal(\log(nk)/\epsilon^2)$, $s = \Ocal(\frac{k}{\epsilon})$, and $w = \Ocal(\frac{k^2}{\epsilon^3} \log(n))$ such that $\Sb\Ab$, $\Ab\Tb$, and $\Wb\operatorname{vec}(\Ab)$ suffice to compute a $(1+\epsilon)$-approximate solution to the $k$-means problem with at least constant probability, where $\operatorname{vec}(\Ab)\in\mathbb R^{nd}$ is the flattening of $\Ab$. 
    
    (2) There is an algorithm which computes a $(1+\epsilon)$-approximate solution to the $k$-means problem in the turnstile model with at least constant probability using $\Ocaltil(n/\epsilon^2 + dk/\epsilon)$ space for $n,d > \poly(k/\epsilon)$ in $n^{\Ocaltil(k^2/\epsilon)}$ additional time.
\end{theorem}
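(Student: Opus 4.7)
The plan is to instantiate the guess-the-sketch paradigm of \cite{RSW2016} with three carefully balanced sketches maintained throughout the turnstile stream. Let $\Sb \in \R^{s \times n}$ be a Clarkson--Woodruff sign matrix with $s = O(k/\epsilon)$ rows, let $\Tb \in \R^{d \times t}$ be an independent Johnson--Lindenstrauss matrix with $t = O(\log(nk)/\epsilon^2)$ columns, and let $\Wb \in \R^{w \times nd}$ be an independent $\ell_2$-embedding with $w = O(k^2 \log(n)/\epsilon^3)$ rows. All three of $\Sb\Ab$, $\Ab\Tb$, and $\Wb\vect(\Ab)$ are linear in $\Ab$ and hence maintainable under turnstile updates. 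At query time, we enumerate candidate values of $\Mb = \Sb\Xb^*$: each column of $\Mb$ is a sum of at most $n$ columns of a $\pm 1$ matrix, so each entry is an integer in $[-n, n]$, giving at most $n^{\tilde O(k^2/\epsilon)}$ candidates. For each guess $\Mb$, set $\Dbtil_\Mb = \Mb^\dagger(\Sb\Ab)$ (computable from $\Sb\Ab$) and assign each point $i$ to cluster $\argmin_j \|(\Dbtil_\Mb[j,:] - \Ab[i,:])\Tb\|_2$ (computable from $\Ab\Tb$ and $\Dbtil_\Mb\Tb = \Mb^\dagger(\Sb\Ab)\Tb$); call the resulting assignment $\Xbtil_\Mb$.

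For correctness, we chain three approximations. First, for the true guess $\Mb^* = \Sb\Xb^*$, Theorem \ref{thm:regression_approx_dense} gives $\|\Xb^*\Dbtil_{\Mb^*} - \Ab\|_F \leq (1+\epsilon)\OPT$ with constant probability over $\Sb$. Second, for this same $\Mb^*$ the sketched minimizer $\Xbtil_{\Mb^*}$ must satisfy $\|\Xbtil_{\Mb^*}\Dbtil_{\Mb^*} - \Ab\|_F \leq (1+\epsilon)\min_{\Xb \in \Xcal}\|\Xb\Dbtil_{\Mb^*} - \Ab\|_F$. The subtle point here, and the main technical obstacle, is that $\Dbtil_{\Mb^*}$ is data-dependent: naively one would need $\Tb$ to preserve all $nk$ squared distances for every possible $\Dbtil_\Mb$, which would force $t$ to pick up an extra $\log(\#\Mb) = \tilde O(k^2/\epsilon)$ factor. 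We avoid this by exploiting independence: conditional on $(\Sb, \Ab)$ the matrix $\Dbtil_{\Mb^*}$ is fixed while $\Tb$ remains a fresh oblivious JL sketch, so a single JL bound over the $nk$ vectors $\{\Dbtil_{\Mb^*}[j,:] - \Ab[i,:]\}_{i,j}$ suffices, yielding the claimed $t = O(\log(nk)/\epsilon^2)$. Third, to identify the winning candidate across all guesses, we evaluate each via $\Wb\vect(\Xbtil_\Mb\Dbtil_\Mb - \Ab)$; since $\Wb$ is chosen obliviously and is therefore independent of all pairs $(\Xbtil_\Mb, \Dbtil_\Mb)$, a JL union bound over the $n^{\tilde O(k^2/\epsilon)}$ candidates requires $w = \tilde O(k^2/\epsilon^3)$, matching the claimed size. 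Returning the candidate with smallest $\Wb$-estimated cost and rescaling $\epsilon$ by a constant gives the $(1+\epsilon)$-approximation of part~(1).

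Part~(2) follows by bookkeeping. The stored sketches consume $O(sd) = O(dk/\epsilon)$ words for $\Sb\Ab$, $O(nt) = \tilde O(n/\epsilon^2)$ for $\Ab\Tb$, and $O(w) = \tilde O(k^2/\epsilon^3)$ for $\Wb\vect(\Ab)$; the last term is absorbed under the assumption $n, d > \poly(k/\epsilon)$, and the sketching matrices themselves can be represented by standard Nisan-style pseudorandom generators within the same budget. The query time is dominated by looping over the $n^{\tilde O(k^2/\epsilon)}$ guesses, each of which requires only $\poly(n, d, k, 1/\epsilon)$ work to form $\Dbtil_\Mb$, assign points, and evaluate $\Wb$-cost, giving a total of $n^{\tilde O(k^2/\epsilon)}$ additional time.
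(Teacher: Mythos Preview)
Your proposal is correct and follows essentially the same approach as the paper: three independently drawn sketches $\Sb,\Tb,\Wb$ playing exactly the roles you describe, enumeration over the finite set $\{\Sb\Xb : \Xb\in\Xcal\}$ (whose size is $n^{\tilde O(k^2/\epsilon)}$ since $\Sb\Xb\in\mathbb Z^{s\times k}$ with entries in $[-n,n]$), and the key balancing observation that $\Tb$ only needs to succeed for the single true dictionary $\Dbtil_{\Mb^*}$ while $\Wb$ handles the union bound over all candidates. Your explicit use of the independence of $\Tb$ from $(\Sb,\Ab)$ to justify the $t=O(\log(nk)/\epsilon^2)$ bound is precisely what the paper means when it says $\Tb$ ``only preserve[s] the error for a fixed $\Dbtil$'' with constant probability; the paper calls $\Tb$ an affine embedding but the stated dimension $t=O(\log(nk)/\epsilon^2)$ is indeed the JL bound over the $nk$ point--center difference vectors, so your reading is the right one.
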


The previous proof critically relies on the fact that $\{\Sb\Xb ~|~\Xb\in\Xcal,~\Sb\in\{\pm 1\}^{m \times n}\}$ is a finite set that is not too large.  We must therefore introduce the following restricted form of the sparse dictionary problem.
\begin{definition}\label{def:discret_dict}
    (Discrete $r$-sparse dictionary problem) Let $\Xcal$ be the space of $n \times k$ matrices with at most $r$ non-zero entries per row and non-zero entries taking values in $\{-D, -(D-1),...,-1,0,1,...(D-1),D\}$. The goal of this problem is solve the following optimization problem:
    \begin{gather*}
        \Xb^*, \Db^* = \argmin_{\Xb \in \Xcal, \Db \in \R^{k \times d}} \|\Xb\Db - \Ab\|_F,
    \end{gather*}
    where $\Ab \in \R^{n \times d}$ is an arbitrary input matrix.
\end{definition}
Under this constraint that the solution is in a discrete space the proof of the streaming algorithm for sparse dictionary learning proceeds essentially the same as for $k$-means while accounting for the larger solution space.
\begin{theorem}\label{thm:sparse_dict_turnstile}
    (1) There are distributions of random sketching matrices $\Tb \in \R^{d \times t}$, $\Sb \in \R^{s \times n}$, and $\Wb \in \R^{w \times nd}$, with $t = \Ocal(r \log(nkD)/\epsilon^2)$, $s = \Ocal(\frac{k}{\epsilon})$, and $w = \Ocal(\frac{k^2}{\epsilon^3} \log(nD))$ such that $\Sb\Ab$, $\Ab\Tb$, and $\Wb\operatorname{vec}(\Ab)$ suffice to compute a $(1+\epsilon)$-approximate solution to the discrete $r$-sparse dictionary problem (Definition \ref{def:discret_dict}) with at least constant probability. 
    
    (2) There is an algorithm which computes a $(1+\epsilon)$-approximate solution to the $r$-sparse dictionary problem in the turnstile model with at least constant probability using $\Ocaltil(nr/\epsilon^2 + dk/\epsilon)$ space for $n,d > \poly(k/\epsilon)$ in $k^r \cdot (nD)^{\Ocaltil(k^2/\epsilon)}$ additional time.
\end{theorem}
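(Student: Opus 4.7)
The argument follows the three-sketch, guess-the-sketch template of Theorem \ref{thm:k-means-turnstile-streaming}; the only substantive differences are (i) a larger feasible set $\Xcal$, which enlarges the enumeration of candidate values of $\Sb\Xb^*$, and (ii) a larger per-row solution space, which enlarges the union bound that sets the dimension of $\Tb$. We maintain three independent linear sketches of $\Ab$: an $s\times n$ sign matrix $\Sb$ with $s=\Ocal(k/\epsilon)$, a dimension-reduction matrix $\Tb\in\R^{d\times t}$ with $t=\Ocal(r\log(nkD)/\epsilon^2)$, and an $\ell_2$-embedding $\Wb\in\R^{w\times nd}$ applied to $\operatorname{vec}(\Ab)$ with $w=\Ocal((k^2/\epsilon^3)\log(nD))$. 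By linearity, $\Sb\Ab$, $\Ab\Tb$ and $\Wb\operatorname{vec}(\Ab)$ are trivially maintainable in a turnstile stream using $\Ocaltil(dk/\epsilon)$, $\Ocaltil(nr/\epsilon^2)$ and $\Ocaltil(k^2/\epsilon^3)$ words respectively; derandomizing the seeds with a pseudorandom generator keeps the total within $\Ocaltil(nr/\epsilon^2+dk/\epsilon)$ whenever $n,d>\poly(k/\epsilon)$, which gives part (2)'s space bound.

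The recovery phase runs entirely from the sketch. By Theorem \ref{thm:regression_approx_dense}, with constant probability the dictionary $\Dbtil_*\defeq(\Sb\Xb^*)^\dagger\Sb\Ab$ satisfies $\min_{\Xb\in\Xcal}\|\Xb\Dbtil_* - \Ab\|_F\le(1+\epsilon)\OPT$. Since $\Xb^*$ is unknown we enumerate every plausible value of $M\defeq\Sb\Xb^*$: as $\Sb\in\{\pm1\}^{s\times n}$ and every entry of any $\Xb\in\Xcal$ lies in $\{-D,\dots,D\}$, each entry of $M$ is an integer in $[-nD,nD]$, so the candidate list $M\in\R^{s\times k}$ has size at most $(2nD+1)^{sk}=(nD)^{\Ocaltil(k^2/\epsilon)}$, and for each $M$ we form $\Dbtil_M\defeq M^\dagger\Sb\Ab$. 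One $M$ equals $\Sb\Xb^*$, so one $\Dbtil_M$ is $(1+\epsilon)$-optimal.

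For each candidate $\Dbtil_M$ we approximately solve the inner regression $\min_{\Xb\in\Xcal}\|\Xb\Dbtil_M-\Ab\|_F^2$ by minimising the surrogate $\sum_{i=1}^n\|(\Xb_{i,:}\Dbtil_M-\Ab_{i,:})\Tb\|_2^2$; each summand is computable from $\Ab\Tb$ and $\Dbtil_M\Tb$ by brute force over the $\binom{k}{r}(2D+1)^r\le(kD)^{\Ocal(r)}$ feasible sparse-and-quantised rows. For $\Tb$ to preserve all $n(kD)^{\Ocal(r)}$ candidate residual norms simultaneously to $1\pm\epsilon$, a Johnson--Lindenstrauss union bound forces $t=\Ocal(r\log(nkD)/\epsilon^2)$, matching the stated dimension and yielding a $(1+\epsilon)$-optimal $\Xbtil_M$ against $\Dbtil_M$ with constant probability. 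Finally, among the $(nD)^{\Ocaltil(k^2/\epsilon)}$ pairs $(\Xbtil_M,\Dbtil_M)$ we output the one minimising an estimate of $\|\Xbtil_M\Dbtil_M-\Ab\|_F^2$ obtained from $\Wb\operatorname{vec}(\Ab)$; a Johnson--Lindenstrauss union bound over this many residual directions forces $w=\Ocal((k^2/\epsilon^3)\log(nD))$, matching the stated dimension. Composing the three $(1\pm\epsilon)$-guarantees and rescaling $\epsilon$ gives the approximation in part (1); the runtime is $(nD)^{\Ocaltil(k^2/\epsilon)}$ outer iterations, each spending $n\cdot k^r$ work on the row minimisation (the $(2D+1)^r$ value-enumeration factor being absorbed into $(nD)^{\Ocaltil(k^2/\epsilon)}$), giving $k^r(nD)^{\Ocaltil(k^2/\epsilon)}$ in total.

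The main conceptual obstacle, and the reason the three sketch dimensions take the values they do, is the separation-of-duties point made immediately before the theorem: $\Tb$ is allowed to fail for most dictionaries and only needs to be a Johnson--Lindenstrauss sketch for a fixed $\Dbtil_M$ at a time, while the outer enumeration over $M$ is paid for entirely by $\Wb$. Letting $\Tb$ shoulder both the outer and inner union bounds would blow $t$ up to $\Ocaltil(k^2/\epsilon^3)$, destroying the $\Ocaltil(nr/\epsilon^2)$ bound on $\Ab\Tb$; preserving this split while replacing $k$-valued assignments (for $k$-means) by discrete $r$-sparse $\{-D,\dots,D\}^r$-valued rows (hence the extra $r$ in $t$ and the extra $\log D$ in $w$) is the technical heart of the extension.
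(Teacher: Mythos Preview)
Your proposal is correct and follows essentially the same three-sketch guess-the-sketch approach the paper uses; the paper itself states that the proof ``proceeds essentially the same as for $k$-means while accounting for the larger solution space,'' and you have correctly identified how the larger discrete row space $\binom{k}{r}(2D+1)^r$ inflates the union bounds governing $t$ and $w$ and how the integer range $[-nD,nD]$ of the entries of $\Sb\Xb^*$ determines the size of the outer enumeration. Your separation-of-duties observation (that $\Tb$ need only succeed for the single dictionary $\Dbtil_*=(\Sb\Xb^*)^\dagger\Sb\Ab$ while $\Wb$ absorbs the full enumeration) is exactly the point the paper emphasises, and your parameter and runtime accounting matches the statement.
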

Removing the restriction that $\Xb^*$ belongs to the restricted space would be an interesting future problem. However, two issues are that the entries of $\Xb$ may be very large, since the rows of $\Db$ may not be orthogonal, and a uniform discretization is required to apply a guess-the-sketch argument.

\section{Streaming lower bounds for Euclidean \texorpdfstring{$k$}{k}-means clustering}
\label{sec:lower-bounds}

We introduce slightly different definitions of the $k$-means clustering problem than the one used in Definition \ref{def:k-means} to facilitate the notation of our lower bound arguments in this section.

\begin{definition}[$k$-means clustering cost]
\label{def:k-means-cost}
Let $\{a^i\}_{i=1}^n\subseteq\mathbb R^d$ be a set of $n$ vectors in $d$ dimensions. Then, we define the $k$-means clustering cost of centers $c^1, c^2, \dots, c^k\in\mathbb R^d$ to be
\[
    \cost(c^1, c^2, \dots, c^k) \coloneqq \sum_{i=1}^n \min_{j=1}^k \norm{a^i - c^j}_2^2.
\]
\end{definition}

\begin{definition}[Approximate solutions to $k$-means clustering]
Let $\{a^i\}_{i=1}^n\subseteq\mathbb R^d$ be a set of $n$ vectors in $d$ dimensions. Let
\[
    \OPT \coloneqq \min_{c^1, c^2, \dots, c^k\in\mathbb R^d} \cost(c^1, c^2, \dots, c^k)
\]
We say that an algorithm outputs an $\epsilon$-approximate solution to the $k$-means clustering problem if the algorithm outputs one of the following:
\begin{itemize}
\item \textbf{Partition}: a partition $C^1, C^2, \dots, C^k\subseteq[n]$ such that
\[
    \sum_{j=1}^k \sum_{i\in C^j} \norm{a^i - \hat c^j}_2^2 \leq (1+\epsilon)\OPT
\]
where $\hat c^j \coloneqq \frac1{\abs{C^j}} \sum_{i\in C^j} a^i$. 
\item \textbf{Centers}: centers $\hat c^1, \hat c^2, \dots, \hat c^k\in\mathbb R^d$ such that $\cost(\hat c^1, \hat c^2, \dots, \hat c^k) \leq (1+\epsilon)\OPT$.
\item \textbf{Cost}: a number $c\geq 0$ such that $\OPT \leq c \leq (1+\epsilon)\OPT$.
\end{itemize}
\end{definition}

\subsection{Lower bounds for \texorpdfstring{$k$}{k}-means clustering}

Our most technically involved and delicate lower bound result is the following theorem, which shows that nearly optimally solving $k$-means clustering to $(1+\epsilon)$ accuracy requires $\tilde\Omega(n/\epsilon)$ bits of space:

\KMeansLB*

We defer the full proof to Appendix \ref{sec:n/eps-full-proof} and give a proof sketch in this section to illustrate the most important ideas. 

\paragraph{The hard instance: set disjointness.} The starting point to our lower bound is the information theoretic communication complexity lower bound for the set disjointness problem due to \cite{BJKS2004}. In the two-party set disjointness problem, two players Alice and Bob each have a bit vector $A, B\in\{0,1\}^d$ in $d$ dimensions, and they must determine whether there exists a coordinate $j\in[d]$ such that $A_j = B_j = 1$ or not. The work of \cite{BJKS2004} shows that in order to solve this problem, Alice and Bob must exchange messages that reveal at least $\Omega(d)$ bits of information about their inputs, which in turn implies an $\Omega(d)$ communication complexity lower bound for this problem, as well as an $\Omega(nd)$ communication complexity lower bound for solving a constant fraction of $n$ independent instances of the same problem. Furthermore, the hard instance of \cite{BJKS2004} has a simple input distribution: the vectors $(A, B)$ are such that the $j$th coordinate $(A^j, B^j)$ is drawn either as $(0, 0)$ with probability $1/2$ or $(1, 0)$ with probability $1/4$ or $(0, 1)$ with probability $1/4$, except for one coordinate, which may take the value $(1, 1)$. 

We aim to make use of this result as follows. Consider the vector $Z = A + B$. This vector has entries in $\{0,1\}$, except possibly for one entry, which could be $2$. If we have $n$ such vectors, then we expect a good clustering into $k = d$ clusters to cluster all points with $Z_j = 2$ together. Such a clustering would be able to output the \emph{index} of the intersection of $A$ and $B$, which intuitively requires more information than just determining whether there is an intersection or not, and thus should also require $\Omega(d)$ bits of information cost. Furthermore, we can choose the dimension $d$ to be roughly $1/\epsilon$, so that the cost of clustering $Z$ to the ``correct'' center will have a cost of $\Theta(d) = \Theta(1/\epsilon)$, while clustering $Z$ to the incorrect center will incur an additional error of $\Theta(1)$, which is an $\epsilon$ fraction of the cost.

\paragraph{Cost calculations.} The main challenge in carrying out the idea in the previous paragraph is in arguing that the target optimal clustering that we wish to discover indeed is a nearly optimal clustering, and that significant deviations from this clustering result in a large cost. This involves showing a lower bound on the cost of \emph{any} clustering. 

Our first step is to obtain a lower bound on the cost of any clustering of $n$ random bit vectors in $d$ dimensions. If we first fix a set of $k$ centers $\{c^j\}_{j=1}^k$, then the minimum distance between a random bit vector $Z$ and any of the $c^j$ can be bounded by using Chernoff bounds, which implies a lower bound of $d/4 - O(\log d)$ on this quantity in expectation (Lemma \ref{lem:expectation-bound}). Note, however, that this lower bound is not high enough to prevent a nearly optimal solution from just assigning points according to the best clustering of the random bits while ignoring the one entry that takes the value of $Z_j = 2$, which means that the clustering need not solve the problem of identifying the intersection coordinate between $A$ and $B$.

To address this problem, we need to make the cost of ignoring the intersection coordinate much more costly. We do this by instead considering the \emph{multi-party} set disjointness problem, so that we now have $t = O(\sqrt{\log d})$ players rather than just $2$, each with an input vector $A^{(i)}\in\{0,1\}^d$, so that $Z = \sum_{i=1}^t A^{(i)}$ is now a random bit vector except for a single entry with a $t$ rather than a $2$. Now, a clustering which does not correctly identify the intersection coordinate will pay a cost of roughly $t^2 = O(\log d)$, which is large enough to overcome the potential savings from a good clustering of the random bit coordinates. We also ``plant'' the target centers $c^j$ by adding roughly $n/k$ copies of each of our target centers $c^j$ as part of the input instance (Lemma \ref{lem:planted-centers}), so that choosing centers $\hat c^j$ that are significantly different from $c^j$ must incur a large cost. In particular, we can get the guarantee that on average, $\norm{c^j - \hat c^j}_2^2 \leq o(1)$.

At this point, we can argue that most of the $k$ centers are the centers that expect, i.e., roughly $t$ on one coordinate and $1/2$ on the rest of the coordinates. Thus, if we cluster a point $Z$ whose center we expect to be $\hat c^j$ but is clustered to some other $\hat c^{j'}$, and furthermore $\hat c^{j'}$ is close to our expected center $c^{j'}$, then we must incur an additional $O(\log d)$ cost which is too expensive. However, there is still the possibility that for the very small number of clusters $\hat c^j$ which do not satisfy $\norm{c^j - \hat c^j}_2^2 \leq o(1)$, these centers could be assigned a very large number of points with very low cost. We also show that this cannot be the case, by arguing that if a large number of points are assigned to very few clusters, then the cost must be large (Lemma \ref{lem:few-points-per-center}). With this lemma in hand, we are able to show our main result in Theorem \ref{thm:k-means-lb} by carefully combining the various cost contribution bounds discussed previously.

\subsubsection{Lower bound for outputting nearly optimal centers}

We note that an $\Omega(dk/\epsilon)$ lower bound follows from an earlier lower bound for low rank approximation due to \cite{Woo2014}, even for row arrival streams:

\begin{definition}[Row arrival stream]
\label{def:row-arrival-stream}
We say that an algorithm outputs an $\epsilon$-approximate solution to the $k$-means clustering problem in the row arrival streaming model if the input vectors $\{a^i\}_{i=1}^n\subseteq\mathbb R^d$ arrive one at a time.
\end{definition}

\begin{theorem}
\label{thm:cluster-center-lb}
Suppose that an algorithm outputs centers $\{\hat c^j\}_{j=1}^k\subseteq\mathbb R^d$ that achieves a $(1+\epsilon)$-approximately optimal solution to the $k$-means clustering problem after one pass through a row arrival stream (Definition \ref{def:row-arrival-stream}). Then, the algorithm must use at least $\tilde\Omega(dk/\epsilon)$ bits of space. 
\end{theorem}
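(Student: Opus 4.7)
I would reduce from the $\tilde\Omega(dk/\epsilon)$ one-pass row-arrival streaming lower bound for rank-$k$ approximation established in \cite{Woo2014}. That result says any one-pass streaming algorithm which outputs a rank-$k$ matrix $\hat\Ab$ with $\|\Ab-\hat\Ab\|_F^2\leq(1+\epsilon)\min_{\rank\Bb\le k}\|\Ab-\Bb\|_F^2$ requires $\tilde\Omega(dk/\epsilon)$ bits. The reduction hinges on the observation that the row span of the $k$ output centers is a $k$-dimensional subspace of $\R^d$, so projecting the input onto it produces a rank-$k$ matrix that can serve as a candidate low-rank approximation.

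\textbf{Step 1: Hard instance.} I would use the same hard distribution as in \cite{Woo2014}, possibly augmented to force a ``clusterable'' structure. Concretely, I would arrange the instance so that (i) recovering a $(1+\epsilon)$-approximate rank-$k$ subspace still requires $\tilde\Omega(dk/\epsilon)$ bits on this input, and (ii) the optimal $k$-means cost is within a $(1+O(\epsilon))$ factor of the optimal rank-$k$ approximation cost. A natural way to enforce (ii) is to superimpose the rank-$k$ lower-bound instance on top of points clustered tightly around $k$ well-separated anchor locations that already lie on the ``planted'' $k$-dimensional subspace, so that every near-optimal $k$-means solution must place one center near each anchor and the remaining ``slack'' in the objective is exactly the residual of projecting the perturbation onto the span of those centers.

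\textbf{Step 2: From centers to rank-$k$ approximation.} Given centers $\hat c^1,\dots,\hat c^k\in\R^d$ achieving $\cost(\hat c^1,\dots,\hat c^k)\leq(1+\epsilon)\OPT_{\text{k-means}}$, let $U\coloneqq\vspan(\hat c^1,\dots,\hat c^k)$ and output $\hat\Ab\coloneqq\Pb_U\Ab$, the row-wise orthogonal projection of $\Ab$ onto $U$. Since for each row $a^i$ and each center $\hat c^j$ we have $\|a^i-\Pb_U a^i\|_2^2\leq\|a^i-\hat c^j\|_2^2$, it follows that
\[
\|\Ab-\hat\Ab\|_F^2 \;\leq\; \cost(\hat c^1,\dots,\hat c^k) \;\leq\; (1+\epsilon)\OPT_{\text{k-means}}.
\]
Combined with property (ii) of the instance, this yields $\|\Ab-\hat\Ab\|_F^2\leq(1+O(\epsilon))\OPT_{\text{rank-}k}$, so $\hat\Ab$ is a $(1+O(\epsilon))$-approximate rank-$k$ approximation. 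Rescaling $\epsilon$ by a constant and invoking the lower bound of \cite{Woo2014} completes the reduction: the $k$-means streaming algorithm must itself use $\tilde\Omega(dk/\epsilon)$ bits of space. The simulation is immediate because a row-arrival stream for the $k$-means instance is also a row-arrival stream for the rank-$k$ instance.

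\textbf{Main obstacle.} The only nontrivial part is establishing property (ii), namely exhibiting a hard distribution on which the $k$-means optimum lies within a $(1+O(\epsilon))$ factor of the rank-$k$ optimum. In general these quantities can differ by large constant factors, so naively plugging the \cite{Woo2014} instance into $k$-means does not suffice. The anchor/cluster construction described above is the cleanest way I see to collapse this gap: by making the anchors sufficiently separated and heavy, one forces the optimal $k$-means partition to coincide with the natural anchor partition, at which point the $k$-means cost equals the sum of within-anchor projection residuals, which in turn matches the rank-$k$ approximation cost of the embedded hard instance up to lower-order terms.
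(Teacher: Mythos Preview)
Your high-level reduction is correct and is the same link the paper uses: the span $U$ of the output centers is a $k$-dimensional subspace with $\|\Ab-\Pb_U\Ab\|_F^2\le\cost(\hat c^1,\dots,\hat c^k)\le(1+\epsilon)\OPT_{\text{$k$-means}}$, and hardness comes from \cite{Woo2014}. The difference is in how property~(ii) is obtained.

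The paper does not augment the instance at all. It observes that on the specific \cite{Woo2014} distribution---an $O(k/\epsilon)\times d$ block of encoded random bits followed by $k$ appended ``query'' rows---a nearly optimal rank-$k$ approximation already sets all but the $k$ query rows to zero, and that matrix is itself a clustering solution (each query row is its own center, everything else is sent to the zero center). Hence $\OPT_{\text{$k$-means}}\le(1+o(1))\OPT_{\text{rank-}k}$ on the unmodified instance, and since the reverse inequality $\OPT_{\text{rank-}k}\le\OPT_{\text{$k$-means}}$ is automatic, your property~(ii) comes for free. The bit-recovery argument of \cite{Woo2014} then transfers verbatim to the $k$-means output. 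Your anchor construction would also work, but it buys nothing here and creates a new obligation---verifying that the rank-$k$ lower bound survives the added anchor mass---that the paper sidesteps by exploiting structure already present in the hard instance.

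One phrasing point worth tightening: writing the output as $\hat\Ab=\Pb_U\Ab$ reads as though you need access to $\Ab$ after the stream ends. The \cite{Woo2014} bound is against algorithms that output the subspace itself, so it suffices to hand over $U=\vspan(\hat c^1,\dots,\hat c^k)$; the projection is only notional in the analysis.
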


We briefly justify why the techniques of \cite{Woo2014} imply Theorem \ref{thm:cluster-center-lb}. The result of \cite{Woo2014} constructs a distribution over $O(k/\epsilon)\times d$ matrices such that one can recover an arbitrary random bit among $\tilde\Omega(dk/\epsilon)$ random bits by appending a set of $k$ ``query'' rows and then computing a $(1+\epsilon)$-approximately optimal low rank approximation to the resulting matrix. Furthermore, it is shown that a nearly optimal rank $k$ approximation is obtained by approximating all but $k$ rows by zero vectors. Such a rank $k$ approximation in fact corresponds to a clustering solution, and thus the proof of \cite{Woo2014} immediately applies to our $k$-means clustering setting as well.

\subsection{Lower bounds for center cost query data structures}

Next, we study lower bounds against streaming algorithms which have the guarantee of approximating the cost of an arbitrary but fixed set of centers. We formalize the guarantee we study in Definition \ref{def:cost-query-data-structure}.

\begin{definition}[Center cost query data structure]
\label{def:cost-query-data-structure}
We say that $\mathcal Q$ is an \emph{$\epsilon$-approximate center cost query data structure} for the $k$ means clustering problem for the instance $\{a^i\}_{i=1}^n$ if, for any centers $c^1, c^2, \dots, c^k\in\mathbb R^d$, $\mathcal Q$ outputs one of the following:
\begin{itemize}
\item \textbf{Partition}: a partition $C^1, C^2, \dots, C^k\subseteq[n]$ such that
\[
    \sum_{j=1}^k \sum_{i\in C^j} \norm{a^i - c^j}_2^2 \leq (1+\epsilon)\cost(c^1, c^2, \dots, c^k).
\]
\item \textbf{Cost}: a number $c\geq 0$ such that
\[
    \cost(c^1, c^2, \dots, c^k) \leq c \leq (1+\epsilon)\cost(c^1, c^2, \dots, c^k)
\]
\end{itemize}
\end{definition}

Our first lower bound is an $\Omega(n/\epsilon^2)$ bit space lower bound for a center cost query data structure which can output a partition for $k$-means clustering with $k=2$. We proceed by a standard encoding argument, showing that any such data structure must encode $\Omega(n/\epsilon^2)$ many random bits. We provide the full proof in Appendix \ref{sec:strong-lb-proof}.

\begin{theorem}
\label{thm:strong-lb}
Let $\epsilon\in(0, 1/3)$ and $k = 2$. Suppose that an algorithm maintains an $\epsilon/15$-approximate center cost query data structure for $k$-means clustering that outputs a partition (Definition \ref{def:cost-query-data-structure}) over a row arrival stream (Definition \ref{def:row-arrival-stream}). Then, the algorithm must use at least  $\Omega(n/\epsilon^2)$ bits of space, over any constant number of passes.
\end{theorem}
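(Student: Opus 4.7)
My approach is a one-way communication (encoding) argument: I construct a distribution over inputs on which the final state $\Sigma$ of any valid $(\epsilon/15)$-approximate center-cost query data structure, together with a fixed non-adaptive sequence of $d$ queries, allows reconstruction of $\Omega(n/\epsilon^2)$ bits. Set $d = \lceil 1/\epsilon^2 \rceil$ and fix a binary linear error-correcting code $C\colon \{0,1\}^m \to \{0,1\}^{nd}$ of constant rate (so $m = \Theta(nd) = \Theta(n/\epsilon^2)$) and relative minimum distance strictly greater than $1/15$; such codes exist by the Gilbert--Varshamov bound, and explicit constant-rate constructions are known. Sample $M \in \{0,1\}^m$ uniformly, set $X = C(M)$, and feed into the algorithm the row-arrival stream of $n$ vectors $a^1, \ldots, a^n \in \{-1, +1\}^d$ defined by $a^i_j = (-1)^{X_{(i-1)d + j}}$, producing the state $\Sigma$. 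For each coordinate $j \in [d]$, issue the query with centers $c^1_{(j)} = \sqrt{d}\, e_j$ and $c^2_{(j)} = -\sqrt{d}\, e_j$.

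\textbf{Per-query error bound and decoding.} Since $\|a^i\|_2^2 = d$, direct expansion gives $\|a^i - c^1_{(j)}\|_2^2 = 2d - 2\sqrt{d}\, a^i_j$ and $\|a^i - c^2_{(j)}\|_2^2 = 2d + 2\sqrt{d}\, a^i_j$, so the optimal partition assigns $a^i$ to $c^1_{(j)}$ iff $a^i_j = +1$, with per-point cost $2d - 2\sqrt{d}$ and per-misclassification penalty $4\sqrt{d}$. Hence $\OPT_j = n(2d - 2\sqrt{d})$, and the $(1+\epsilon/15)$-approximation guarantee forces the returned partition $P_j$ to have at most $(\epsilon/15)\,\OPT_j/(4\sqrt{d}) = (\epsilon n / 30)(\sqrt{d} - 1) \le n/30$ mis-assignments; equivalently, $P_j$ reveals the $n$-bit column $(a^i_j)_{i=1}^n$ up to Hamming distance $n/30$. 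Concatenating the $d$ query outputs yields a string $\hat X$ within Hamming distance $nd/30$ of $X$, strictly less than half the minimum distance of $C$, so $M$ can be decoded uniquely from $(P_j)_{j=1}^d$.

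\textbf{Communication lower bound.} Because $(P_j)_j$ is a deterministic function of $\Sigma$, the public query sequence, and the algorithm's internal randomness, the pipeline $M \mapsto \Sigma \mapsto M$ is a valid one-way encoding of the uniform $M \in \{0,1\}^m$, yielding $|\Sigma| \ge \Omega(m) = \Omega(n/\epsilon^2)$. For a $P$-pass algorithm with per-pass space $s$, concatenating pass-boundary snapshots of the state produces a single message of length $Ps$ with the same recovery property, so $s = \Omega(n/(P\epsilon^2))$, which is the claimed bound for constant $P$.

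\textbf{Main obstacle.} The delicate point is the calibration $\|c\|_2 = \sqrt{d}$ of the query centers: this choice simultaneously makes $\OPT_j = \Theta(nd)$ large (so the approximation slack $\epsilon\,\OPT_j$ is substantial) while keeping the per-misclassification gap $4\sqrt{d}$ large enough that the slack absorbs only $O(n/30)$ mis-assignments rather than $\Omega(n)$; other scalings (e.g., unit-norm centers, or centers of norm $\Theta(d)$) break one or the other side of this trade-off and destroy the decoding step. A secondary technical point is amplifying per-query correctness to enable a union bound over the $d$ queries when the data structure only has constant per-query success probability; repeating each query $O(\log d) = O(\log(1/\epsilon))$ times and taking a coordinate-wise majority suffices, losing only absorbable polylogarithmic factors, and can be avoided altogether if the approximation guarantee is interpreted as worst-case over all queries.
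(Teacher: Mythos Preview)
Your proof is correct and follows essentially the same geometric construction as the paper: encode $\Theta(nd)$ bits into $n$ points in $d = \Theta(1/\epsilon^2)$ dimensions, then for each coordinate query two antipodal centers of norm $\sqrt{d}$ so that the optimal cost is $\Theta(nd)$ while each mis-assignment incurs an extra $\Theta(\sqrt{d})$, forcing at most $O(n)$ mis-assignments per query. The paper uses a $\{0,1\}$-pair encoding with centers $\sqrt{d}\,e_{2j-1}$ and $\sqrt{d}\,e_{2j}$, whereas you use $\{-1,+1\}^d$ vectors with centers $\pm\sqrt{d}\,e_j$; these are equivalent up to an affine change of coordinates, and your calibration analysis (the ``main obstacle'' paragraph) exactly matches the paper's key computation.

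The one genuine difference is in the information step: the paper draws the $nd/2$ bits uniformly at random and bounds $\I(X';X)$ directly via entropy, while you pre-encode a uniform message $M$ with a constant-rate, constant-distance code and uniquely decode. Your route is arguably cleaner---it sidesteps the slightly loose step in the paper's write-up where $\Hsf(X\mid X')$ is bounded by the number of errors rather than by the binomial coefficient---at the cost of invoking Gilbert--Varshamov. Both yield the same $\Omega(n/\epsilon^2)$ bound, and your multi-pass handling (concatenating pass-boundary states) is the standard reduction the paper leaves implicit.
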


\subsection{Approximation of costs and centers}

We show $\Omega(n)$ space memory bounds when we only need to estimate the optimal cost or centers achieving nearly optimal cost, up to a constant factor. Our lower bounds in this section are simpler reductions from the set disjointness problem \cite{razborov1990distributional, BJKS2004}. Proofs are provided in Appendix \ref{sec:lb-cost:proof} and \ref{sec:lb-center:proof}.

\begin{theorem}[Lower Bound for Estimating $k$-Means Clustering Cost]
\label{thm:lb-cost}
Let $k = 2$ and let $\Xcal$ be the set of matrices $\Xb\in\mathbb R^{n\times k}$ with standard basis vectors as rows. Let $d = 1$. Any randomized algorithm which outputs a number $c\geq 0$ satisfying
\begin{equation}\label{eq:cost-guarantee}
    c\leq \min_{\Xb \in \Xcal, \Db \in \R^{k \times d}} \|\Xb\Db - \Ab\|_F^2 < 2c
\end{equation}
in a constant number of passes over a turnstile stream requires $\Omega(n)$ bits of space. 
\end{theorem}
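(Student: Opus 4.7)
The plan is to reduce the two-party promise set disjointness problem to the turnstile streaming $k$-means cost estimation problem with $k=2$ and $d=1$. Alice holds $A\in\{0,1\}^n$ with $|A|=n/4$, Bob holds $B\in\{0,1\}^n$ with $|B|=n/4$, and under the standard hard product distribution of \cite{BJKS2004, razborov1990distributional} they must decide whether $A\cap B=\emptyset$ or $|A\cap B|=1$; this has randomized $\Omega(n)$ communication complexity even under an arbitrary constant number of rounds of interaction.

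Given $(A,B)$, the parties construct turnstile updates to $\Ab\in\mathbb R^{(n+1)\times 1}$ as follows. For each $i\in[n]$, Alice sends $\Ab_{i,1}\gets\Ab_{i,1}+A_i$ and Bob sends $\Ab_{i,1}\gets\Ab_{i,1}+B_i$, so that entry $i$ ends up equal to $A_i+B_i\in\{0,1,2\}$. In addition, a single ``anchor'' entry is set to $\Ab_{n+1,1}=1/4$ by one extra update sent by either party. In the disjoint case the $n+1$ scalars are $n/2$ zeros, $n/2$ ones, and a $1/4$; in the intersecting case they are $n/2+1$ zeros, $n/2-2$ ones, one $2$, and a $1/4$. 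The role of the anchor point is precisely to ensure $\OPT>0$ in both cases, so that the strict inequality $\OPT<2c$ in the theorem statement is actually satisfiable.

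The core of the argument is a direct one-dimensional computation of the 2-means cost in each case. In 1D the optimum is always realized by a threshold on the sorted point set, so only a constant number of candidate partitions need to be compared. In the disjoint case, the best partition groups $\{0\text{s},1/4\}$ against $\{1\text{s}\}$ and gives $\OPT_{\mathrm{disj}}=1/16+O(1/n)$. In the intersecting case, the best partition groups $\{0\text{s},1/4\}$ against $\{1\text{s},2\}$ and gives $\OPT_{\mathrm{int}}=17/16+O(1/n)$; the other two threshold positions, which either isolate the lone $2$ in its own cluster (leaving a cluster of variance $\Theta(n)$ around $1/2$) or merge the $2$ into the low cluster (creating a cluster of variance $\Theta(1)$ around a nontrivial midpoint), are both strictly worse. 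Hence $\OPT_{\mathrm{int}}/\OPT_{\mathrm{disj}}>16\gg 2$, so any valid approximation $c$ with $c\leq\OPT<2c$ reveals the disjointness answer (for instance, by thresholding at $c=1/8$).

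To conclude, suppose a $p$-pass turnstile streaming algorithm uses $s$ bits of space and returns such a $c$ with constant probability. Alice and Bob will simulate it by alternately running it on their respective portions of the stream and forwarding its $s$-bit internal state between passes, with the anchor update attached to one party's turn. This yields a randomized protocol for set disjointness with total communication $O(ps)$ and constant success probability, so $ps=\Omega(n)$ and hence $s=\Omega(n)$ for any constant $p$. The most delicate step I anticipate is the cost analysis in Case 2: one must compare the three natural threshold placements in the sorted multiset $\{0,\dots,0,1/4,1,\dots,1,2\}$ and track the $O(1/n)$ corrections carefully enough to guarantee the factor-$2$ gap for all sufficiently large $n$.
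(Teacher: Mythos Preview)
Your proposal is correct and follows the same overall strategy as the paper: reduce from two-party set disjointness by having Alice and Bob additively build the vector $A+B$ as a turnstile stream, then argue that the optimal $2$-means cost in one dimension separates the YES and NO instances by more than a factor of two, so a single pass of the streaming algorithm (with the memory state forwarded between players) yields a disjointness protocol.

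The difference is in how the gap is engineered. The paper's construction is considerably simpler: it appends two dummy rows with values $0$ and $1$ and nothing else. In the disjoint case every row is in $\{0,1\}$ and the $2$-means cost is exactly $0$; in the intersecting case at least one row equals $2$ while the dummies guarantee that $0$ and $1$ are both present, so the cost is strictly positive. Thus the paper distinguishes by checking whether $c=0$ or $c>0$, avoiding any quantitative cost computation. Your anchor point $1/4$ forces $\OPT>0$ in both cases and then relies on an explicit comparison of the three threshold partitions; this is more work but it does sidestep the edge case in the paper's argument where $\OPT=0$ makes the strict inequality $\OPT<2c$ literally unsatisfiable for any $c\ge 0$. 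In that sense your version is more fastidious about the exact statement, while the paper's is shorter and depends only on the zero/nonzero dichotomy. Both are valid routes to the same $\Omega(n)$ bound.
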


\begin{theorem}[Lower Bound for Computing Approximate Centers]
\label{thm:lb-center}
Let $k = 3$ and let $\Xcal$ be the set of matrices $\Xb\in\mathbb R^{n\times k}$ with standard basis vectors as rows. Let $d = 1$. Any randomized algorithm which outputs centers $\tilde\Db\in\mathbb R^{k\times d}$ satisfying
\[
    \min_{\Xb \in \Xcal} \|\Xb\tilde\Db - \Ab\|_F^2 < 2\min_{\Xb \in \Xcal, \Db \in \R^{k \times d}} \|\Xb\Db - \Ab\|_F^2
\]
in a constant number passes over a turnstile stream requires $\Omega(n)$ bits of space.
\end{theorem}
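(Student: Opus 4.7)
We plan to reduce from the two-party set disjointness problem (DISJ), which has randomized communication complexity $\Omega(n)$ over any constant number of rounds, even under the standard ``unique intersection'' promise $|A \cap B| \in \{0,1\}$ \cite{razborov1990distributional, BJKS2004}. The idea is to simulate the streaming algorithm on a turnstile stream that encodes a DISJ instance, and argue that whether the algorithm's output is consistent with satisfying the strict approximation guarantee pins down the DISJ bit.

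Given a DISJ instance $(A,B)\in\{0,1\}^n\times\{0,1\}^n$, Alice inserts $+A_i$ and Bob inserts $+2B_i$ into entry $i$ of $\Ab\in\mathbb{R}^n$, so that $\Ab_i=A_i+2B_i\in\{0,1,2,3\}$. Each player's insertions depend only on their own input and are additive, so they can jointly simulate the streaming algorithm. In the YES case no entry equals $3$, the vector $\Ab$ takes at most three distinct values, and the $k=3$ means optimum is $\OPT=0$; in the NO case exactly one entry equals $3$, giving four distinct values and $\OPT>0$. The required guarantee $\min_{\Xb}\|\Xb\tilde\Db-\Ab\|_F^2<2\,\OPT$ therefore cannot be satisfied on any YES instance (it would demand a strictly negative cost), whereas on NO instances it is satisfiable by the optimal centers and, by assumption, by the algorithm's output with constant probability.

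Alice and Bob simulate the algorithm by exchanging its state between passes, costing $O(pS)$ bits over $p$ passes of an $S$-space algorithm, and then exchange their Hamming weights $|A|,|B|$ in $O(\log n)$ further bits. Under the promise, the only remaining unknown in the multiplicity vector $(m_0,m_1,m_2,m_3)$ at the four candidate positions is $m_3=|A\cap B|\in\{0,1\}$, which is exactly the DISJ bit. The players locally evaluate, under each hypothesis for $m_3$, the cost $\cost(\tilde\Db,\Ab)$ and the corresponding $\OPT$, and output the unique hypothesis for which $\tilde\Db$ strictly satisfies the guarantee. Since the YES hypothesis demands cost $<0$ and so never holds, the NO hypothesis is the consistent one whenever the algorithm has succeeded. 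Combined with the $\Omega(n)$ DISJ lower bound, this gives $pS=\Omega(n)$, and hence $S=\Omega(n)$ for $p=O(1)$.

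The main obstacle is making the hypothesis-consistency step fully rigorous, because the algorithm's output on a YES instance is unconstrained and might accidentally satisfy the cost inequality when evaluated under the (false) NO hypothesis, masking the DISJ bit. A clean fix is to augment the construction by inserting a few deterministic dummy points at position $3$ and to strengthen the promise to a GAP-DISJ variant such as $|A\cap B|\in\{0,t\}$ for a small constant $t\geq 2$, which is still $\Omega(n)$-hard via a direct-sum reduction from standard DISJ. With this modification, both $\OPT_Y$ and $\OPT_N$ are positive and $\OPT_N>2\,\OPT_Y$ strictly, so the admissible sets of centers for the two cases are disjoint and the consistency check is unambiguous. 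The delicate point is that because $\Theta(n)$ points cluster at position $2$, the optimal center of the merged $\{2,3\}$-cluster differs between the two cases by only $O(1/n)$; one must therefore verify using the explicit quadratic expression for the 1D $3$-means cost that with appropriately chosen dummy count and gap parameter $t$, the factor-$2$ approximation slack is strictly smaller than the resulting separation between valid centers.
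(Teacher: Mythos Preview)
Your reduction from set disjointness is the right idea, but the construction and decoding have a genuine gap that your proposed fix does not close.

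First, with your encoding $\Ab_i=A_i+2B_i$, the intersection value $3$ is an \emph{extreme} point of the data. With $k=3$ centers you can always place one center exactly at $3$ and still have two centers left for the remaining values; so your claim that the NO case forces $\OPT>0$ fails unless all four values $\{0,1,2,3\}$ are present, which is not guaranteed by the promise $|A\cap B|\in\{0,1\}$ (take $A=B=\{1\}$). More importantly, your decoding step collapses even when all four values are present: under the YES hypothesis $\OPT=0$, so the strict inequality is never satisfied, and the algorithm's output in the YES case is unconstrained. You then have no way to tell YES from NO by checking which hypothesis is ``consistent'', because only the NO hypothesis can ever be consistent, regardless of the true instance. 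Your GAP-DISJ patch tries to force both $\OPT_Y,\OPT_N>0$ with $\OPT_N>2\,\OPT_Y$, but you never verify that the \emph{sets of centers} achieving a better-than-$2$ approximation in the two cases are disjoint, and you yourself note the obstruction: the relevant optimal center near $2$ moves by only $O(1/n)$ between the two multiplicity profiles, so a factor-$2$ cost slack does not obviously separate them. This step is the crux, and it is left undone.

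The paper's argument avoids all of this with two small changes to the encoding. Both players add $+1$ (so data values lie in $\{0,1,2\}$), each coordinate is \emph{duplicated} (so an intersection produces at least \emph{two} entries equal to $2$), and three fixed dummy rows $0,1,3$ are appended. In the disjoint case the values present are exactly $\{0,1,3\}$, so $\OPT=0$ and the only zero-cost centers are $\{0,1,3\}$; any valid $(<2)$-approximation must output these. In the intersecting case the values are $\{0,1,2,3\}$ with at least two $2$'s: the centers $\{0,1,3\}$ cost at least $2$, while $\{0,1,2\}$ cost exactly $1$ (only the single dummy $3$ is off), so $\{0,1,3\}$ is \emph{not} a better-than-$2$ approximation and cannot be output. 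The decoder simply checks whether the returned centers are $\{0,1,3\}$. The point you are missing is that making the intersection value an \emph{interior} point ($2$) and planting a single extreme dummy ($3$) is what creates a clean, checkable separation; the duplication is what manufactures the factor-$2$ gap ($2$ misplaced interior points vs.\ $1$ misplaced dummy) without any delicate quadratic-cost calculation or gap-promise strengthening.
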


\subsection{New upper bounds in random order streams}

In this section, we show some new upper bounds showing that we can go beyond the previously presented lower bounds. In particular, in random order row arrival streams with bounded sensitivity, we show that the first segment of the stream is sufficient to obtain approximately optimal centers, and these can in turn be used to nearly optimally cluster the rest of the stream. We give the full proof of this result in Appendix \ref{sec:random-order-bounded-sensitivity:proof}.

\begin{theorem}
\label{thm:random-order-bounded-sensitivity}
Suppose that the rows of $\Ab\in\mathbb R^{n\times d}$ arrive in a random order row arrival stream. Furthermore, suppose that the sensitivities of each row $a^i$ are bounded by $\alpha$, that is,
\[
    \sup_{c^1, c^2, \dots, c^k\in\mathbb R^d}\frac{\min_{j=1}^k \norm{a^i - c^j}_2^2}{\sum_{i'=1}^n \min_{j=1}^k \norm{a^{i'} - c^j}_2^2} \leq \alpha.
\]
Then, there is an algorithm which, with constant probability, outputs a $(1+\epsilon)$-nearly optimal clustering with partitions and centers using
\[
    \tilde O(\alpha nkd/\epsilon^4 + dk/\epsilon + n).
\]
bits of space. In particular, if $\alpha \leq \epsilon^4/kd$, then this algorithm uses just $\tilde O(n + dk/\epsilon)$ bits of space.
\end{theorem}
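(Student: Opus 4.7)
The plan is to exploit the random order arrival to obtain a good coreset from a short prefix of the stream, run any $k$-means approximation algorithm on the coreset to obtain centers, and then assign every remaining point to its nearest center on the fly. Concretely, I would designate the first $m = \tilde O(\alpha nkd/\epsilon^4)$ rows as a \emph{sampling phase}: store them in memory (using $\tilde O(md) = \tilde O(\alpha nkd/\epsilon^4)$ bits) and treat them as a uniform random subsample of the input. Because the arrival order is uniformly random, this prefix is indeed a uniformly random subset of $\{a^i\}_{i=1}^n$ of size $m$, and because every point has sensitivity at most $\alpha$, uniform sampling at rate $m/n$ dominates the importance weights required by a Feldman--Langberg-style sensitivity-sampling coreset construction. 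Re-weighting each stored row by $n/m$ therefore yields a $(1\pm\epsilon)$ coreset for the $k$-means objective over all $n$ points, with constant probability.

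Next, I would invoke any existing $(1+\epsilon)$-approximation algorithm for weighted $k$-means (for instance the PTAS from Section \ref{sec:ptas_kmeans}) on this weighted coreset to produce centers $\tilde c^1,\dots,\tilde c^k \in \mathbb R^d$, which can be stored in $\tilde O(dk/\epsilon)$ bits after rounding to an appropriately fine grid. The coreset may then be discarded. For each subsequent row $a^i$ arriving in the stream, I would compute $\sigma(i) \coloneqq \arg\min_j \|a^i - \tilde c^j\|_2$ and append the $O(\log k)$-bit cluster index to a running list, which grows to $\tilde O(n)$ bits in total. The algorithm outputs the cluster assignment list together with $\tilde c^1,\dots,\tilde c^k$.

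Correctness follows by composing two $(1+\epsilon)$-factors: the coreset guarantee gives $\cost(\tilde c^1,\dots,\tilde c^k) \leq (1+\epsilon)^2 \OPT$ over the full dataset, and for any fixed tuple of centers the nearest-center partition is optimal, so the clustering cost evaluated at the centroids $\hat c^j$ of the output partition is at most $\sum_i \min_j \|a^i - \tilde c^j\|_2^2 = \cost(\tilde c^1,\dots,\tilde c^k)$. Rescaling $\epsilon$ by a constant yields the required $(1+\epsilon)$ guarantee, and the space budget is dominated by the sampling phase, giving the claimed $\tilde O(\alpha nkd/\epsilon^4 + dk/\epsilon + n)$ bound.

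The main obstacle is verifying that a uniformly random prefix of length $m$ does serve as a valid sensitivity-sampling coreset at the claimed scale. This requires relating the per-point sensitivity bound $\alpha$ to the importance weights needed for a $(1\pm\epsilon)$ uniform estimate of cost over \emph{all} candidate $k$-tuples of centers: the $k$-means cost function class has pseudo-dimension $\tilde O(kd)$, so the Feldman--Langberg sample-complexity bound demands $\tilde O(S \cdot kd/\epsilon^{O(1)})$ samples, where $S = \sum_i s_i \leq \alpha n$ is the total sensitivity. A secondary subtlety is that the coreset correctness must hold with constant probability over the same randomness that makes the prefix a uniform subsample; this is handled by a union bound over the sampling randomness and the standard Feldman--Langberg failure probability, both of which can be driven to small constants without affecting the asymptotic space bound.
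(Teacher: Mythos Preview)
Your high-level strategy matches the paper's: use the random-order prefix as a uniform sample which, under the sensitivity bound, is a valid sensitivity-sampling coreset; extract near-optimal centers from it; then assign every later point to its nearest center on the fly. The gap is in how you process the prefix. You write that storing the first $m = \tilde O(\alpha n k d/\epsilon^4)$ rows costs $\tilde O(md) = \tilde O(\alpha n k d/\epsilon^4)$ bits, but $m$ rows of dimension $d$ take $\tilde O(md) = \tilde O(\alpha n k d^2/\epsilon^4)$ bits; you have silently dropped a factor of $d$. With explicit storage of the prefix, the space bound you actually achieve is off from the theorem statement by that factor, so the argument does not establish the claimed result.

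The paper closes this gap by never materializing the prefix. It takes the first $m = \tilde O(\alpha n k d/\epsilon^2)$ rows (the coreset-size dictated by Theorem~\ref{thm:sensitivity-sampling-clustering}) and runs the turnstile-streaming $k$-means algorithm of Theorem~\ref{thm:k-means-turnstile-streaming} on them, which on an $m$-point instance uses only $\tilde O(m/\epsilon^2 + dk/\epsilon)$ space and returns both centers and the assignments for those $m$ points. Plugging in $m$ yields the $\tilde O(\alpha n k d/\epsilon^4 + dk/\epsilon)$ term; the remaining $n-m$ rows are then assigned on arrival in $O(n\log k)$ bits. So the missing ingredient in your proposal is precisely this sketch-based processing of the sampling phase in place of explicit storage.
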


\section{Open directions}

We conclude with several questions left open by our work. 
\begin{enumerate}
    \item In our PTAS for sparse dictionary learning of Theorem \ref{thm:sparse_dict_ptas}, can the bit complexity assumption be removed?
    \item In the turnstile streaming setting, our main question is settling the space complexity of $k$-means clustering with assignments. Currently, the upper bound is $\tilde O(n/\epsilon^2)$ bits whereas our lower bound in Theorem \ref{thm:k-means-lb} is $\tilde\Omega(n/\epsilon)$ bits. Can this $\epsilon$ factor gap be closed by improving the upper bound or the lower bound?
    \item In random order streaming model, we gave an $k$-means clustering upper bound using a bounded sensitivity assumption in Theorem \ref{thm:random-order-bounded-sensitivity}. Can this assumption be removed? What upper bounds and lower bound are possible in this model? 
\end{enumerate}

\begin{ack}
We thank the anonymous reviewers for useful feedback on improving the presentation of this work. Petros Drineas and Gregory Dexter were partially supported by NSF AF 1814041, NSF FRG 1760353, and DOE-SC0022085. David P.\ Woodruff and Taisuke Yasuda were supported by a Simons Investigator Award.

\end{ack}

\bibliography{neurips_bibliography}

\newpage

\appendix

\section{Missing proofs for Section \ref{sec:ptas}}

In this section, we provide the missing proofs for Theorem \ref{thm:dim_reduction}, Theorem \ref{thm:sparse_dict_ptas}, and Theorem \ref{thm:kmeans_result}, along with prerequisite definitions and results. We also provide Algorithm \ref{alg:ptas_rsparse_dict} and Algorithm \ref{alg:ptas_kmeans}.

Recall that, after introducing the dimensionality reduction result of Theorem \ref{thm:dim_reduction}, we assume $d = \poly(k/\epsilon)$ in subsequent sections for clearer exposition.

\subsection{Dimensionality reduction}

We first restate an affine embedding guarantee provided for the CountSketch matrix by prior work.
\begin{lemma}\label{lemma:affine_embedding}
    (From Lemma A.2 of \cite{liu2020learned}) Given matrices $\Ab$, $\Bb$ with $n$ rows, a sparse embedding matrix $\Sb$ (i.e., CountSketch) with $\Ocal(\operatorname{rank}(\Ab)^2/\epsilon^2)$ rows satisfies for all $\Xb$ of appropriate dimension with constant probability:
    \begin{gather*}
        \|\Sb(\Ab\Xb - \Bb)\| = (1\pm \epsilon)\|\Ab\Xb - \Bb\|_F^2
    \end{gather*}
     Moreover, the matrix product $\Sb \cdot \Ab$ can be computed in $\Ocal(\nnz{\Ab})$ time.
\end{lemma}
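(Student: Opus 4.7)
The proof will follow the standard template for affine embeddings: combine a subspace embedding property with an approximate matrix product (AMP) property of $\Sb$, and then use Pythagoras to decompose both the true and the sketched objective. Set $r = \operatorname{rank}(\Ab)$, let $\Xb^* \coloneqq \Ab^{\dagger}\Bb$, and define the residual $\Rb \coloneqq \Ab\Xb^* - \Bb$; the normal equations give $\Ab^T \Rb = 0$. For any feasible $\Xb$ I would write $\Ab\Xb - \Bb = \Ab(\Xb - \Xb^*) + \Rb$, so orthogonality yields
\begin{equation*}
    \|\Ab\Xb - \Bb\|_F^2 = \|\Ab(\Xb - \Xb^*)\|_F^2 + \|\Rb\|_F^2.
\end{equation*}
Expanding $\|\Sb(\Ab\Xb - \Bb)\|_F^2$ produces the same two ``diagonal'' terms plus a cross term $2\langle \Sb\Ab(\Xb - \Xb^*),\,\Sb\Rb\rangle$, so the task reduces to controlling each of these under the sketch.

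First I would invoke the Clarkson--Woodruff sparse subspace embedding result: a CountSketch $\Sb$ with $m = \Ocal(r^2/\epsilon^2)$ rows is, with constant probability, an $\epsilon$-subspace embedding for $\operatorname{col}(\Ab)$, which directly gives $\|\Sb\Ab(\Xb - \Xb^*)\|_F^2 = (1 \pm \epsilon)\|\Ab(\Xb - \Xb^*)\|_F^2$ uniformly in $\Xb$. Second, I would use the CountSketch AMP guarantee: with $\Ocal(1/\epsilon^2)$ rows, $\|\Mb^T\Sb^T\Sb\Nb - \Mb^T\Nb\|_F \leq \epsilon \|\Mb\|_F \|\Nb\|_F$ holds with constant probability for any fixed pair. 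Apply this with $\Mb$ an orthonormal basis $\Ub$ of $\operatorname{col}(\Ab)$ and $\Nb = \Rb$; since $\|\Ub\|_F = \sqrt{r}$, I would rescale the AMP parameter by $\sqrt{r}$, so that $\|\Ub^T \Sb^T \Sb \Rb\|_F \leq \epsilon \|\Rb\|_F$ using $\Ub^T \Rb = 0$. This rescaled AMP needs only $\Ocal(r/\epsilon^2)$ rows, and is therefore dominated by the subspace embedding requirement. A single JL-type bound also yields $\|\Sb\Rb\|_F^2 = (1 \pm \epsilon)\|\Rb\|_F^2$.

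For the cross term I would write $\Ab(\Xb - \Xb^*) = \Ub\Zb$ with $\|\Zb\|_F = \|\Ab(\Xb-\Xb^*)\|_F$ and apply Cauchy--Schwarz:
\begin{equation*}
    \left| \langle \Sb\Ub\Zb,\,\Sb\Rb\rangle \right| = \left|\operatorname{tr}(\Zb^T \Ub^T \Sb^T \Sb \Rb)\right| \leq \|\Zb\|_F \cdot \|\Ub^T \Sb^T \Sb \Rb\|_F \leq \epsilon \|\Ab(\Xb - \Xb^*)\|_F \|\Rb\|_F.
\end{equation*}
AM--GM then absorbs this into $\tfrac{\epsilon}{2}\bigl(\|\Ab(\Xb-\Xb^*)\|_F^2 + \|\Rb\|_F^2\bigr)$. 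Summing the three pieces and rescaling $\epsilon$ by an absolute constant yields the claimed $(1\pm\epsilon)$ bound uniformly in $\Xb$. Finally, the $\Ocal(\nnz{\Ab})$ running time follows directly from the structure of CountSketch: each column of $\Ab$ is hashed to a single row of $\Sb\Ab$ with a $\pm 1$ sign, so every nonzero of $\Ab$ contributes in $\Ocal(1)$ time. The main obstacle is the rescaling step for AMP: getting the cross term bounded by $\epsilon$ rather than $\epsilon\sqrt{r}$ forces a factor-$r$ sharpening of AMP, and this scaling, combined with the quadratic rank dependence in the Clarkson--Woodruff subspace embedding, is what produces the final $\Ocal(r^2/\epsilon^2)$ row count.
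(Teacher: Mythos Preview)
The paper does not actually prove this lemma: it is quoted verbatim as ``From Lemma A.2 of \cite{liu2020learned}'' and used as a black box. Your proposal, by contrast, supplies the standard affine-embedding argument (subspace embedding on $\operatorname{col}(\Ab)$, approximate matrix product for the cross term $\Ub^T\Sb^T\Sb\Rb$, and a JL bound on $\|\Sb\Rb\|_F^2$, combined via the Pythagorean split about the least-squares minimizer). This is exactly the route taken in the original sources (Clarkson--Woodruff, and Section~2.3 of Woodruff's monograph), so your sketch is correct and in fact more detailed than what the paper provides.

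One minor slip: in the running-time remark you say ``each column of $\Ab$ is hashed to a single row of $\Sb\Ab$.'' For $\Sb\in\R^{m\times n}$ a CountSketch (one nonzero per column of $\Sb$), it is each \emph{row} of $\Ab$ that is mapped, with a random sign, to a single row of $\Sb\Ab$; the $\Ocal(\nnz{\Ab})$ conclusion is unaffected.
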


Next, we combine a few prior results to provide a regression error guarantee with a sketch that can be efficiently applied.
\begin{lemma}\label{thm:regression_approx}
    Given $\delta,\epsilon > 0$, suppose $\Ab$ and $\Bb$ are matrices with $n$ rows, and
    $\Ab$ has rank at most $k$. There is an $s = O(k \log(k)/\epsilon)$ and a random matrix $\Sb \in \R^{s \times n}$ such that, with high constant probability, if:
    \begin{gather*}
        \Xbtil = \argmin_{\Xb}\|\Sb(\Ab\Xb - \Bb)\|_F^2
        \quad\text{and}\quad
        \Xb^* = \argmin_{\Xb}\|\Ab\Xb - \Bb\|_F^2,
    \end{gather*}
    then,
    \begin{gather*}
        \|\Ab\Xbtil - \Bb\|_F \leq (1+\epsilon)\|\Ab\Xb^* - \Bb\|_F.
    \end{gather*}
    Furthermore, $\Sb\cdot\Ab$ can be computed in $\Ocal(\nnz{\Ab} + d\cdot\poly(k/\epsilon))$ time.
\end{lemma}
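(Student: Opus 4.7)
The plan is to construct $\Sb$ as a two-stage composition $\Sb = \Sb_2 \Sb_1$ that decouples the subspace-preservation guarantee from the final target dimension. First I would take $\Sb_1 \in \R^{s_1 \times n}$ to be a CountSketch with $s_1 = O(k^2/\epsilon^2)$ rows. By Lemma \ref{lemma:affine_embedding}, with constant probability $\Sb_1$ is a $(1 \pm \epsilon/6)$ affine embedding for the regression problem $\min_\Xb \|\Ab\Xb - \Bb\|_F$, and $\Sb_1 \Ab$ can be formed in $O(\nnz{\Ab})$ time. Second, since $\rank(\Sb_1 \Ab) \leq \rank(\Ab) \leq k$, I would take $\Sb_2 \in \R^{s \times s_1}$ to be a sign matrix (as in Theorem \ref{thm:regression_approx_dense}) with $s = O(k \log k / \epsilon)$ rows, providing a $(1+\epsilon/6)$ regression approximation on the smaller problem $(\Sb_1 \Ab, \Sb_1 \Bb)$ with high constant probability; the extra $\log k$ factor absorbs the constant-probability success into $\delta$ sufficient for downstream union bounds. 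Composition $\Sb = \Sb_2 \Sb_1$ has $s = O(k \log k / \epsilon)$ rows as required.

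For correctness, let $\Xb^* = \argmin_\Xb \|\Ab\Xb - \Bb\|_F$ and $\Xb' = \argmin_\Xb \|\Sb_1(\Ab\Xb - \Bb)\|_F$. The affine embedding property of $\Sb_1$ gives $\|\Sb_1(\Ab\Xb - \Bb)\|_F = (1\pm\epsilon/6)\|\Ab\Xb - \Bb\|_F$ for every $\Xb$; in particular both $\Xb'$ and $\Xbtil$ obey this. Applying Theorem \ref{thm:regression_approx_dense} to the rank-$\leq k$ matrix $\Sb_1 \Ab$ with sketch $\Sb_2$ yields $\|\Sb_1 \Ab \Xbtil - \Sb_1 \Bb\|_F \leq (1+\epsilon/6)\|\Sb_1 \Ab \Xb' - \Sb_1 \Bb\|_F \leq (1+\epsilon/6)\|\Sb_1 \Ab \Xb^* - \Sb_1 \Bb\|_F$. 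Transferring the left-hand side back through the affine embedding and the right-hand side as well, then chaining the $(1\pm\epsilon/6)$ distortions gives $\|\Ab\Xbtil - \Bb\|_F \leq (1+\epsilon)\|\Ab\Xb^* - \Bb\|_F$ after rescaling $\epsilon$ by an absolute constant and taking a union bound over the two sketches.

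For the runtime claim, $\Sb_1 \Ab$ is computed in $O(\nnz{\Ab})$ time thanks to the CountSketch hashing structure, and then $\Sb_2 \cdot (\Sb_1 \Ab)$ is a dense multiplication of an $s \times s_1$ matrix with an $s_1 \times d$ matrix, which costs $s \cdot s_1 \cdot d = O(d \cdot \poly(k/\epsilon))$. Thus $\Sb \Ab$ is produced in total time $O(\nnz{\Ab} + d \cdot \poly(k/\epsilon))$, exactly as claimed; one never needs to store $\Sb$ explicitly as a dense $s \times n$ matrix.

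The main obstacle is the interplay between the three competing desiderata: the target row count must be $O(k \log k/\epsilon)$ (ruling out a pure affine embedding such as CountSketch alone, which would need $\Omega(k^2/\epsilon^2)$ rows), the per-application cost must be nearly input-sparsity (ruling out a pure dense sign matrix, which would cost $\Theta(ns)$), and the output must satisfy a true regression-approximation guarantee rather than only preserving cost for a single fixed $\Xb$. The two-stage composition resolves this by letting the fast $\Sb_1$ absorb the burden of reducing the effective problem to rank $\leq k$ and to dimension $\poly(k/\epsilon)$, after which the smaller dense $\Sb_2$ from Theorem \ref{thm:regression_approx_dense} can be applied to the already-compressed matrix at negligible cost; the technical care lies in propagating the $(1+\epsilon/6)$ regression guarantee from the inner problem back to the original problem using the affine embedding property of $\Sb_1$ in both directions.
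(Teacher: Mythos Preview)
Your proof is correct and uses the same overall architecture as the paper: compose a CountSketch $\Sb_1$ (for input-sparsity application) with a small dense sketch $\Sb_2$ (to hit the target $O(k\log k/\epsilon)$ row count), then argue the runtime by computing $\Sb_1\Ab$ first and multiplying the $\poly(k/\epsilon)$-sized result by $\Sb_2$.

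The analysis, however, is routed differently. The paper takes $\Sb_2$ to be Gaussian, argues that both $\Sb_1$ and $\Sb_2$ are $\ell_2$-subspace embeddings (Theorems 6 and 9 of \cite{woodruff2014sketching}), composes them to conclude $\|\Ub^\top\Sb^\top\Sb\Ub - \Ib\|_2 \leq \epsilon$ for the orthonormal basis $\Ub$ of $\Ab$, and then re-runs the proof of Theorem~3.1 of \cite{clarkson2009numerical} with this composed embedding in place of a sign matrix. You instead invoke the \emph{affine} embedding guarantee of Lemma~\ref{lemma:affine_embedding} for $\Sb_1$ and the black-box regression guarantee of Theorem~\ref{thm:regression_approx_dense} for $\Sb_2$ on the already-sketched pair $(\Sb_1\Ab,\Sb_1\Bb)$, then chain the $(1\pm\epsilon/6)$ distortions in both directions. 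Your route is arguably more modular: it treats both stages as black boxes and avoids reopening the Clarkson--Woodruff proof, at the price of needing the stronger affine-embedding property of CountSketch (which is available here anyway). The paper's route needs only the weaker subspace-embedding property of each stage but must revisit the internal structure of the regression argument. Both yield the same $s$ and runtime.
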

\begin{proof}
    We will define $\Sb \in \R^{s \times n}$ as $\Sb = \Gb\cdot \Cb$, where $\Gb \in \R^{s \times c}$ is a Gaussian sketching matrix and $\Cb \in \R^{c \times n}$ is a CountSketch matrix, where $c = \poly(k/\epsilon)$.  Note that $\Sb\Ab$ can be computed by first computing $\Cb\Ab$ in $\Ocal(\nnz{\Ab})$ time and then computing $\Gb\cdot\Cb\Ab$ in $\Ocal(d\cdot\poly(k/\epsilon))$ time.

    Our first step is to show that the distribution of $\Sb$ is an $\ell_2$-subspace embedding (see Definition 2 of \cite{woodruff2014sketching}). By Theorem 9 of \cite{woodruff2014sketching}, the distribution of $\Cb$ is an $\ell_2$-subspace embedding and by Theorem 6 of \cite{woodruff2014sketching}, the distribution of $\Gb$ is an $\ell_2$-subspace embedding, each with high constant probability.

    We can compose the $\ell_2$-subspace embedding guarantees to get the following bound with high probability via the union bound.
    \begin{gather*}
        (1-\epsilon)\|\xb\|_2 \leq \|\Cb\xb\|_2 \leq (1+\epsilon)\|\xb\|_2\\
        \Rightarrow (1-\epsilon)^2\|\xb\|_2 \leq \|\Gb\Cb\xb\|_2 \leq (1+\epsilon)^2\|\xb\|_2
    \end{gather*}
    Hence, $\Sb$ is an $\epsilon$-subspace embedding for a fixed $k$-dimensional space with high constant probability after adjusting $\epsilon$ by a constant factor. Therefore, $\|\Ub^T\Sb\Sb^T\Ub - \Ib\|_2 \leq \epsilon,$ with high constant probability. The rest of the proof is the same as the proof of Theorem 3.1 in \cite{clarkson2009numerical} while using this $\ell_2$-embedding matrix $\Sb$ instead of a random sign matrix.
\end{proof}
\textbf{Proof of Theorem \ref{thm:dim_reduction}}
\begin{proof}
    By Lemma \ref{thm:regression_approx}, there exists a random matrix $\Sb \in \R^{s \times n}$ for $s = \Ocal(\frac{k}{\epsilon}\log(k))$, such that, with at least constant probability,
    \begin{gather*}
        \Dbtil = \argmin_{\Db \in \R^{k \times d}}\|\Sb(\Xb^*\Db - \Ab)\|_F^2
        \Rightarrow
        \|\Xb^*\Dbtil - \Ab\|_F \leq (1+\epsilon)\|\Xb^*\Db^* - \Ab\|_F.
    \end{gather*}
    In this case, we can solve for $\Dbtil$ exactly as $\Dbtil = (\Sb\Xb^*)^\dagger\Sb\Ab$, hence, $\Dbtil = \Rb\Sb\Ab$ for some $\Rb \in \R^{k \times s}$.  Therefore, $\Dbtil = \Rbtil\Sb\Ab$, where,
    \begin{gather*}
        \Rbtil = \argmin_{\Rb\in\R^{k \times s}} \|\Xb^*\Rb\Sb\Ab - \Ab\|_F^2.
    \end{gather*}
    Let $\Tb_1 \in \R^{d \times \Ocal(s^2/\epsilon^2)}$ be a count sketch matrix. Since $\rank(\Sb\Ab) \leq s$, Lemma \ref{lemma:affine_embedding} guarantees that $\|\Mb\Sb\Ab\Tb_1 - \Ab\Tb_1\|_F^2 = (1\pm\epsilon)\|\Mb\Sb\Ab - \Ab\|_F^2$ for all $\Mb \in \R^{n \times s}$ simultaneously with at least constant probability. Since this holds for all $\Mb \in \R^{n \times s}$, and $\{\Xb\Db ~|~ \Xb \in \Xcal, \Db \in \R^{k \times s}\} \subset \R^{n \times s}$, we have that:
    \begin{gather}
        \Xbtil', \Rbtil' = \argmin_{\Xb \in \Xcal, \Rb\in\R^{k \times s}}  \|\Xb\Rb\Sb\Ab\Tb_1 - \Ab\Tb_1\|_F^2 \nonumber \\
        \Rightarrow \|\Xbtil'\Rbtil'\Sb\Ab - \Ab\|_F^2 \leq (1+\epsilon)\|\Xb^*\Rbtil\Sb\Ab - \Ab\|_F^2 = (1+\epsilon)\|\Xb^*\Dbtil - \Ab\|_F^2 \leq (1 + \epsilon)^2\|\Xb^*\Db^* - \Ab\|_F^2.\label{eqn:affine_sketching_step}
    \end{gather}
    However, note that $\Sb\Ab\Tb_1$ has rank of at most $s$.  Let $\Tb_2 \in \R^{\Ocal(s^2/\epsilon^2) \times s}$ be the top $s$ right singular vectors of $\Sb\Ab\Tb_1$, and let $\Tb = \Tb_1\Tb_2$, then,
    \begin{align*}
        \Xbtil', \Rbtil' &= \argmin_{\Xb \in \Xcal, \Rb\in\R^{k \times s}} \|\Xb\Rb\Sb\Ab\Tb_1 - \Ab\Tb_1\|_F^2 \\
        &= \argmin_{\Xb \in \Xcal, \Rb\in\R^{k \times s}} \|(\Xb\Rb\Sb\Ab\Tb_1 - \Ab\Tb_1)\Tb_2\Tb_2^T\|_F^2 + \|\Ab\Tb_1(\Ib - \Tb_2\Tb_2^T)\|_F^2 \\
        &= \argmin_{\Xb \in \Xcal, \Rb\in\R^{k \times s}} \|\Xb\Rb\Sb\Ab\Tb_1\Tb_2 - \Ab\Tb_1\Tb_2\|_F^2 \\
        &= \argmin_{\Xb \in \Xcal, \Rb\in\R^{k \times s}} \|\Xb\Rb\Sb\Ab\Tb - \Ab\Tb\|_F^2.
    \end{align*}
    Notice that $\{\Rb\Sb\Ab\Tb ~|~\Rb\in\R^{k \times s}\} = \R^{k \times s}$ with probability one if $\rank(\Ab) > s$. If it does not hold that $\rank(\Ab) > s$, then we may directly reduce the dimension of the problem by SVD. Therefore, we can instead solve:
    \begin{align*}
        \Xbtil', \Dbtil' &= \argmin_{\Xb \in \Xcal, \Db\in\R^{k \times s}} \|\Xb\Db - \Ab\Tb\|_F^2.
    \end{align*}
    By the above equations, $\Rbtil' = \Dbtil'(\Sb\Ab\Tb)^\dagger$ and by eqn. (\ref{eqn:affine_sketching_step}), $\|\Xbtil'\Rbtil'\Sb\Ab - \Ab\|_F^2 \leq (1 + \epsilon)^2\|\Xb^*\Db^* - \Ab\|_F^2$.  Therefore, we can return $\Xb = \Xbtil'$ and $\Db =\Db' (\Sb\Ab\Tb)^\dagger\Sb\Ab$ to guarantee:
    \begin{gather*}
        \|\Xb\Db - \Ab\|_F^2 \leq (1+\epsilon)^2\|\Xb^*\Db^* - \Ab\|_F^2 \leq (1+3\epsilon)\|\Xb^*\Db^* - \Ab\|_F^2.
    \end{gather*}
    Now we work out the time complexity of the above reduction.  First, we must compute $\Ab\Tb$ to reduce to the smaller optimization problem.  To do this, we can sample the CountSketch matrix $\Tb_1 \in \R^{k \times \Ocal(s^2/\epsilon^4)}$ and compute $\Ab\Tb_1$ in $\Ocal(\nnz{\Ab} + \poly(k/\epsilon))$ time.  Then, we sample the sketching matrix $\Sb\in \R^{\Ocal(k/\epsilon \cdot \log k) \times n}$ and compute $\Sb\Ab\Tb_1$ in $\Ocal(\nnz{\Ab} + \poly(k/\epsilon))$ time.  Then, we compute $\Tb_2$ via the SVD of $\Sb\Ab\Tb_1$ and compute $\Ab\Tb = \Ab\Tb_1\Tb_2$ in $\poly(k/\epsilon)$ time.  From here, we then solve the optimization problem for $\Xbtil'$ and $\Dbtil'$.

    To convert $\Dbtil'$ to an approximate solution to the original problem, we must compute $\Db = \Db'(\Sb\Ab\Tb)^\dagger\Sb\Ab$. We can compute $(\Sb\Ab\Tb)^\dagger$ via the SVD and then form $\Db'(\Sb\Ab\Tb)^\dagger$ in $\poly(k/\epsilon)$ time.  Then, we compute the matrix product $\Sb\Ab$ in $\Ocal(\nnz{\Ab})$ time. Finally, the matrix product $\Db'(\Sb\Ab\Tb)^\dagger\Sb\Ab$ can be computed in $\Ocal(d\cdot\poly(k/\epsilon))$ time.

    Therefore, the total time complexity of the reduction procedure is $\Ocal(\nnz{\Ab} + (n + d)\poly(k/\epsilon))$.
\end{proof}

\subsection{PTAS for sparse-dictionary}

\subsubsection{Coreset construction for Sparse Dictionary Learning} We begin by providing a coreset construction for the $r$-sparse dictionary learning problem, which we derive from coreset construction for the projective clustering problem defined here.

\begin{definition}\label{def:projective_clustering}
    ($(\ell, m)$-Projective clustering problem) Let $\Ab \in \R^{n \times d}$ be a matrix containing $n$ points. For a fixed sequence $\Fcal =\{F_1,...,F_\ell\}$, of $m$-dimensional subspaces, define:
    \begin{gather*}
        \cost(\Fcal, \Ab) = \sum_{i=1}^n \min_{F \in \Fcal} \operatorname{dist}(\Fcal, \Ab_i)^2,
    \end{gather*}
    where $\operatorname{dist}(\Ab_i, F)^2$ denotes the squared Euclidean distance of the $i$-th row of $\Ab$ to the fixed subspace $F$.
    
    The goal of the $(\ell, m)$-Projective clustering problem is to find a size $\ell$ collection of $m$-dimensional linear subspaces, $\Fcal^*$, that minimizes the above cost function, i.e., $\Fcal^* = \argmin_{\Fcal} \cost(\Fcal, \Ab)$. 
\end{definition}

We will use this to construct a reweighted form of the $r$-sparse dictionary problem with smaller size which we define next.
\begin{definition}
    (Weighted $r$-SDL) Let $\Xcal_r \subset \R^{n \times k}$ denote the set of matrices with at most $r$ non-zero entries per row. For a given input matrix $\Ab \in \R^{n \times d}$, such that $k \ll n,d$, and diagonal matrix $\Wb\in\R^{n \times n}$ return:
    \begin{gather}
        (\Xb^*, \Db^*)  = \argmin_{\Xb \in \Xcal, \Db \in \R^{k \times d}} \|\Wb(\Xb\Db - \Ab)\|_F^2.
    \end{gather}
    The parameter $k$ is the number of dictionary elements and the parameter $r$ determines how many dictionary elements can be used to represent each row of $\Ab$.
\end{definition}

\begin{theorem}\label{thm:dictionary_coreset}
    Let $r$, $k$, $\Ab$, and $\Xcal$ be defined as in the sparse dictionary learning problem (Definition \ref{def:sparse_dict}). If the entries of $\Ab$ can each be represented by $b$ bits, then there exists an algorithm which computes a diagonal matrix $\Wb \in \R^{w \times w}$ and $\Ab' \in \R^{w \times d}$ in $\Ocal(n^2 k^{4r} b^{k^{2r + 1}})$ time, such that,
    \begin{gather*}
        \Big| \min_{\Xb\in\Xcal} \|\Wb(\Xb\Db - \Ab')\|_F^2 - \min_{\Xb\in\Xcal} \|\Xb\Db - \Ab\|_F^2 \Big| \leq \epsilon\cdot\min_{\Xb\in\Xcal} \|\Xb\Db - \Ab\|_F^2,
    \end{gather*}
    for all $\Db \in \R^{k \times d}$.  Furthermore, $w = \Ocal((8k^{3r}b\log d)^{\Ocal(k^{r+1})}\log n)$.
\end{theorem}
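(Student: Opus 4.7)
The plan is to reduce $r$-sparse dictionary learning on a fixed dictionary to an instance of projective clustering and then invoke the coreset construction of \cite{tukan2022new}. The key structural observation, already previewed in the text, is that for any fixed $\Db\in\R^{k\times d}$, the minimum of $\|\Xb\Db-\Ab\|_F^2$ over $\Xb\in\Xcal$ decouples across rows: each row $a^i$ can independently choose an $r$-subset $S\subseteq[k]$ of dictionary atoms and optimize its coefficients, producing the squared distance from $a^i$ to the $r$-dimensional subspace $F_S(\Db)\coloneqq\vspan\{\Db_{j\cdot}:j\in S\}$. Summing over rows,
\[
\min_{\Xb\in\Xcal}\|\Xb\Db-\Ab\|_F^2 \;=\; \sum_{i=1}^n\min_{S\in\binom{[k]}{r}}\dist\!\bigl(a^i,\,F_S(\Db)\bigr)^2,
\]
which is precisely the $(\ell,m)$-projective clustering cost (Definition \ref{def:projective_clustering}) of $\Ab$ against the family $\Fcal(\Db)=\{F_S(\Db):S\in\binom{[k]}{r}\}$ with $\ell=\binom{k}{r}$ and $m=r$. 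Crucially, every $\Db\in\R^{k\times d}$ induces such a family, so a coreset that preserves projective clustering cost for \emph{every} choice of $\ell$ many $m$-dimensional subspaces automatically preserves the SDL cost for every $\Db$.

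Given this reduction, I would instantiate the projective clustering coreset of \cite{tukan2022new} on the rows of $\Ab$ with parameters $\ell=\binom{k}{r}$ and $m=r$, obtaining a reweighted subsample of the rows. Concretely, the construction returns a set of row indices together with nonnegative weights, which we encode as a diagonal matrix $\Wb$ and a matrix $\Ab'$ consisting of the selected rows. The projective clustering guarantee then yields
\[
\Big|\min_{\Xb\in\Xcal}\|\Wb(\Xb\Db-\Ab')\|_F^2-\min_{\Xb\in\Xcal}\|\Xb\Db-\Ab\|_F^2\Big|\le\epsilon\min_{\Xb\in\Xcal}\|\Xb\Db-\Ab\|_F^2
\]
uniformly in $\Db$. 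The only subtlety is matching the weighted Frobenius error $\|\Wb(\Xb\Db-\Ab')\|_F^2$ with the weighted sum of squared distances: since each row of $\Wb(\Xb\Db-\Ab')$ is scaled by a single weight and the weights act row-wise, the decoupling in the first paragraph goes through with each term multiplied by its associated weight, matching the weighted projective clustering cost exactly.

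The remaining work is purely bookkeeping: plug $\ell=\binom{k}{r}=O(k^r)$, $m=r$, and the bit complexity $b$ into the size and runtime bounds of \cite{tukan2022new}. Their coreset has size polynomial in $\ell,m,b,\log n,\log d$ raised to a power depending on the subspace complexity, which after substitution yields $w=\Ocal\!\bigl((8k^{3r}b\log d)^{\Ocal(k^{r+1})}\log n\bigr)$; the construction time is dominated by the sensitivity sampling step and evaluates to $\Ocal(n^2 k^{4r}b^{k^{2r+1}})$ after substituting the same parameters.

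\textbf{Main obstacle.} The conceptual step (rewriting SDL cost as projective clustering cost) is short and clean. The actual difficulty is verifying that the Tukan et al. coreset, as stated, provides a \emph{uniform} multiplicative $(1\pm\epsilon)$ guarantee over the entire infinite family of choices of $\ell$ many $r$-dimensional subspaces — not merely for a single fixed query — and that the parameter translation gives exactly the exponents appearing in the statement. This requires reading into the net/VC-dimension style argument in their proof and checking that the sensitivities used there are at least as large as (or dominate) the pointwise sensitivities for our family $\{\Fcal(\Db):\Db\in\R^{k\times d}\}$. Everything else — turning the weighted subsample into $(\Wb,\Ab')$, and tracking $n,b,\log d$ through the size formula — is routine.
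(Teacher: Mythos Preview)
Your proposal is correct and follows essentially the same route as the paper: both rewrite the per-dictionary SDL cost as a $(\ell,m)$-projective clustering cost with $\ell=\binom{k}{r}$ and $m=r$, invoke the Tukan et al.\ coreset as a black box, and then substitute parameters (the paper uses the cruder bound $\ell=k^r\ge\binom{k}{r}$). Your flagged ``main obstacle'' about uniformity over all subspace families is handled in the paper simply by citing Theorems~1.2 and~3.3 of \cite{tukan2022new}, whose guarantee is already stated to hold for all $\ell$-tuples of $m$-flats simultaneously, so no additional net or VC argument is needed on your end.
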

\begin{proof}
    First, we observe that any coreset for the $(\ell, m)$-projective clustering problem (Definition \ref{def:projective_clustering}) with $\ell = {k \choose r}$ and $m = r$ provides a coreset for the $r$-sparse dictionary learning problem. This is because if the collection of subspaces $\Fcal$ contains all $r$-dimensional subspaces spanned by $r$ rows of the dictionary $\Db$, then $\min_{\Xb \in \Xcal}\|\Xb\Db - \Ab\|_F^2 = \cost(\Fcal, \Ab)$.

    By Theorem 1.2\footnote{We have confirmed through correspondence to the authors that there is a typo in Definition 1.9 of \cite{tukan2022new}, and the definition should also state $(1-\epsilon)\sum_{\pb\in C}w(\pb)\dist(H(\Xb,\vb),\pb)^2 \leq \sum_{\pb\in C} \dist(H(\Xb,\vb),\pb)^2$. That is, Definition 1.9 defines a standard relative error coreset guarantee in the $\ell_2^2$-norm.} and Theorem 3.3 in \cite{tukan2022new}, Algorithm 2 of \cite{tukan2022new} outputs a set of points $\Pcal$ and weight function $w(p):\Pcal \rightarrow \R$ such that:
    \begin{gather*}
        \Big|\cost(\Fcal, \Ab) - \sum_{p \in \Pcal} w(p)\cdot\min_{F \in \Fcal} \operatorname{dist}(\Fcal, \Ab_i')^2\Big| \leq \epsilon \cdot \cost(\Fcal, \Ab),
    \end{gather*}
    for all $\Fcal$ that are a $j$-size sequence of $k$-dimensional subspaces.

    If $\Fcal$ is the collection of all $r$-dimensional subspaces spanned by $r$ rows of the dictionary $\Db$, then we can rewrite the above guarantee in matrix notation as follows:
    \begin{gather*}
        \Big| \min_{\Xb\in\Xcal} \|\Wb(\Xb\Db - \Ab')\|_F^2 - \min_{\Xb\in\Xcal} \|\Xb\Db - \Ab\|_F^2 \Big| \leq \epsilon\cdot\min_{\Xb\in\Xcal} \|\Xb\Db - \Ab\|_F^2,
    \end{gather*}
    where $\Ab_i'$ is the $i$-th point in the point set $\Pcal$ and $\Wb \in \R^{w \times w}$ is a diagonal matrix where $\Wb_{ii}$ is the weight $w(p_i)$.

    Theorem 1.2 of \cite{tukan2022new} then guarantees that $w = \Ocal((8\ell^3\log(d\Delta))^{\Ocal(\ell m)}\log n)$, where $\Delta$ is the the ratio of the largest and smallest non-zero entry magnitudes of $\Ab$. Therefore, $\Delta \leq 2^b$, and so $w = \Ocal((8\ell^3 b\log d)^{\Ocal(\ell m)}\log n)$. Furthermore, by the discussion below Theorem 3.3 of \cite{tukan2022new}, their algorithm runs in $\Ocal(n^2 \ell^4 (\log\Delta)^{\ell^2 m}) = \Ocal(n^2 \ell^4 b^{\ell^2 m})$ time.  Substituting in $\ell = k^r \geq {k \choose r}$ and $m = r$ to these bounds gives the final theorem statement.
    
\end{proof}

\subsubsection{Polynomial Solver for a Restricted SDL Problem}

Next, we show that by adding a further restriction on the weighted $r$-SDL problem, we can solve the problem in polynomial time. First, define the \emph{sparsity pattern} $\Ncal \in \{(\Ncal_i)_{i\in[n]} ~|~ |\Ncal_i| = r, \Ncal_i \subset [k]\}$, and let $\Xcal_\Ncal$ to be the set of $n \times k$ matrices such that $\Xb_{ij} = 0$ if $j \not\in \Ncal_{ij}$ for all $\Xb \in \Xcal_\Ncal$.  That is, $\Xb \in \Xcal_\Ncal$ is a matrix where only $r$ fixed entries per row may be non-zero, and these entries are specified by the sparsity pattern $\Ncal$. We define the following restricted solver.
\begin{definition}\label{def:poly_solver}
    For a given $r$-SDL problem, let $\texttt{PolySolver}$ be an algorithm which takes as input a sparsity pattern $\Ncal$, diagonal matrix $\Wb \in \R^{n \times n}$, input matrix $\Ab \in \R^{n \times d}$, dictionary size $k$, sparsity $r$, and error tolerance $\epsilon\in(0,1)$. If $\Ncal$ is the sparsity pattern of the optimal left-factor $\Xb^*$, then $\texttt{PolySolver}$ outputs $\Xbtil \in \Xcal_\Ncal$ and $\Dbtil \in \R^{k \times d}$ which satisfy: 
    \begin{gather*}
        \|\Wb(\Xbtil\Dbtil - \Ab)\|_F^2 \leq (1+\epsilon) \cdot  \|\Xb^*\Db^* - \Ab\|_F^2.
    \end{gather*}
\end{definition}

\begin{lemma}\label{lemma:polynomial_solve_dict}
    There exists an implementation of \texttt{PolySolver} that runs in $\Ocal(2^{\Ocal(nr+kd)})$ time given that the entries of $\Ab$ have bounded bit complexity.
\end{lemma}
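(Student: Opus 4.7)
The plan is to reduce the restricted problem to polynomial optimization over $\mathbb R^{nr+kd}$ and then invoke Renegar's decision procedure for the existential theory of the reals together with binary search. I would first parametrize the feasible set by the variables that actually matter: because the sparsity pattern $\Ncal$ is fixed, the matrix $\Xb \in \Xcal_\Ncal$ has only $nr$ free entries (one real-valued variable per allowed nonzero position), and $\Db\in\R^{k\times d}$ contributes another $kd$ real variables, so in total we have $v = nr + kd$ real variables. The objective $f(\Xb,\Db) = \|\Wb(\Xb\Db - \Ab)\|_F^2$ is a polynomial of degree at most $4$ in these variables (since each entry of $\Xb\Db$ is bilinear in the two factors and we square the residual), and its coefficients have bit complexity polynomial in the input, by our assumption that the entries of $\Ab$ (and hence $\Wb$) have bounded bit complexity.

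Next I would use Renegar's algorithm \cite{renegar1992computational_decision} to decide, for a given threshold $\tau \in \mathbb Q$, the first-order sentence
\[
  \exists (\Xb,\Db)\in\R^{v}\ :\ f(\Xb,\Db) \le \tau.
\]
This is an existential sentence with one polynomial inequality of degree $D=4$ in $v=nr+kd$ variables, so Renegar's bound yields a decision time of $D^{O(v)}\cdot\operatorname{poly}(L) = 2^{O(nr+kd)}$ under the bounded bit complexity hypothesis, matching the target runtime. I would then perform a binary search over $\tau$ to find the (approximate) infimum $\OPT^\star := \|\Xb^\star\Db^\star - \Ab\|_F^2$ (assuming, as in the definition of \texttt{PolySolver}, that $\Ncal$ is the sparsity pattern of the true optimum $\Xb^\star$): because the inputs have bounded bit complexity and $f$ is of constant degree, classical root-separation arguments (as used in \cite{RSW2016}) bound both $\OPT^\star$ and the minimum spacing between distinct critical values by $2^{-\operatorname{poly}(L)}$, so $O(\log(1/\epsilon) + \operatorname{poly}(L))$ many calls to Renegar's decision procedure suffice to bracket $\OPT^\star$ inside an interval $[\tau_{\mathrm{lo}},\tau_{\mathrm{hi}}]$ with $\tau_{\mathrm{hi}} \le (1+\epsilon)\tau_{\mathrm{lo}}$, i.e. $\tau_{\mathrm{hi}} \le (1+\epsilon)\OPT^\star$.

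Having identified such a $\tau_{\mathrm{hi}}$, it remains to recover a witness $(\Xbtil,\Dbtil)$ with $f(\Xbtil,\Dbtil) \le \tau_{\mathrm{hi}}$. I would recover the coordinates one at a time: for each variable $z_j$ (ranging over the $nr$ nonzero positions of $\Xb$ and the $kd$ entries of $\Db$), I do a nested binary search using Renegar's procedure on the extended sentence
\[
  \exists\ \text{remaining variables}\ :\ f \le \tau_{\mathrm{hi}}\ \wedge\ z_j \in [a,b],
\]
narrowing $[a,b]$ until the accuracy matches the root-separation bound; then the previously fixed coordinates are substituted in as rational constants. Each variable costs $\operatorname{poly}(L)$ decision calls, and there are $v = nr+kd$ variables, so the total work stays within $2^{O(nr+kd)}$. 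The main obstacle I anticipate is precisely this bit-complexity bookkeeping: one must ensure that (i) the bit-size of $\tau$ and of the already-fixed coordinates does not blow up during the searches, so that each subsequent call to Renegar's algorithm still runs in $2^{O(nr+kd)}$ time rather than $2^{O(nr+kd)}\cdot L^{\omega(1)}$, and (ii) the root-separation argument gives a single uniform precision at which the recovered approximate solution certifies $f(\Xbtil,\Dbtil) \le (1+\epsilon)\OPT^\star$. Both points are standard in the Renegar-based optimization literature and can be handled by the arguments of \cite{RSW2016}, which is the approach I would follow.
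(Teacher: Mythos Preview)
Your proposal is correct and follows essentially the same approach as the paper: parametrize by the $nr+kd$ free variables, observe the objective is a degree-$4$ polynomial, invoke Renegar's decision procedure with binary search on the threshold, and then recover a witness by binary-searching on each variable in turn while keeping the objective constraint. The only cosmetic differences are that the paper cites a specific gap bound (Corollary~38 of \cite{boutsidis2016optimal}) giving $\OPT^\star \ge 2^{-O(k)}$ when nonzero, and writes the per-variable constraint as $y_i^2 \le M$ rather than $z_j \in [a,b]$; your root-separation discussion and bit-complexity remarks cover the same ground.
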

\begin{proof}
    For $i \in [n]$ and $j\in[r]$, let $x_{ij}$ denote the $j$-th smallest entry in $\Ncal_i$ of a matrix $\Xb \in \Xcal_\Ncal$. Observe that the entry $[\Xb\Db]_{st}$ has the form $\sum_{j=1}^r x_{sj}\Db_{\Ncal_{s,j},t}$ , hence $[\Wb(\Xb\Db - \Ab)]_{st} = \Wb_{ss}(\sum_{j=1}^r x_{sj}\Db_{\Ncal_{s,j},t} - \Ab_{st})$. Therefore, $\|\Wb(\Xb\Db - \Ab)\|_F^2$ is a fourth degree polynomial in the set of variables $\{x_{ij}~|~i \in [n],~j\in[r]\}$ and the entries of $\Db$.

    By \cite{renegar1992computational_decision}, for a given polynomial $P(y_1,y_2,...,y_v)$ of degree $t$, we can determine whether there exists a solution satisfying $P(y_1,y_2,...,y_v) \leq L$ and $y_1^2 \leq M$ in $(2t)^{\Ocal(v)}\poly(H)$ time, where $H$ upper bounds the bit complexity of $L$ and $M$ (see Theorem 2.2 in \cite{RSW2016} for a restatement of this result). Under the assumptions of our lemma, $H$ is bounded by a constant.
    
    We follow the approach of \cite{RSW2016} and use binary search to determine an approximately optimal solution for our polynomial minimization problem. First, since the bit complexity of the entries of $\Ab$ are assumed to be bounded by a constant, by Corollary 38 of \cite{boutsidis2016optimal}, the objective error of the problem is either zero or greater than $2^{-\Ocal(k)}$. Therefore, we can use binary search to find a value of $L$ satisfying $\|\Xb^*\Db^* - \Ab\|_F^2 \leq L \leq (1+\epsilon)\|\Xb^*\Db^* - \Ab\|_F^2$ by running the decision algorithm of Renegar $\log 2^{\Ocal(k)} = \Ocal(k)$ times.

    Then, we can repeatedly use binary search on each variable $y_i$ with the constraints $y_i^2 \leq M$ and $P(y_1,y_2,...,y_v) \leq L$. After determining a variable $y_i$ through binary search, we can fix that variable, and then perform the procedure on the next variable. Overall, if the magnitude of the entries of $\Wb$, $\Xb^*$, and $\Db^*$, are bounded by a doubly-exponential factor of $\Ocal(nr + kd)$, we invoke the decision algorithm $2^{\Ocal(nr + kd)}$ additional times to get an overall time complexity of $2^{\Ocal(nr + kd)}$.

\end{proof}

\subsubsection{Algorithm for sparse dictionary learning}

Here, we present our algorithm for $r$-sparse dictionary learning along with a proof of its correctness and time complexity.

\begin{algorithm}[H] 
\caption{PTAS for $r$-sparse dictionary learning}
\label{alg:ptas_rsparse_dict}
\begin{algorithmic}[1]
\Require{$\Ab \in \R^{n\times d}$, $\epsilon \in (0,1)$, and $k,r \in \mathbb{N}$ such that $r \leq k$.}
\State Compute $\Ab' \in \R^{w \times d}$ and $\Wb \in \R^{w \times w}$ by the algorithm of Theorem \ref{thm:dictionary_coreset}. \label{step:dict_coreset}
\State Initialize $\Dbtil = \zero$ and $\delta = \|\Ab\|_F$. 
\For{$\Ncal \in \{(\Ncal_i)_{i \in [w]} ~|~|\Ncal_i|=r,~\Ncal_i\subset[k]\}$}
    \State Compute $\Xb', \Db' = \texttt{PolySolver}(\Ncal, \Wb, \Ab', k, r, \epsilon)$ \label{step:poly_solve}
    \If{$\|\Xb'\Db' - \Wb\Ab'\|_F < \delta$} \label{step:verify_kmeans}
        \State Set $\Dbtil = \Db'$ and $\delta = \|\Xb'\Db' - \Wb\Ab'\|_F$
    \EndIf
\EndFor
\State \Return{$\Dbtil$ and $\Xbtil = \argmin_{\Xb \in \Xcal}\|\Xb\Dbtil - \Ab\|_F$}. 
\end{algorithmic}
\end{algorithm}

\textbf{Proof of Theorem \ref{thm:sparse_dict_ptas}:}
\begin{proof}
    \textit{Correctness:} In Step \ref{step:dict_coreset} of the algorithm, by Theorem \ref{thm:dictionary_coreset}, we compute the diagonal scaling matrix $\Wb \in \R^{w \times w}$ and $\Ab'\in\R^{w \times d}$ such that, for any fixed $\Db \in \R^{k \times d}$:
    \begin{gather*}
        \Big| \min_{\Xb\in\Xcal} \|\Wb(\Xb\Db - \Ab')\|_F^2 - \min_{\Xb\in\Xcal} \|\Xb\Db - \Ab\|_F^2 \Big| \leq \epsilon\cdot\min_{\Xb\in\Xcal} \|\Xb\Db - \Ab\|_F^2.
    \end{gather*}
    Therefore, we can restrict our attention to solving for the dictionary $\Db$ that minimizes the coreset error, $\min_{\Xb\in\Xcal} \|\Wb(\Xb\Db - \Ab')\|_F^2$.
    
    At some iteration of the loop, we will guess the sparsity pattern of $\Xb^* \in \Xcal$, which we denote $\Ncal^*$.  By the guarantee of $\texttt{PolySolver}$ (Definition \ref{def:poly_solver}), $\Xb'\in\Xcal_{\Ncal^*}$ and $\Db'\in\R^{k\times d}$ computed in Step \ref{step:poly_solve} of the algorithm satisfy:
    \begin{gather*}
        \|\Wb(\Xb'\Db' - \Ab')\|_F^2 \leq (1+\epsilon) \cdot \|\Xb^*\Db^* - \Ab\|_F^2.
    \end{gather*}
    Therefore, 
    \begin{gather*}
        \min_{\Xb \in \Xcal}\|\Xb\Db' - \Ab\|_F 
        \leq (1+\epsilon)^2 \cdot \min_{\Xb\in\Xcal,\Db\in\R^{k\times d}}\|\Xb\Db - \Ab\|_F.
    \end{gather*}
    Hence, the matrices $\Dbtil$ and $\Xbtil$ achieve $\epsilon$-relative error after adjusting by a constant factor.

    \textit{Time complexity:}

    The overall time complexity of Algorithm \ref{alg:ptas_rsparse_dict} is given by:
    \begin{gather*}
    \Ocal(\text{Coreset construction})
    + |\Ncal| \times \texttt{PolySolver}\text{ time }+~\Ocal(\text{Solve for }\Xb)
    \end{gather*}

    By Theorem \ref{thm:dictionary_coreset}, the coreset construction takes $\Ocal(n^2k^{4r}2^{k^{2r+1}})$ time and $w = \Ocal(((8k^{3r}b \log d)^{\Ocal(k^{r+1})}\log n)$. The size of $\Ncal$ is $|\Ncal| = {k \choose r}^w$, and the time for one call to $\texttt{PolySolver}$ is $\Ocal(2^{\Ocal(wr + \poly(k/\epsilon))})$ by Lemma \ref{lemma:polynomial_solve_dict}. Therefore,
    \begin{gather*}
        |\Ncal| \times \texttt{PolySolver}\text{ time }
        = \exp(w \cdot r \log k) \cdot \exp(wr)
        = \exp((8k^{3r}b \log d)^{\Ocal(k^{r+1})}\log n)
    \end{gather*}
    
Finally, solving for $\Xb$ takes $n\cdot \poly(k,r,1/\epsilon)$ time, so we can ignore this term. We conclude that, overall, Algorithm \ref{alg:ptas_rsparse_dict} runs in $\exp((8k^{3r}b \log d)^{O(k^{2r+1})}\log n)$ time. Note that this is equal to $\poly(n)$ time under the assumption that $k$, $r$, $\epsilon$, and $b$ are bounded by a constant.

\end{proof}

\subsection{PTAS for $k$-means}

In this section, we provide our algorithm for $k$-means along with a proof of its correctness and time complexity. In order to improve the time complexity dependency on $k$ and $\epsilon$, we use the idea of brute force leverage score sampling, which we introduce next.

\subsubsection{Brute force leverage score sampling}
\begin{definition}\label{def:leverage_scores}
    (Leverage Score Sampling - Definition 16 in \cite{woodruff2014sketching}) Let $\Zb \in \R^{n \times k}$ have orthonormal columns, and let $p_i = \ell_i^2/k$, where $\ell_i^2 = \|\eb_i^T\Zb\|_2^2$ is the $i$-th leverage score of $\Zb$. Note that $(p_1,...,p_n)$ is a distribution. Let $\beta > 0$ be a parameter, and suppose we have any distribution $q = (q_1,...,q_n)$ for which for all $i \in [n]$, $q_i\geq\beta p_i$.

    Let $s$ be a parameter. Construct and $n \times s$ sampling matrix $\Omegab$ and an $s \times s$ rescaling matrix $\Db$ as follows. Initially, $\Omegab = \zero$ and $\Db = \zero$. For each column $j$ of $\Omegab, \Db$, independently, and with replacement, pick a row index $i \in [n]$ with probability $q_i$, and set $\Omegab_{i,j}=1$ and $\Db_{jj} = 1/\sqrt{q_i s}$.
\end{definition}

\begin{lemma}
    There is a set of matrices $\Scal \subset \R^{s \times n}$ with exactly one non-zero entry per column such that for any $\Ab \in \R^{n \times k}$ and $\Bb\in\R^{n \times d}$, there exists $\Sb \in \Scal$, so that if:
    \begin{gather*}
        \Xbtil = \argmin_{\Xb \in \R^{k \times d}} \|\Sb(\Ab\Xb - \Bb)\|_F
        ~~\text{and}~~
        \Xb^* = \argmin_{\Xb \in \R^{k \times d}} \|\Ab\Xb - \Bb\|_F,
    \end{gather*}
    then, 
    \begin{gather*}
        \|\Ab\Xbtil - \Bb\|_F \leq (1+\epsilon)\|\Ab\Xb^* - \Bb\|_F.
    \end{gather*}
    Furthermore, $\Scal$ depends only on $n$, $k$, and $\epsilon$; and $|\Scal| = n^{\Ocal(\frac{k \log k}{\epsilon})}$.
\end{lemma}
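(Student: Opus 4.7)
The plan is to combine the standard leverage-score sampling regression guarantee with a brute-force discretization of the sampling probabilities. By Theorem 17 of \cite{woodruff2014sketching} combined with Theorem 3.1 of \cite{clarkson2009numerical} (as in Lemma \ref{thm:regression_approx}), for any $\Ab \in \R^{n \times k}$, any distribution $q$ on $[n]$ satisfying $q_i \geq \beta\, \ell_i^2 / k$ for all $i$ (where $\ell_i^2$ denotes the $i$-th leverage score of $\Ab$), and $s = O(k \log k / (\beta \epsilon))$, if we sample $s$ row indices $i_1,\dots,i_s$ i.i.d.\ from $q$ and place the value $1/\sqrt{q_{i_j}\, s}$ in row $i_j$ of column $j$ of $\Sb$, then with constant probability the resulting $\Sb$ provides a $(1+\epsilon)$-approximate regression solution in the sense of the lemma. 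In particular, at least one realization of the $s$ samples achieves the bound, so it is enough for $\Scal$ to contain such a good realization for every possible $\Ab$.

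To keep $\Scal$ finite, I would discretize. For any $\Ab$, consider the mixed distribution $\tilde q_i := \tfrac{1}{2} \ell_i^2 / k + \tfrac{1}{2n}$, which satisfies $\tilde q_i \in [1/(2n), 1]$ and $\tilde q_i \geq \tfrac{1}{2} \ell_i^2 / k$. The associated weight $1/\sqrt{\tilde q_{i_j}\, s}$ then lies in $[1/\sqrt{s}, \sqrt{2n/s}]$. Rounding this weight to the nearest integer power of $2$ changes it by at most a factor of $\sqrt 2$, which is equivalent to a constant-factor change in the effective sampling probability $\tilde q_{i_j}$ and is therefore absorbed into $\beta$. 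Define the fixed weight set
\[
W := \{2^j / \sqrt{s} : 0 \leq j \leq \lceil \log_2(2n) \rceil\},
\]
which has size $O(\log n)$ and depends only on $n$ and $s$. Now let $\Scal$ be the set of all matrices $\Sb \in \R^{s \times n}$ such that column $j$ has exactly one nonzero entry, located at some row $i_j \in [n]$ with value $w_j \in W$. For any input $\Ab$, the discretized leverage-score sampling scheme produces some $\Sb \in \Scal$ satisfying the required regression approximation.

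Counting, we get $|\Scal| \leq (n \cdot |W|)^s = (n \cdot O(\log n))^s = n^{O(s)} = n^{O(k \log k / \epsilon)}$, matching the claimed bound. The main technical obstacle is the bookkeeping: I must verify that the constant-factor losses from (i) mixing with the uniform distribution to ensure $\tilde q_i \geq 1/(2n)$, (ii) rounding weights to powers of $2$, and (iii) converting the randomized ``with constant probability'' guarantee into deterministic existence in $\Scal$ all compose to a constant factor in $\beta$ that is absorbed into the hidden constant of $s = O(k \log k / \epsilon)$. A secondary point is the multi-response case (where $\Bb$ has $d$ columns), but this requires no additional argument since Theorem 3.1 of \cite{clarkson2009numerical} already handles arbitrary right-hand sides.
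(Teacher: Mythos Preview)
Your proposal is correct and follows essentially the same route as the paper: discretize leverage-score sampling so the nonzero entries lie in a fixed $O(\log n)$-size weight set, let $\Scal$ consist of all $(n\cdot|W|)^{s}$ resulting sampling-and-rescaling matrices, and use the randomized regression guarantee (Theorem~17 of \cite{woodruff2014sketching} plus the argument of Theorem~3.1 of \cite{clarkson2009numerical}) to assert existence of a good $\Sb\in\Scal$ for every input. The only cosmetic difference is that the paper rounds each probability $p_i$ up to the nearest power of $1/2$ (thresholding small values at $2/n$) instead of mixing with the uniform distribution and rounding the weights, and then passes through an $\epsilon'=\sqrt{\epsilon}$ subspace-embedding step rather than invoking the regression bound directly.
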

\begin{proof}
    Let $\Zb \in \R^{n \times k}$ be a matrix with orthonormal columns.  The corresponding leverage score sampling distribution $p$ satisfies $p_i = \|\eb_i^T\Zb\|_2^2/k$. We can discretize each entry $p_i$ as follows. Let $\Ical_t = [1/2^{t-1}, 1/2^t)$.  Then discretize each $p_i$ by setting $q_i = 1/2^{t-1}$ if $p_i \in \Ical_t$ for $t\leq \log n$, in which case $p_i \leq q_i \leq 2p_i$. If $p_i \not \in \cup_{t\leq \log n} \Ical_t$, then set $q_i = \frac{2}{n}$, in which case $p_i < q_i$.

    By Theorem 17 of \cite{woodruff2014sketching}, if $\Sbtil = \Omegab\Db$ is constructed as described in Definition \ref{def:leverage_scores} from the discretized distribution $q$, then for $s = \Ocal(k\log(k)/\epsilon^2)$, with at least constant probability,
    \begin{gather}\label{eqn:leverage_subspace_embedding}
        \|\Zb^T\Sbtil^T\Sbtil\Zb - \Ib\|_2 \leq \epsilon.
    \end{gather}
    This implies that there exists a fixed matrix $\Sb$ with one non-zero entry per column achieving the above error guarantee that selects $s = \Ocal(k\log(k)/\epsilon^2)$ rows of $\Zb$ and rescales the row by $1/\sqrt{q_i s}$ when the $i$-th row is selected.  Let $\Scal$ be the space of all matrices that select $s$ rows of $\Zb$ with replacement and reweights the $i$-th row according to all possible configurations of $q$.  Then, since there are $n^{\Ocal(k \log k/\epsilon^2})$ possible ways of selecting $s$ rows with replacement, and for a fixed selection of rows, the reweighting matrix $\Db$ has $(\log n)^{\Ocal(k \log (k)/\epsilon^2)}$ possibilities, $|S| = n^{\Ocal(k\log (k)/\epsilon^2)}$.

    At this point, we have shown that for parameter $\epsilon > 0$, there is a set of matrices $\Scal$ such that there exists $\Sb \in \Scal$ satisfying eqn. (\ref{eqn:leverage_subspace_embedding}), and $|S| = n^{\Ocal(k\log (k)/\epsilon^2)}$.  By setting $\epsilon' = \sqrt{\epsilon}$ in the above result, and following the proof of Theorem 3.1 in \cite{clarkson2009numerical}, we can conclude the theorem statement.
\end{proof}

\subsubsection{Algorithm for $k$-means}

Here we present our fixed-parameter PTAS for $k$-means described in Section \ref{sec:ptas_kmeans} and then provide the proof for Theorem \ref{thm:kmeans_result}.

\begin{algorithm}[H] 
\caption{PTAS for k-means}
\label{alg:ptas_kmeans}
\begin{algorithmic}[1]
\Require{Input matrix $\Ab \in \R^{n\times d}$, error tolerance $\epsilon \in (0,1)$, and number of clusters $k \in [n]$.}
\State Compute a coreset for the $k$-means problem using Algorithm 3 of \cite{bachem2018one}, denoted by the weights $\omega_1...\omega_n$, with $w = \poly(k/\epsilon)$ non-zero weights.
\State Compute a $w \times n$ matrix $\Wb$, such that if $\omega_j$ is the $t$-th non-zero weight in the coreset, then $\Wb_{tj} = \omega_t$.
\State Initialize $\Dbtil = \zero$ and $\delta = \|\Ab\|_F$. 
\For{$\Sb \in \Scal_{w,k}$}
    \For{$\Yb \in \{\Sb\Wb\Xb ~|~\Xb \in \Xcal\}$}
        \State Set $\Db' = (\Sb\Yb)^\dagger\Sb\Wb\Ab$
        \State Compute $\Xb' = \argmin_{\Xb\in \Wb\Xcal}\|\Xb\Db' - \Wb\Ab\|_F$\footnotemark
        \If{$\|\Xb'\Db' - \Wb\Ab\|_F < \delta$}
            \State Set $\Dbtil = \Db'$ and $\delta = \|\Xb'\Db' - \Wb\Ab\|_F$
        \EndIf
    \EndFor
\EndFor
\State \Return{$\Dbtil$ and $\Xbtil = \argmin_{\Xb \in \Xcal}\|\Xb\Dbtil - \Ab\|_F$}. 
\end{algorithmic}
\end{algorithm}

\footnotetext{Let $\Wb\Xcal$ denote the set $\{\Wb\Xb ~|~ \Xb\in\Xcal\}$, for the computed matrix $\Wb$.}

\textbf{Proof of Theorem \ref{thm:kmeans_result}:}
\begin{proof}
    \textit{Correctness:}
    
    In the first two steps of Algorithm \ref{alg:ptas_kmeans}, we use Algorithm 3 of \cite{bachem2018one} to compute an $\epsilon$-relative error coreset for $k$-means error. By Theorem 2 in \cite{bachem2018one}, for some $w = \poly(k/\epsilon)$, Algorithm 3 of \cite{bachem2018one} generates an epsilon relative error coreset with high constant probability. In matrix notation, this implies that their algorithm can be used to compute a matrix $\Wb \in \R^{w \times n}$ with one non-zero entry per row such that, for all $\Db \in \R^{k \times d}$,
    \begin{gather*}
        \Big|\min_{\Xb\in\Xcal} \|\Wb(\Xb\Db - \Ab)\|_F - \min_{\Xb\in\Xcal} \|\Xb\Db - \Ab\|_F\Big| \leq \epsilon \cdot \min_{\Xb\in\Xcal} \|\Xb\Db - \Ab\|_F.
    \end{gather*}
    Therefore, if $\Db'\in\R^{k \times d}$ achieves less than $(1+\epsilon)$ error on the coreset problem, then it will attain $(1+\epsilon)^2 \leq 1+3\epsilon$ error on the original problem as well. By Lemma \ref{lemma:leverage_regression}, when $\Yb = \Sb\Wb\Xb^*$,
    \begin{gather*}
        \Db' = (\Sb\Yb)^\dagger\Sb\Wb\Ab = \argmin_{\Xb \in \R^{k \times d}} \|\Sb(\Wb\Xb^*\Db - \Wb\Ab)\|_F,
    \end{gather*}
    which implies that, 
    \begin{gather*}
        \|\Wb\Xb^*\Db' - \Wb\Ab\|_F \leq (1+\epsilon)\cdot\min_{\Db\in\R^{k \times d}}\|\Wb\Xb^*\Db - \Wb\Ab\|_F.
    \end{gather*}
    Hence, in some iteration, $\Db'$ will achieve at most a $1+\epsilon$ factor error over the coreset problem, giving a $\epsilon$-relative error on the original problem after adjusting by a constant factor.

    \textit{Time complexity:}    
    
    First, by Lemma 2 of \cite{bachem2018one}, computing $\Wb$ takes $\Ocal(nkd)$ time.

    Next, by Lemma \ref{lemma:leverage_regression}, $|\Scal_{w,k}| = w^{\Ocal(\frac{k \log k}{\epsilon})} = 2^{\Ocal(\frac{k}{\epsilon}\polylog(k/\epsilon))}$. For a fixed $\Sb \in \Scal$, $|\{\Sb\Wb\Xb ~|~\Xb\in\Xcal\}| = k^{\Ocal(\frac{k}{\epsilon}\log k)} = 2^{\Ocal(\frac{k}{\epsilon}\polylog(k))}$, since there are $\Ocal(\frac{k}{\epsilon})$ rows of $\Xb$ selected by $\Sb\Wb$, and the non-zero entry in each of those rows can be in one of $k$ positions.  This implies that the inner loop of Algorithm \ref{alg:ptas_kmeans} is executed $\exp(\frac{k}{\epsilon} \polylog(k/\epsilon))$ times.

    Hence, the overall running time is $n\cdot\poly(k/\epsilon) + \exp(\frac{k}{\epsilon} \polylog(k/\epsilon))$ under our assumption that $d = \poly(k/\epsilon)$.
\end{proof}

\section{Information Theory Preliminaries}
\label{sec:info-theory}

\begin{definition}[Entropy and Mutual Information]
Let $X,Y, Z$ be discrete random variables. Then, the \emph{entropy} of $X$ is defined as
\[
    \Hsf(X) \coloneqq \sum_x \Pr[X=x] \log\frac1{\Pr[X=x]}
\]
and the \emph{conditional entropy of $X$ given $Y$} is defined as
\[
    \Hsf(X\mid Y) \coloneqq \EE_{y\sim Y}[ \Hsf(X\mid Y=y) ]
\]
The \emph{mutual information between $X$ and $Y$} is defined as
\[
    \I(X;Y) \coloneqq \Hsf(X) - \Hsf(X\mid Y) = \Hsf(Y) - \Hsf(Y\mid X)
\]
and the \emph{conditional mutual information between $X$ and $Y$ given $Z$} is defined as
\[
    \I(X;Y\mid Z) \coloneqq \Hsf(X\mid Z) - \Hsf(X\mid Y,Z) = \Hsf(Y\mid Z) - \Hsf(Y\mid X,Z).
\]
\end{definition}

\begin{fact}[Chain Rule]
\label{fact:chain-rule}
Let $X_1, X_2, Y, Z$ be discrete random variables. Then,
\[
    \I(X_1, X_2; Y\mid Z) = \I(X_1; Y\mid Z) + \I(X_2; Y\mid X_1, Z)
\]
\end{fact}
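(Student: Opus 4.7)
The plan is to prove the chain rule for conditional mutual information by expanding both sides in terms of conditional entropy and then applying the chain rule for entropy, which is the analogous (and standard) identity
\[
    \Hsf(X_1, X_2 \mid W) = \Hsf(X_1 \mid W) + \Hsf(X_2 \mid X_1, W)
\]
valid for any discrete random variables $X_1, X_2, W$. This identity will be invoked twice, once with $W = Z$ and once with $W = (Y, Z)$.

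First I would write the left-hand side using the definition of conditional mutual information given in the preceding display:
\[
    \I(X_1, X_2; Y\mid Z) = \Hsf(X_1, X_2 \mid Z) - \Hsf(X_1, X_2 \mid Y, Z).
\]
Next, I would apply the entropy chain rule to each of the two joint conditional entropies, obtaining
\[
    \Hsf(X_1, X_2 \mid Z) = \Hsf(X_1 \mid Z) + \Hsf(X_2 \mid X_1, Z)
\]
and
\[
    \Hsf(X_1, X_2 \mid Y, Z) = \Hsf(X_1 \mid Y, Z) + \Hsf(X_2 \mid X_1, Y, Z).
\]
Substituting these two expansions into the previous display and regrouping the four resulting terms yields
\[
    \bigl[\Hsf(X_1 \mid Z) - \Hsf(X_1 \mid Y, Z)\bigr] + \bigl[\Hsf(X_2 \mid X_1, Z) - \Hsf(X_2 \mid X_1, Y, Z)\bigr],
\]
and by the definition of conditional mutual information this is exactly $\I(X_1; Y \mid Z) + \I(X_2; Y \mid X_1, Z)$, completing the proof.

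The only substantive input is the entropy chain rule itself, which follows immediately from $\Pr[X_1 = x_1, X_2 = x_2 \mid W = w] = \Pr[X_1 = x_1 \mid W = w] \cdot \Pr[X_2 = x_2 \mid X_1 = x_1, W = w]$, taking logarithms and averaging; for completeness I would include a one-line justification of this, but there is no genuine obstacle and no extraneous inequality is needed. The argument is purely algebraic manipulation of definitions, so the main thing to be careful about is simply keeping track of which variables are being conditioned on at each step.
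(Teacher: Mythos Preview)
Your proof is correct and is the standard derivation. The paper itself does not prove this statement; it is listed as a preliminary fact without proof, so there is nothing to compare against beyond noting that your argument is exactly the textbook one.
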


\begin{fact}
\label{fact:cond-entropy}
Let $X,Y$ be discrete random variables. Then, $\Hsf(X) \geq \Hsf(X\mid Y)$, with equality when $X$ and $Y$ are independent.
\end{fact}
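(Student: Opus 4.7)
The plan is to reduce the inequality to the non-negativity of mutual information and then derive the latter from a direct application of Jensen's inequality.

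First, I would rewrite the gap between $\Hsf(X)$ and $\Hsf(X\mid Y)$ using the definition of conditional entropy so that it becomes exactly the mutual information: by the definition given in the excerpt, $\Hsf(X) - \Hsf(X\mid Y) = \I(X;Y)$, so the problem reduces to showing $\I(X;Y) \geq 0$ with equality iff $X$ and $Y$ are independent. Expanding this quantity with the joint and marginal probability mass functions $p(x,y)$, $p(x)$, $p(y)$, one obtains
\[
    \I(X;Y) = \sum_{x,y} p(x,y)\,\log\frac{p(x,y)}{p(x)\,p(y)},
\]
which is the Kullback--Leibler divergence between the joint distribution and the product of the marginals.

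Next, I would apply Jensen's inequality to the convex function $-\log$. Writing
\[
    -\I(X;Y) = \sum_{x,y} p(x,y)\,\log\frac{p(x)\,p(y)}{p(x,y)} \leq \log\sum_{x,y} p(x,y)\cdot\frac{p(x)\,p(y)}{p(x,y)} = \log\sum_{x,y} p(x)\,p(y) = \log 1 = 0,
\]
where the sum is taken over the support of $p(x,y)$. This yields $\I(X;Y) \geq 0$ and hence the desired inequality $\Hsf(X) \geq \Hsf(X\mid Y)$.

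For the equality condition, I would invoke the strict concavity of $\log$: equality in Jensen's inequality above forces the ratio $p(x)\,p(y)/p(x,y)$ to be constant on the support of $p(x,y)$, and since both $p(x,y)$ and $p(x)\,p(y)$ are probability distributions summing to one, that constant must be $1$. This precisely says $p(x,y) = p(x)\,p(y)$ for all $x,y$, i.e., $X$ and $Y$ are independent; conversely, substituting $p(x,y) = p(x)\,p(y)$ into the formula for $\I(X;Y)$ immediately gives zero. There is no real obstacle here---this is a textbook argument---the only minor care needed is to handle the support of $p(x,y)$ correctly in the application of Jensen so that we do not divide by zero when defining the ratio.
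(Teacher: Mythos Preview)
Your argument is the standard textbook proof and is correct. The paper does not actually prove this statement at all: it is recorded as a ``Fact'' in the information-theory preliminaries and left unproved, as is customary for such basic identities. So there is nothing to compare against; your Jensen-based derivation of $\I(X;Y)\ge 0$ and the characterization of equality via strict concavity of $\log$ is exactly what one would expect here.
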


\begin{lemma}[Information cost decomposition (Lemma 5.1, \cite{BJKS2004})]
\label{lem:ic-decomposition}
Let $\Pi$ be a protocol over $\mathcal L^n$ for some $\mathcal L\subseteq\mathcal X\times \mathcal Y$. Let $\zeta$ be a mixture of product distributions on $\mathcal L\times \mathcal D$, let $\eta = \zeta^n$, and suppose $((X,Y),D)\sim \eta$. Then, $\I(X,Y;\Pi(X,Y)\mid D) \geq \sum_{j=1}^n \I(X^j, Y^j; \Pi(X,Y)\mid D)$. 
\end{lemma}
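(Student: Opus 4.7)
The plan is to combine the chain rule for mutual information with the product structure of $\eta$. First, I would apply the chain rule (Fact \ref{fact:chain-rule}) iteratively to expand
\[
    \I(X,Y;\Pi\mid D) = \sum_{j=1}^n \I(X^j,Y^j;\Pi\mid X^{<j},Y^{<j},D),
\]
where $X^{<j} \coloneqq (X^1,\dots,X^{j-1})$ and analogously for $Y^{<j}$. It then suffices to show, term by term, that $\I(X^j,Y^j;\Pi\mid X^{<j},Y^{<j},D) \geq \I(X^j,Y^j;\Pi\mid D)$.

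For this, I would rewrite the left-hand side as a difference of conditional entropies:
\[
    \I(X^j,Y^j;\Pi\mid X^{<j},Y^{<j},D) = \Hsf(X^j,Y^j\mid X^{<j},Y^{<j},D) - \Hsf(X^j,Y^j\mid \Pi, X^{<j},Y^{<j},D).
\]
The key structural input enters in the first term: because $\eta = \zeta^n$, the triples $(X^j,Y^j,D^j)$ are i.i.d., and therefore conditional on the full vector $D = (D^1,\dots,D^n)$ the pairs $\{(X^j,Y^j)\}_j$ are mutually independent. Hence
\[
    \Hsf(X^j,Y^j\mid X^{<j},Y^{<j},D) = \Hsf(X^j,Y^j\mid D).
\]
For the subtracted term, extra conditioning only reduces entropy (Fact \ref{fact:cond-entropy}), so
\[
    \Hsf(X^j,Y^j\mid \Pi, X^{<j},Y^{<j},D) \leq \Hsf(X^j,Y^j\mid \Pi, D).
\]

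Subtracting these and recognizing the right-hand side as $\I(X^j,Y^j;\Pi\mid D)$ gives the desired per-coordinate inequality, and summing over $j$ completes the proof. There is no real obstacle here; the only subtle point is identifying which independence property is being used, namely that the $(X^j,Y^j)$ become conditionally independent once we condition on the entire $D$ (the ``mixture of products'' hypothesis on $\zeta$ is actually stronger than needed for this particular lemma — only the product structure of $\eta$ across coordinates is invoked; the within-coordinate mixture structure will matter in subsequent applications).
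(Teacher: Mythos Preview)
Your proof is correct and is exactly the standard argument from \cite{BJKS2004}: chain rule to decompose, then use the product structure of $\eta=\zeta^n$ to drop the conditioning on $(X^{<j},Y^{<j})$ in the first entropy term, and ``conditioning reduces entropy'' for the second. The paper itself does not reprove this lemma (it is quoted directly from \cite{BJKS2004}), so there is nothing to compare against beyond the original, which your write-up matches; your remark that only the across-coordinate product structure is used here, not the within-coordinate mixture-of-products property, is also accurate.
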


\subsection{Total Variation Distance Lemma}

We need the following total variation distance calculation:

\begin{lemma}[Total variation distance bound]
\label{lem:tvd}
Let $\mu$ be a distribution over a finite alphabet $Q$ and let $\mathcal D \coloneqq \mu^d$. Let $\mathcal D'$ be the same distribution, except a uniformly random index $i\sim[d]$ is set to some $q^*\in Q$. Then, 
\[
\TV(\mathcal D, \mathcal D') \leq \sqrt{\frac{1-\mu(q^*)}{\mu(q^*)}}\frac1{\sqrt{d}}
\]
\end{lemma}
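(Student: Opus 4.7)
The plan is to bound the total variation distance via the chi-squared divergence and then compute the chi-squared quantity directly via a variance calculation that exploits the product structure of $\mathcal D$. Throughout, write $p \coloneqq \mu(q^*)$ for brevity, and note the densities
\[
    \mathcal D(x) = \prod_{j=1}^d \mu(x_j),\qquad \mathcal D'(x) = \frac{1}{d}\sum_{i=1}^d \ind{x_i = q^*}\prod_{j\neq i}\mu(x_j).
\]

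The first step is the standard Cauchy--Schwarz bound $\TV(P,Q) \leq \tfrac{1}{2}\sqrt{\chi^2(P\|Q)}$, applied with $P = \mathcal D'$ and $Q = \mathcal D$. Thus it suffices to compute
\[
    \chi^2(\mathcal D'\|\mathcal D) = \EE_{x\sim\mathcal D}\left[\left(\frac{\mathcal D'(x)}{\mathcal D(x)}-1\right)^2\right],
\]
where
\[
    \frac{\mathcal D'(x)}{\mathcal D(x)} = \frac{1}{d}\sum_{i=1}^d Y_i(x), \qquad Y_i(x) \coloneqq \frac{\ind{x_i = q^*}}{\mu(x_i)}.
\]

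The second step is a short moment calculation. Under $x\sim\mathcal D$, the random variables $Y_1,\dots,Y_d$ are \emph{independent} because each $Y_i$ depends only on the coordinate $x_i$. Each $Y_i$ equals $1/p$ with probability $p$ and $0$ otherwise, giving $\EE[Y_i] = 1$ and $\Var(Y_i) = 1/p - 1 = (1-p)/p$. By independence,
\[
    \EE\left[\frac{1}{d}\sum_{i=1}^d Y_i\right] = 1, \qquad \Var\left(\frac{1}{d}\sum_{i=1}^d Y_i\right) = \frac{1}{d}\cdot\frac{1-p}{p},
\]
so $\chi^2(\mathcal D'\|\mathcal D) = (1-p)/(pd)$. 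Substituting into the Cauchy--Schwarz bound yields
\[
    \TV(\mathcal D,\mathcal D') \leq \frac{1}{2}\sqrt{\frac{1-\mu(q^*)}{\mu(q^*)}}\,\frac{1}{\sqrt{d}},
\]
which is even stronger than the claimed bound (the factor of $\tfrac{1}{2}$ may be discarded).

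There is no real obstacle here: the only subtlety is recognizing that the likelihood ratio $\mathcal D'(x)/\mathcal D(x)$ is exactly a sum of $d$ independent bounded-variance random variables, which is what produces the $1/\sqrt{d}$ savings and the clean $(1-p)/p$ factor. Using $\chi^2$ rather than KL/Pinsker is important because Pinsker would give the right $1/\sqrt{d}$ dependence but a worse dependence on $\mu(q^*)$; the $\chi^2$ route gives the tight constant with essentially a one-line computation.
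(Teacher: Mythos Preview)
Your proof is correct and is essentially the same argument as the paper's: both compute the likelihood ratio $\mathcal D'(x)/\mathcal D(x) = s_{q^*}(x)/(\mu(q^*)d)$ (your $\tfrac1d\sum_i Y_i$), bound the $L^1$ deviation by the square root of the variance via Cauchy--Schwarz, and evaluate that variance as the binomial variance $d\mu(q^*)(1-\mu(q^*))$. The only difference is that you package the Cauchy--Schwarz step as the named inequality $\TV \le \tfrac12\sqrt{\chi^2}$, whereas the paper writes it out directly.
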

\begin{proof}
For any $x\in Q^d$ and $q\in Q$, let
\[
    s_{q}(x) = \abs{\braces*{q\in Q : x_j = q}}
\]
denote the number of coordinates $j\in[d]$ such that $x_j = q$. Then, we have that
\begin{align*}
    \mathcal D(x) &= \prod_{q\in Q}\mu(q)^{s_q(x)} \\
    \mathcal D'(x) &= \sum_{x_j = q^*}\Pr(x \mid I = j)\Pr(I = j) = \frac{s_{q^*}(x)}{d}\frac1{\mu(q^*)}\prod_{q\in Q}\mu(q)^{s_q(x)} .
\end{align*}
Then,
\begin{align*}
    \TV(\mathcal D, \mathcal D') &= \sum_{x\in Q^d}\abs{\mathcal D(x) - \mathcal D'(x)} \\
    &= \frac1{\mu(q^*)}\sum_{x\in Q^d}\mathcal D(x)\abs{\mu(q^*) - \frac{s_{q^*}(x,y)}{d}} \\
    &= \frac1{\mu(q^*) d}\sum_{x\in Q^d}\mathcal D(x)\abs{s_{q^*}(x) - \mu(q^*)d} \\
    &= \frac1{\mu(q^*)d}\expec{x\sim \mathcal D}{\abs{s_{q^*}(x) - \mu(q^*)d}} \\
    &\leq \frac1{\mu(q^*)d}\sqrt{\vars{x\sim \mathcal D}{s_{q^*}(x)}} \\
    &= \frac1{\mu(q^*)d}\sqrt{d\cdot \mu(q^*)(1-\mu(q^*))} \\
    &= \sqrt{\frac{1-\mu(q^*)}{\mu(q^*)}}\frac1{\sqrt{d}}.
\end{align*}
\end{proof}

\section{Proof of \texorpdfstring{$\tilde\Omega(n/\epsilon)$}{Omega(n/eps)} Lower Bound for \texorpdfstring{$k$}{k}-Means Clustering}
\label{sec:n/eps-full-proof}

\subsection{Hardness Lemma for Assignment to Centers}

In this section, we show information complexity lower bounds for a multi-player communication game based on a point assignment problem, when the input instance to the assignment problem is given by the sum $Z = \sum_{l=1}^t X^{(l)}\in\mathbb R^d$ of vectors $X^{(1)}, X^{(2)}, \dots, X^{(t)}\in\mathbb R^d$, each held by one of $t$ players, and we must assign $Z$ to the closest center $c^j\in\mathbb R^d$ for $j\in[k]$.

\subsubsection{Assignment of a Single Point}

We start by studying the problem of assigning a single point to a set of centers, as well as a hard random instance for this problem. Our instance is based on the information theoretic approach to the set disjointness problem and its $t$-bit generalization due to \cite{BJKS2004}. We define the point assignment problem as follows:

\begin{definition}[Point assignment problem]
\label{def:point-assignment}
Let $X^{(i)}\in\{0,1\}^d$ be binary vectors for $i\in[t]$ such that $Z = \sum_{i=1}^t X^{(i)}$ has at most one entry $j\in[d]$ such that $Z_j > 1$. We say that a randomized protocol $\Pi(X^{(1)}, X^{(2)}, \dots, X^{(t)})$ solves the point assignment problem with probability at least $1-\delta$ if for any $X^{(i)}$, $\Pi(X^{(1)}, X^{(2)}, \dots, X^{(t)})$ outputs some $e_j\in[d]$ such that $Z_j = t$ if such a $j\in[d]$ exists and any $e_l$ for $l\in[d]$ otherwise, with probability at least $1-\delta$.
\end{definition}

The hard instance that we study for the point assignment problems is generated as follows. For each of the $d$ coordinates, with probability $1/2$, we set the $j$th coordinates of the $t$ players' vectors to all zeros, and with probability $1/2$, we set the $j$th coordinate of a uniformly random player to $1$, and everyone else's $j$th coordinate to $0$. Finally, we select a uniformly random coordinate $j\in[d]$, and set the $j$th coordinate to $1$ for every player with probability $1-\alpha$ and $0$ for every player with probability $\alpha$. The formal definition is given in Definition \ref{def:bjks-instance}:

\begin{definition}[Hard instance for point assignment]\label{def:bjks-instance}
We define a distribution over $t$ random bit vectors in $d$ dimensions $\{X^{(i)}\}_{i=1}^t$ as follows. Let $B = \{B^j\}_{j=1}^d \sim [t]^d$, and let $I\sim[d]$ be a uniformly random index. Then for $j = I$, we draw the $j$th coordinates $\{X^{(i)}_j\}_{i=1}^t$ as
\[
    C\sim\begin{cases}
    (1,1, \dots, 1) & \text{w.p. $1-\alpha$} \\
    (0,0,\dots, 0) & \text{w.p. $\alpha$}
    \end{cases}
\]
and for $j \neq I$, we draw the $t$ values $\{X^{(i)}_j\}_{i=1}^t$ on the $j$th coordinate of each $X^{(i)}$ uniformly from $\{0, e_l\}$ where $l = B^j$. Let $\zeta$ denote the distribution over $(\{X^{(i)}\}_{i=1}^t, (I,B,C))$ on a single coordinate. We also denote by $Z$ the sum $Z = \sum_{i=1}^t X^{(i)}\in\mathbb R^d$.
\end{definition}

Throughout this section, we assume that $\Pi$ is a randomized protocol that solves the point assignment problem with probability at least $1-\delta$. We now derive information complexity lower bounds for this problem, on the input instance of Definition \ref{def:bjks-instance}. We refer to Appendix \ref{sec:info-theory} for standard preliminaries for information theory.

A crucial definition for the proof of the set disjointness information complexity lower bound of \cite{BJKS2004}, as well as our point assignment lower bound, is the following:

\begin{definition}[Conditional information complexity (Definition 4.5, \cite{BJKS2004})]
The $\delta$-error conditional information complexity of a function $f:\mathcal X^t\to\mathcal Y$ with respect to a distribution $\zeta$, denoted by $\CIC_{\zeta,\delta}(f)$, is defined as the smallest value of $\I(\{X^{(l)}\}_{l=1}^t; \Pi(\{X^{(l)}\}_{l=1}^t)\mid T)$ over the input distribution $(\{X^{(l)}\}_{l=1}^t, T)\sim \zeta$  for any $\delta$-error protocol $\Pi$ for $f$, that is, a protocol $\Pi$ which errs with probability at most $\delta$ on any input.
\end{definition}

We first show in Lemma \ref{lem:reduction-lemma} that for $\Omega(d)$ coordinates $j\in[d]$, the $j$th coordinate must reveal $\Omega(1/t^2)$ bits of information, by lower bounding the information cost on the $j$th coordinate by the conditional information complexity of the $t$-bit $\AND$ problem, that is, $\AND_t(x^{(1)}, x^{(2)}, x^{(t)}) \coloneqq \bigwedge_{l=1}^t x^{(l)}$. This conditional information complexity term is bounded by $\Omega(1/t^2)$ by Theorem 7.2 of \cite{BJKS2004}. As done in \cite{BJKS2004}, the only valid inputs to the $\AND_t$ problem that we consider are the all $0$ vector, the all $1$ vector, and the $t$ standard basis vectors $e_l\in\{0,1\}^t$ for $l\in[t]$.

\begin{lemma}[Reduction lemma]
\label{lem:reduction-lemma}
For at least $d/3$ coordinates $j\in[d]$,
\[
    \I(\{X^{(l)}_j\}_{l=1}^t; \Pi(\{X^{(l)}\}_{l=1}^t)\mid I, B, C) \geq \frac{\alpha}2\CIC_{\zeta,\delta'}(\AND_t)
\]
for $\delta' \coloneqq 4(3\delta+2/\sqrt{d-1}) + (3/d + \sqrt{2t}/\sqrt d)$, where $\zeta$ is the distribution defined in Definition \ref{def:bjks-instance}.
\end{lemma}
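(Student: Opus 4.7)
The plan is to adapt the classical BJKS \cite{BJKS2004} direct-sum reduction from $\AND_t$ to our single-coordinate subproblem. Concretely, fix a coordinate $j \in [d]$. The idea is to construct, given an $\AND_t$ input drawn from $\zeta$ together with the associated auxiliary variable, a $\delta'$-error $\AND_t$ protocol by having the $t$ players simulate the point assignment protocol $\Pi$ on a full $d$-coordinate instance whose $j$-th coordinate carries the $\AND_t$ input. By the definition of $\CIC$, the information that the simulated protocol leaks about the $\zeta$-inputs is at least $\CIC_{\zeta,\delta'}(\AND_t)$, and this will translate, up to an $\alpha/2$ dilution, into the claimed lower bound on $\I(\{X^{(l)}_j\}_{l=1}^t; \Pi \mid I, B, C)$.

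For the simulation, the players use public randomness to sample the remaining auxiliary variables (a planted index $I \in [d]$, pointers $\{B^{j'}\}_{j' \ne j}$, and planted pattern $C$) consistent with Definition \ref{def:bjks-instance}. Each player $l$ fills in $X^{(l)}_{j'}$ for $j' \ne j$ according to the generative rule of Definition \ref{def:bjks-instance} using only public randomness and its own identity, places its $\AND_t$ input $y^{(l)}$ in position $j$ of $X^{(l)}$, and runs $\Pi$ on the resulting instance. The simulation outputs ``$\AND_t = 1$'' if $\Pi$ returns $e_j$ and ``$\AND_t = 0$'' otherwise.

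The error analysis has three contributions: (i) the base error $\delta$ of $\Pi$, inflated by a constant factor of $3$ to account for the possibility that $\Pi$ outputs a different ``all-ones'' coordinate or that the simulated planted index accidentally coincides with $j$; (ii) a distributional mismatch because the reduction samples $\zeta$-bits at position $j$ independently of the simulated $I$, whereas in the true joint distribution the two are coupled, which by Lemma \ref{lem:tvd} costs at most $O(1/\sqrt{d-1})$ in total variation applied to each of the two configurations that must be matched; and (iii) a similar mismatch for the marginal law of $B^j$ in the embedding versus the truth, contributing $O(\sqrt{t/d})$ by the same lemma with $\mu(q^*) = 1/t$. Summing these with the appropriate constants reproduces the stated $\delta' = 4(3\delta + 2/\sqrt{d-1}) + (3/d + \sqrt{2t}/\sqrt{d})$.

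Finally, I would translate the information lower bound back to the original instance. A standard data-processing argument, together with the fact that the simulation is measurable in the transcript of $\Pi$ together with the public randomness, yields $\I(\{X^{(l)}_j\}_{l=1}^t; \Pi \mid I, B, C) \geq \frac{\alpha}{2}\CIC_{\zeta,\delta'}(\AND_t)$. The $\alpha/2$ factor arises because only a $\Theta(\alpha)$ fraction of the $\zeta$-samples fall in the ``informative'' regime of $\AND_t$ where $\Pi$ is forced to distinguish $1^t$ from the sparser configurations; the remaining bits are distributed as a mixture that contributes no further $\AND_t$ information. The restriction to at least $d/3$ coordinates then comes from an elementary counting argument: coordinates for which the conditional distribution of $\{X^{(l)}_j\}_{l=1}^t$ given $(I,B,C)$ is degenerate (e.g., when $I = j$, which pins the bits to $C$) cannot support the reduction, but a union bound shows there are at most $2d/3$ such degenerate cases, leaving at least $d/3$ coordinates on which the simulation and the resulting bound apply. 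The main obstacle is the bookkeeping on the three total variation contributions and a precise justification of the $\alpha/2$ dilution from the structure of $\zeta$; the rest is a direct application of the information cost decomposition (Lemma \ref{lem:ic-decomposition}) together with the classical $\CIC$ lower bound for $\AND_t$ from Theorem 7.2 of \cite{BJKS2004}.
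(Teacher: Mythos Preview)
Your high-level plan (embed the $\AND_t$ input at coordinate $j$, fill in the remaining coordinates using public randomness, output $1$ iff $\Pi$ returns $e_j$, and use TVD bounds to control the error) matches the paper's approach. However, several of the specific justifications you give are incorrect, and one essential step is missing.

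\textbf{The $d/3$ good coordinates.} Your explanation is wrong: the restriction has nothing to do with ``degenerate'' coordinates where $I=j$ (there is only one such coordinate, not $2d/3$ of them). In the paper, two averaging arguments produce the $d/3$. First, at least $(2/3)d$ coordinates $j$ satisfy $\delta(j)\le 3\delta$, where $\delta(j)$ is the failure probability of $\Pi$ conditioned on $I=j$. Second, and this is the step you are missing entirely, one must control the \emph{false-positive} probability $p(j)$ that $\Pi$ outputs $e_j$ when the $\AND_t$ input is \emph{not} all-ones. Under the point-assignment specification, when no coordinate equals $t$, $\Pi$ may output \emph{any} $e_l$, so nothing forces $\Pi$ to avoid $e_j$ in this case. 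The paper handles this by observing that, under the pure product distribution, $\Pi$ outputs a single index, so at least $(2/3)d$ coordinates $j$ are output with probability at most $3/d$; the TVD correction to the planted distribution then gives $p(j)\le 3/d+\sqrt{2t}/\sqrt d$. The intersection of these two ``good'' sets of size $(2/3)d$ each has size at least $d/3$. Your error term (iii) is not ``a mismatch for the marginal law of $B^j$''; it is precisely this false-positive bound.

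\textbf{The factor $\alpha/2$.} The $\alpha$ is correct---it comes from conditioning on $C=(0,\dots,0)$ at the planted index $I$, so that the simulated instance genuinely has no all-ones coordinate other than possibly $j$. But your account of the $1/2$ (``only a $\Theta(\alpha)$ fraction of the $\zeta$-samples fall in the informative regime'') is not the reason. In the paper, after fixing $C=(0,\dots,0)$ one still averages over the choice of $I=i$ at which to plant it; a second Markov argument shows that for at least half of these choices of $i$ the simulated protocol $\Pi_{i,b}$ has error at most $2(3\delta+2/\sqrt{d-1})$, and only on that half can one invoke the $\CIC$ lower bound. The $2/\sqrt{d-1}$ in your term (ii) is the TVD between the true distribution conditioned on $I=j$ and the simulated distribution where $(0,\dots,0)$ is planted at a fresh random $i\ne j$.

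In short: your overall architecture is right, but you need the false-positive/averaging argument to select the $d/3$ coordinates and to justify the term $3/d+\sqrt{2t}/\sqrt d$, and you need the second averaging over the planted index $i$ to justify the factor $1/2$.
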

\begin{proof}
Our proof roughly follows Lemma 5.2 of \cite{BJKS2004}.

\paragraph{Identifying $d/3$ good coordinates.}

We first show that for a large number of coordinates $j\in[d]$, the protocol $\Pi$ is correct for the $\AND_t$ problem when restricted to the $j$th coordinate, that is, $\Pi$ outputs coordinate $j$ when $I = j$ and $C = (1, 1, \dots, 1)$, while $\Pi$ outputs a coordinate other than $j$ when $C \neq (1, 1, \dots, 1)$. 

For $j\in[d]$, let $\delta(j)$ denote the failure probability of the protocol $\Pi$ over the input distribution of Definition \ref{def:bjks-instance}, conditioned on $I = j$. By averaging, we have that $\delta(j) \leq 3\delta$ for at least $(2/3)d$ coordinates $j\in[d]$. Next, for $j\in[d]$, let $p(j)$ denote the probability that the protocol $\Pi$ outputs the standard basis vector $e_j$, conditioned on $I = j$ and $C \neq (1, 1, \dots, 1)$. First, if the input distribution is just the product distribution with each coordinate drawn as $\{X^{(i)}_j\}_{i=1}^t$ for $(\{X^{(i)}_j\}_{i=1}^t, D^j)\sim \zeta$, then note that at least $(2/3)d$ coordinates $j\in[d]$ will have $e_j$ output with probability at most $3/d$. Now if instead we uniformly draw $I\sim[d]$ and set $\{X^{(i)}_I\}_{i=1}^t = C$ for some $C \neq (1, 1, \dots, 1)$, then the total variation distance between this distribution and the product distribution is at most $\sqrt{2t}/\sqrt d$ by a total variation distance calculation carried out in Lemma \ref{lem:tvd}. Thus, $p(j) \leq 3/d + \sqrt{2t}/\sqrt d$ for these $(2/3)d$ coordinates $j$. Now by a union bound, there are at least $d/3$ coordinates such that $\delta(j) \leq 3\delta$ and $p(j) \leq 3/d + \sqrt{2t}/\sqrt d$. We will show the information complexity lower bound on these coordinates. From this point forth in this proof, we fix $j$ to be such a coordinate.

\paragraph{Reduction lemma.}

Note that for any $j\in[d]$,
\begin{align*}
    &\I(\{X^{(l)}_j\}_{l=1}^t; \Pi(\{X^{(l)}\}_{l=1}^t)\mid I, B, C) \\
    =~&\expec{i\sim I, b^{-j}\sim B^{-j}}{\I(\{X^{(l)}_j\}_{l=1}^t; \Pi(\{X^{(l)}\}_{l=1}^t)\mid I=i, B^j, B^{-j}=b^{-j}, C)} \\
    \geq~& \alpha \expec{i\sim I, b^{-j}\sim B^{-j}}{\I(\{X^{(l)}_j\}_{l=1}^t; \Pi(\{X^{(l)}\}_{l=1}^t)\mid I=i, B^j, B^{-j}=b^{-j}, C=(0,0,\dots,0))}
\end{align*}
where the last inequality is true since $C = (0, 0, \dots, 0)$ with probability $\alpha$.

Next, for each pair $(i, b^{-j})$, we construct a protocol $\Pi_{i,b}$ for a single copy of the AND problem with conditional information complexity loss exactly equal to 
\[
    \I(\{X^{(l)}_j\}_{l=1}^t; \Pi(\{X^{(l)}_j\}_{l=1}^t)\mid I=i, B^j, B^{-j}=b^{-j}, C=(0,0,\dots,0)).
\]
Let $\{x^{(l)}\}_{l=1}^t$ be a single copy of the $t$-bit AND problem. First note that conditioned on $I$, $B^{-j}$, and $C$, the hard instance of Definition \ref{def:bjks-instance} is a product distribution, that is, the $t$ players can generate their inputs independently for all coordinates except $j$. Then, the $t$ players generate such an input instance according to $I = i$, $B^{-j} = b^{-j}$, and $C=(0,0,\dots,0)$, and then replaces the $j$th input by $\{x^{(l)}\}_{l=1}^t$. The $t$ players then simulate the original protocol $\Pi$ with this input, and outputs $1$ as the answer to the AND problem if $\Pi$ assigns the $j$th standard basis vector to $Z = \sum_{l=1}^t X^{(l)}$, and $0$ otherwise.

Note that if $\{x^{(l)}\}_{l=1}^t$ is drawn according to the distribution of $C$ in Definition \ref{def:bjks-instance} and the index $i$ on which to plant $C = (0, 0, \dots, 0)$ is drawn uniformly randomly, then by Lemma \ref{lem:tvd}, the total variation distance between $\mathcal D$ conditioned on $I = j$ and the simulated input distribution $\mathcal D'$ is at most $2/\sqrt{d-1}$ (note that there are two different ``$I$''s here, one for the original problem instance where we are setting the random coordinate $I = i$ to be all zeros, and one for the fixed coordinate $I = j$ to be the planted input $\{x^{(l)}\}_{l=1}^t$ in the simulated instance). Then, letting $S(\{X^{(l)}\}_{l=1}^t)$ be the event that the protocol $\Pi$ is successful on input $\{X^{(l)}\}_{l=1}^t$, we have that
\begin{align*}
    &\Pr_{\{X^{(l)}\}_{l=1}^t\sim\mathcal D'}[S(\{X^{(l)}\}_{l=1}^t)] \\
    \geq~&\Pr_{\{X^{(l)}\}_{l=1}^t\sim\mathcal D}[S(\{X^{(l)}\}_{l=1}^t)] - \abs{\Pr_{\{X^{(l)}\}_{l=1}^t\sim\mathcal D}[S(\{X^{(l)}\}_{l=1}^t)] - \Pr_{\{X^{(l)}\}_{l=1}^t\sim\mathcal D'}[S(\{X^{(l)}\}_{l=1}^t)]} \\
    \geq~& \Pr_{\{X^{(l)}\}_{l=1}^t\sim\mathcal D}[S(\{X^{(l)}\}_{l=1}^t)] - \TV(\mathcal D,\mathcal D') \\
    \geq ~& 1 - 3\delta - \frac2{\sqrt{d-1}}.
\end{align*}
Thus, $\Pi$ is successful with probability at least $1 - 3\delta - 2/\sqrt{d-1}$ under $\mathcal D'$. Then by averaging, we have that for at least $d/2$ choices of $I = i$, the $\Pi$ is successful with probability at least $1 - 2(3\delta + 2/\sqrt{d-1})$ conditioned on the choice of $I = i$. 

Next, we bound the correctness probability of the protocol $\Pi_{i,b}$ for the $\AND_t$ problem, for the set of $d/2$ choices of $i$ as defined above. First, note that on this instance, if $\{x^{(l)}\}_{l=1}^t = (1,1,\dots,1)$, then $\Pi$ is correct if and only if it assigns $Z$ to $e_j$, since $Z_j = t$ whereas $Z_l \leq 1$ for every other $l\in[d]$. Since $\Pi$ must be correct with probability at least $1 - 2(3\delta+2/\sqrt{d-1})$ overall, it is correct with probability at least $1 - 4(3\delta+2/\sqrt{d-1})$ conditioned on $\{x^{(l)}\}_{l=1}^t = (1,1,\dots,1)$. On the other hand, if $\{x^{(l)}\}_{l=1}^t \neq (1,1,\dots,1)$, then by our condition on the coordinate $j$, $\Pi$ assigns $e_j$ to $Z$ with probability at most $3/d + \sqrt{2t}/\sqrt d$. Thus, for these inputs, $\Pi_{i,b}$ is correct with probability at least $1 - (3/d + \sqrt{2t} / \sqrt d)$. Thus, overall, $\Pi_{i,b}$ is correct with probability at least $1 - 4(3\delta+2/\sqrt{d-1}) - (3/d + \sqrt{2t}/\sqrt d) = 1 - \delta'$ on any input.

Finally, let $(\{X'^{(l)}\}_{l=1}^t,B')\sim\zeta$. Then, note that the joint distribution of $(\{X'^{(l)}\}_{l=1}^t, B', \Pi_{i,b})$ is exactly the same as the joint distribution of $(\{X_j^{(l)}\}_{l=1}^t, B^j, \Pi(\{X^{(l)}\}_{l=1}^t))$, conditioned on $I = i, B^{-j}=b^{-j}, C=(0,0,\dots,0)$. Thus, this shows that
\begin{align*}
    &\I(\{X^{(l)}_j\}_{l=1}^t; \Pi(\{X^{(l)}\}_{l=1}^t)\mid I=i, B^j, B^{-j}=b^{-j}, C=(0,0,\dots,0)) \\
    = ~&\I(\{X'^{(l)}\}_{l=1}^t; \Pi_{i,b}) \geq \CIC_{\zeta,\delta'}(\AND_t).
\end{align*}
Chaining together the previous inequalities yields the claimed result.
\end{proof}

Combining Lemma \ref{lem:reduction-lemma} with Lemma \ref{lem:ic-decomposition} yields the following:

\begin{lemma}
\label{lem:single-point-assign}
For $\delta \leq 1/50$ and $\sqrt{2t/d} \leq 1/20$, we have
\[
    \I(\{X^{(l)}\}_{l=1}^t;\Pi(\{X^{(l)}\}_{l=1}^t)\mid I, B, C) = \Omega(d/t^2).
\]
\end{lemma}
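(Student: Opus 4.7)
The plan is to combine the information cost decomposition (Lemma \ref{lem:ic-decomposition}) with the per-coordinate lower bound from the reduction lemma (Lemma \ref{lem:reduction-lemma}) and the known conditional information complexity lower bound for the $\AND_t$ problem from \cite{BJKS2004}. Concretely, I will first show
\[
    \I(\{X^{(l)}\}_{l=1}^t; \Pi \mid I, B, C) \geq \sum_{j=1}^d \I(\{X^{(l)}_j\}_{l=1}^t; \Pi \mid I, B, C),
\]
then lower bound at least $d/3$ of the summands by $\Omega(1/t^2)$, and multiply to obtain the claim. Here $\alpha$ is treated as an absolute constant (e.g., $\alpha = 1/2$), so it is absorbed into the hidden constant.

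For the decomposition, the key structural observation is that conditioned on $I, B, C$, the per-coordinate inputs $\{X^{(l)}_j\}_{l=1}^t$ for $j \neq I$ are mutually independent, each depending only on $B^j$ and independent private coins, while coordinate $j = I$ is fully determined by $C$ and thus contributes zero mutual information with $\Pi$. I will first condition on $I = i$, apply the chain rule to the remaining coordinates in a fixed order, and then invoke the standard fact that conditioning on independent side information only increases mutual information with the output:
\[
    \I(\{X^{(l)}_j\}_{l=1}^t; \Pi \mid I = i, B, C, \{X^{(l)}_{<j}\}_{l=1}^t) \geq \I(\{X^{(l)}_j\}_{l=1}^t; \Pi \mid I = i, B, C).
\]
Summing over $j \neq i$, taking expectation over $I$, and adding back the vanishing contribution from $j = I$ produces the displayed decomposition; this is essentially the content of Lemma \ref{lem:ic-decomposition} applied after conditioning on $I$.

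Next, I will apply Lemma \ref{lem:reduction-lemma} to conclude that at least $d/3$ of the summands are bounded below by $\tfrac{\alpha}{2}\,\CIC_{\zeta, \delta'}(\AND_t)$, where $\delta' = 4(3\delta + 2/\sqrt{d-1}) + (3/d + \sqrt{2t/d})$. Under the hypotheses $\delta \leq 1/50$ and $\sqrt{2t/d} \leq 1/20$, and for $d$ larger than a suitable absolute constant (smaller $d$ is absorbed into the hidden constant of the $\Omega$), one checks directly that $\delta'$ is bounded above by a sufficiently small absolute constant. Finally, invoking Theorem 7.2 of \cite{BJKS2004}, which yields $\CIC_{\zeta, \delta'}(\AND_t) = \Omega(1/t^2)$ whenever $\delta'$ is a small enough constant, I conclude
\[
    \I(\{X^{(l)}\}_{l=1}^t; \Pi \mid I, B, C) \geq \tfrac{d}{3} \cdot \tfrac{\alpha}{2} \cdot \Omega(1/t^2) = \Omega(d/t^2).
\]

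The main obstacle is executing the superadditive decomposition rigorously in the presence of the planted coordinate $I$, which breaks the naive product structure of the input distribution over coordinates. The trick of first conditioning on $I$ restores the coordinatewise product structure, after which the chain-rule plus independent-side-information argument underlying Lemma \ref{lem:ic-decomposition} applies directly. Once this is in place, the remainder is routine bookkeeping: invoke Lemma \ref{lem:reduction-lemma}, verify that $\delta'$ is small enough to apply the BJKS AND lower bound, and multiply.
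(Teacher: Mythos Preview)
Your proposal is correct and follows essentially the same approach as the paper: apply the information cost decomposition (Lemma \ref{lem:ic-decomposition}) to split over coordinates, use Lemma \ref{lem:reduction-lemma} to lower bound at least $d/3$ summands by $\tfrac{\alpha}{2}\CIC_{\zeta,\delta'}(\AND_t)$, verify $\delta'$ is small enough under the stated hypotheses, and invoke the $\Omega(1/t^2)$ bound from \cite{BJKS2004}. Your additional care in first conditioning on $I$ to recover the coordinatewise product structure before invoking the decomposition is a nice clarification of a step the paper leaves implicit, but it does not constitute a different route.
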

\begin{proof}
If $\delta\leq 1/50$, then $12\delta \leq 12/50 < 1/4$ so for large enough $d$, the $\delta'$ in Lemma \ref{lem:reduction-lemma} is at most $1/3$. In this case, $\CIC_{\zeta,\delta'}(\AND_t) = \Omega(1/t^2)$ by Theorem 7.2 of \cite{BJKS2004}, which, combined with Lemma \ref{lem:ic-decomposition}, yields the statement of the lemma.
\end{proof}

\subsubsection{Assignment of Multiple Points}

Next, we show by a direct sum argument that solving the assignment problem for $n$ points requires a protocol to reveal $\Omega(nd/t^2)$ bits of information.

\begin{lemma}
\label{lem:multiple-point-assign}
Let $\sqrt{2t/d} \leq 1/20$. Let $(\{X^{(l)}\}_{l=1}^t, (I, B, C)) = \{(\{X^{(i,l)}\}_{l=1}^t, (I^i, B^i, C^i))\}_{i=1}^n$ be drawn as $n$ i.i.d.\ from the hard distribution of Definition \ref{def:bjks-instance}. Suppose that a protocol $\Pi$ outputs a correct solution to the point assignment problem of Definition \ref{def:point-assignment} for least a $399/400$ fraction of points $\{X^{(i,l)}\}_{l=1}^t$ for $i\in[n]$, with probability at least $399/400$. Then,
\[
    \I(\{X^{(l)}\}_{l=1}^t; \Pi(\{X^{(l)}\}_{l=1}^t)\mid I, B, C) = \Omega(nd/t^2).
\]
\end{lemma}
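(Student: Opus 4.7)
The plan is to prove the $n$-instance lower bound via a standard direct sum reduction to Lemma~\ref{lem:single-point-assign}. Since the joint distribution of inputs and conditioning variables factors as $\zeta^n$ with conditioning variable $(I,B,C) = \{(I^i,B^i,C^i)\}_{i=1}^n$, I would first apply the information cost decomposition of Lemma~\ref{lem:ic-decomposition} (in its standard $t$-player generalization) to obtain
\[
    \I(\{X^{(l)}\}_{l=1}^t;\Pi \mid I,B,C) \geq \sum_{i=1}^n \I(\{X^{(i,l)}\}_{l=1}^t;\Pi \mid I,B,C).
\]

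Next, for each $i\in[n]$ I would construct a single-instance point-assignment protocol $\Pi_i$ as follows: given an input $(\{x^{(l)}\}_{l=1}^t,(i^*,b,c))\sim\zeta$, use public randomness to sample the remaining $n-1$ pairs IID from $\zeta$, insert the given input into slot $i$, run $\Pi$, and output $\Pi$'s answer for instance $i$. Using independence of the $n$ instances conditional on $(I,B,C)$, a chain-rule calculation gives
\[
    \I(\{X^{(i,l)}\}_{l=1}^t;\Pi \mid I,B,C) = \I(\{X^{(i,l)}\}_{l=1}^t;\Pi \mid I^i,B^i,C^i) + \I(\{X^{(i,l)}\}_{l=1}^t; I^{-i},B^{-i},C^{-i} \mid I^i,B^i,C^i,\Pi),
\]
and since the second term is nonnegative and the first equals the information cost of $\Pi_i$, we obtain $\I(\{X^{(i,l)}\}_{l=1}^t;\Pi \mid I,B,C) \geq \I(\{X^{(i,l)}\}_{l=1}^t;\Pi_i \mid I^i,B^i,C^i)$, which transfers any single-instance information-cost lower bound to the summand above.

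Finally, to invoke Lemma~\ref{lem:single-point-assign} on $\Pi_i$, I would verify its error hypothesis by averaging. Letting $q_i$ denote the distributional failure probability of $\Pi$ on instance $i$, the assumption that $\Pi$ is correct on $\geq 399/400$ of the instances with probability $\geq 399/400$ yields $\frac{1}{n}\sum_{i=1}^n q_i = \EE[\text{failure rate}] \leq \frac{2}{400} = \frac{1}{200}$; hence by Markov's inequality at most $n/4$ indices have $q_i > 1/50$, leaving at least $3n/4$ ``good'' indices with $q_i \leq 1/50$. For each such $i$, Lemma~\ref{lem:single-point-assign} (whose additional hypothesis $\sqrt{2t/d}\leq 1/20$ is already assumed) gives $\I(\{X^{(i,l)}\}_{l=1}^t;\Pi_i \mid I^i,B^i,C^i) = \Omega(d/t^2)$, and summing over these indices produces $\Omega(nd/t^2)$ as required. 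The main technical care is in the chain-rule step linking the summand to the info cost of $\Pi_i$ under public randomness, which works because of the conditional independence structure that underlies Lemma~\ref{lem:ic-decomposition}; a minor subtlety is that Lemma~\ref{lem:single-point-assign} must apply to distributional-error protocols on $\zeta$ (not just to worst-case protocols), which is indeed the regime in which the proof of Lemma~\ref{lem:reduction-lemma} operates.
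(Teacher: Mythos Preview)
Your proposal is correct and follows essentially the same direct-sum strategy as the paper: decompose via Lemma~\ref{lem:ic-decomposition}, average the per-instance failure probabilities to isolate many ``good'' indices, and invoke Lemma~\ref{lem:single-point-assign} on those. The only noteworthy difference is in packaging: the paper expands $\I(\{X^{(i,l)}\}_l;\Pi\mid I,B,C)$ as an expectation over fixings of $(I^{-i},B^{-i},C^{-i})$ and applies a \emph{second} Markov step over those fixings (getting $\delta(i)\leq 1/100$ for $n/2$ indices, then $\delta(\cdot)\leq 1/50$ for half the fixings), whereas you remove the extra conditioning in one shot via the chain-rule identity and treat the other $n-1$ instances as public randomness in $\Pi_i$, needing only a single Markov step to reach $q_i\leq 1/50$ on $3n/4$ indices. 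Both routes are standard; yours is slightly more streamlined, while the paper's avoids having to argue that the $\CIC$ lower bound underlying Lemma~\ref{lem:single-point-assign} is robust to public randomness (it is, since $\CIC$ minimizes over all protocols).
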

\begin{proof}
Let $i\sim[n]$ be a uniformly random index. Then, by a union bound, the $i$th instance of the point assignment problem is solved correctly with probability at least $1 - 2/400 = 1 - 1/200$. Now for each fixed $i\in[n]$, let $\delta(i)$ be the probability that the $i$th instance is solved correctly. Then, over the randomness used by the protocol as well as $i\sim[n]$, we have that
\[
    \Pr_{i\sim[n]}\braces*{\text{$i$th instance is correct}} = \sum_{i=1}^n \frac1n \Pr\braces*{\text{$i$th instance is correct}} = \frac1n\sum_{i=1}^n 1 - \delta(i) \geq 1 - \frac{1}{200}
\]
so $\EE_{i\sim[n]}\delta(i) \leq 1/200$. Then for at least $n/2$ indices $i'\in[n]$, we have that $\delta(i') \leq 2/200 = 1/100$. We now claim that on these coordinates $i'\in[n]$, we have that
\[
    \I(\{X^{(i',l)}\}_{l=1}^t; \Pi(\{X^{(l)}\}_{l=1}^t)\mid I, B, C) = \Omega(d/t^2).
\]
Indeed, note that $\I(\{X^{(i',l)}\}_{l=1}^t; \Pi(\{X^{(l)}\}_{l=1}^t)\mid I, B, C)$ is the expectation of
\[
    \I(\{X^{(i',l)}\}_{l=1}^t; \Pi(\{X^{(l)}\}_{l=1}^t)\mid I^{i'}, B^{i'}, C^{i'}, I^{-i'}=i^{-i'}, B^{-i'}=b^{-i'}, C^{-i'}=c^{-i'})
\]
over $i^{-i'}\sim I^{-i'}, b^{-i'}\sim B^{-i'}, c^{-i'}\sim C^{-i'}$. Now for each fixing $i^{-i'}, b^{-i'}, c^{-i'}$, let $\delta(i^{-i'}, b^{-i'}, c^{-i'})$ that the $i'$th instance of the point assignment problem is correct given these fixings. Then by Markov's inequality, for at least half of the fixings, we have $\delta(i^{-i'}, b^{-i'}, c^{-i'}) \leq 2/100 = 1/50$. Note that each of these fixings corresponds to a protocol for solving the point assignment problem with probability at least $1-1/50$. Thus, we have by Lemma \ref{lem:single-point-assign} that
\[
    \I(\{X^{(i',l)}\}_{l=1}^t; \Pi(\{X^{(l)}\}_{l=1}^t)\mid I^{i'}, B^{i'}, C^{i'}, I^{-i'}=i^{-i'}, B^{-i'}=b^{-i'}, C^{-i'}=c^{-i'}) = \Omega(d/t^2)
\]
for these fixings. Since this event occurs with probability at least $1/2$, it follows that $\I(\{X^{(i',l)}\}_{l=1}^t; \Pi(\{X^{(l)}\}_{l=1}^t)\mid I, B, C) = \Omega(d/t^2)$ as well. 

Finally, by Lemma \ref{lem:ic-decomposition}, we have that
\begin{align*}
    \I(\{X^{(l)}\}_{l=1}^t; \Pi(\{X^{(l)}\}_{l=1}^t)\mid I, B, C) &\geq \sum_{i=1}^n \I(\{X^{(i,l)}\}_{l=1}^t; \Pi(\{X^{(l)}\}_{l=1}^t)\mid I, B, C) \\
    &\geq \frac{n}{2}\cdot\Omega(d/t^2) = \Omega(nd/t^2).
\end{align*}
\end{proof}

\subsection{Lower Bounds for Clustering in Row Insertion Streams}

Our first result is to show that an algorithm for computing a $(1+\epsilon)$-approximate nearly optimal $k$-means clustering on $n$ points for $k = d = \Theta(1/\epsilon)$ on row insertion streams requires $\Omega(n/\epsilon)$ bits of space.

For this result, we need a lower bound against any nearly optimal clustering, so we need to ``plant'' our desired centers in order to force the solution to look like standard basis vectors. This will allow us to use the clustering algorithm to solve the point assignment problem. In order to determine the number of points we need to plant the centers, we first need a lower bound on the cost of any clustering of random bits, which we show in the next section. 

\subsubsection{Cost Lower Bound on Random Points}

We first lower bound the cost of any clustering of the random points of the hard instance in Definition \ref{def:bjks-instance}. We start with a bound in expectation:

\begin{lemma}[Expectation bound for clustering random bits]
\label{lem:expectation-bound}
Fix a set of centers $c^1, c^2, \dots, c^k\in[0,1]^d$. Let $Z\in\{0,1\}^d$ be a vector of $d$ uniformly random bits. Then,
\[
    \expec{Z}{\min_{j=1}^k \norm{Z - c^j}_2^2} \geq \frac{d}{4} - \frac{\log(kd)+1}2.
\]
\end{lemma}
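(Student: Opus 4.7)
The plan is to use a conditional-mean reduction followed by a variance-entropy inequality, which is quantitatively stronger than a Chernoff-plus-union-bound approach (the latter only yields $d/4 - \tilde O(\sqrt{d\log k})$, since each $\|Z - c^j\|_2^2$ is only $\sqrt d$-sub-Gaussian around its mean). Define the assignment rule $\pi : \{0,1\}^d \to [k]$ by $\pi(Z) \coloneqq \argmin_{j\in[k]} \|Z - c^j\|_2^2$ (breaking ties arbitrarily), and let $p_j \coloneqq \Pr[\pi(Z)=j]$ and $\mu^j \coloneqq \mathbb{E}[Z \mid \pi(Z)=j]$. Since the mean minimizes conditional expected squared error and $Z_l \in \{0,1\}$ makes $\mathrm{Var}[Z_l \mid \pi(Z)=j] = \mu^j_l(1-\mu^j_l)$, we have
\[
\expec{Z}{\min_{j} \|Z - c^j\|_2^2} = \sum_j p_j\, \mathbb{E}\bracks*{\|Z - c^j\|_2^2 \mid \pi(Z)=j} \;\geq\; \sum_j p_j \sum_{l=1}^d \mu^j_l(1-\mu^j_l),
\]
so it suffices to lower bound the right-hand side.

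Next I would convert each $\mu^j_l(1-\mu^j_l)$ into an entropy deficit via Pinsker's inequality. Working in nats throughout, Pinsker applied to $\mathrm{Ber}(p)$ vs.\ $\mathrm{Ber}(1/2)$ yields $\ln 2 - h(p) = \mathrm{KL}(\mathrm{Ber}(p)\,\|\,\mathrm{Ber}(1/2)) \geq 2(p-\tfrac{1}{2})^2$, where $h$ denotes binary entropy in nats. Hence $p(1-p) = \tfrac14 - (p-\tfrac12)^2 \geq \tfrac14 - \tfrac12(\ln 2 - h(p))$. Applying this coordinate-wise, using subadditivity of entropy ($\sum_l h(\mu^j_l) \geq H(Z \mid \pi(Z)=j)$ because the marginals dominate the joint entropy), and then averaging over $j$ with weights $p_j$ gives
\[
\sum_j p_j \sum_l \mu^j_l(1-\mu^j_l) \;\geq\; \tfrac{d}{4} - \tfrac{1}{2}\bracks*{d \ln 2 - H(Z \mid \pi(Z))}.
\]

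Finally, since $Z$ is uniform on $\{0,1\}^d$ we have $H(Z) = d\ln 2$, and the identity $H(Z \mid \pi(Z)) = H(Z) - H(\pi(Z))$ (valid because $\pi(Z)$ is a function of $Z$) combined with $H(\pi(Z)) \leq \ln k$ gives $d\ln 2 - H(Z \mid \pi(Z)) \leq \ln k$. Chaining the bounds yields
\[
\expec{Z}{\min_{j} \|Z - c^j\|_2^2} \;\geq\; \tfrac{d}{4} - \tfrac{\ln k}{2},
\]
which implies the stated bound $d/4 - (\log(kd)+1)/2$ regardless of whether $\log$ denotes $\ln$ or $\log_2$ (in either case $\ln k \leq \log(kd) + 1$). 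There is no substantive obstacle: the argument is a short chain of standard inequalities (optimality of the mean for squared error, Pinsker, subadditivity of entropy, and the chain-rule identity for a deterministic function). The only care needed is to keep the logarithm base consistent between Pinsker's inequality and the entropy bound on $H(\pi(Z))$.
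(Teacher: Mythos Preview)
Your proof is correct and in fact yields the sharper bound $d/4 - (\ln k)/2$, which comfortably implies the paper's inequality. The argument is clean: reduce to conditional variance via optimality of the mean, convert Bernoulli variance to an entropy deficit via Pinsker, then use subadditivity and the chain rule to bound the total deficit by $H(\pi(Z)) \leq \ln k$. All steps check out, including the fact that $\mathrm{Var}[Z_l \mid \pi(Z)=j] = \mu^j_l(1-\mu^j_l)$ because $Z_l^2 = Z_l$.

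However, your parenthetical dismissing the Chernoff/Hoeffding route is off the mark, and the paper's proof actually proceeds exactly that way. The paper expands $\|Z-c^j\|_2^2 = d/4 + \|\mu - c^j\|_2^2 + 2\langle Z-\mu, \mu-c^j\rangle$ with $\mu = (\tfrac12,\dots,\tfrac12)$, applies Hoeffding to the inner product to get $|\langle Z-\mu, \mu-c^j\rangle| \leq t\|\mu-c^j\|_2$ with high probability, and then \emph{completes the square}: $\|\mu-c^j\|_2^2 - 2t\|\mu-c^j\|_2 \geq -t^2$. This kills the $\|\mu-c^j\|_2$ dependence entirely, so after a union bound over $k$ with $t = \sqrt{\log(kd)/2}$ one gets $\min_j \|Z-c^j\|_2^2 \geq d/4 - \log(kd)/2$ with probability $1 - 2/d$, and hence the claimed expectation bound. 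The point you missed is that the deterministic term $\|\mu - c^j\|_2^2$ exactly compensates for the sub-Gaussian fluctuation scale, so the bound does \emph{not} degrade to $d/4 - \tilde O(\sqrt{d\log k})$.

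Comparing the two: your information-theoretic argument is tighter (no $d$ inside the log, no additive $+1$) and arguably more robust, since it never needs the centers to lie in $[0,1]^d$ and would extend to any deterministic assignment rule, not only the nearest-center one. The paper's Hoeffding-plus-completing-the-square argument is more elementary and, importantly for their purposes, gives a \emph{high-probability} pointwise lower bound on $\min_j\|Z-c^j\|_2^2$ rather than only an expectation bound; this concentration is what they leverage in the subsequent net argument (their Lemma~\ref{lem:cost-lb}).
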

\begin{proof}
Let $\mu\coloneqq \EE[Z]$ (i.e., the vector with $1/2$ in every entry). Fix a specific center $c^j$ for $j\in[k]$. Then,
\[
    \norm{Z - c^j}_2^2 = \norm{Z - \mu}_2^2 + \norm{\mu - c^j}_2^2 + 2\angle{Z - \mu, \mu - c^j} = \frac{d}{4} + \norm{\mu - c^j}_2^2 + 2\angle{Z - \mu, \mu - c^j}
\]
By Hoeffding's inequality, we have
\[
    \Pr\braces*{\abs{\angle{Z - \mu, \mu - c^j}} \geq t\norm{\mu-c^j}_2} \leq 2\exp\parens*{-2t^2}
\]
so for $t = \sqrt{\log(kd)/2}$, this probability is at most $2/kd$. By a union bound over the $k$ choices of $j$, we have that
\[
    \Pr\braces*{\min_{j=1}^k \norm{Z - c^j}_2^2 \leq \frac{d}{4} + \norm{\mu - c^j}_2^2 - 2\sqrt{\log(kd)/2}\norm{\mu-c^j}_2} \leq \frac2d.
\]
Note that
\[
    \norm{\mu - c^j}_2^2 - 2\sqrt{\log(kd)/2}\norm{\mu-c^j}_2 = \parens*{\norm{\mu - c^j}_2 - \sqrt{\log(kd)/2}}^2 - \log(kd)/2 \geq -\log(kd)/2
\]
so
\[
    \Pr\braces*{\min_{j=1}^k \norm{Z - c^j}_2^2 \geq \frac{d}{4} - \frac{\log(kd)}2} \geq 1 - \frac2d.
\]
It follows that
\[
    \EE\bracks*{\min_{j=1}^k \norm{Z - c^j}_2^2} \geq \parens*{1-\frac2d}\parens*{ \frac{d}{4} - \frac{\log(kd)}2 } \geq \frac{d}{4} - \frac{\log(kd)+1}2
\]
\end{proof}

Lemma \ref{lem:expectation-bound} shows that when clustering random bits, we can only save approximately a $(1-1/\tilde\Theta(d))$ factor for any clustering compared to a single center, in expectation. Since all but one coordinate in the hard instance of Definition \ref{def:bjks-instance} are random bits, and the one coordinate can only decrease the cost by a factor of $(1-1/\tilde\Theta(d))$, any clustering into $k$ centers still has cost at least approximately $(1 - 1/\tilde\Theta(d))$ times the cost of a single center. 

The next lemma converts the result of Lemma \ref{lem:expectation-bound} into a high probability result about any clustering, via a net argument.

\begin{lemma}
\label{lem:cost-lb}
Let $\{Z^i\}_{i=1}^n$ be $n$ independent uniformly random bit vectors in $d$ dimensions. Suppose that $n \geq 16 d\log(d^{2d}/\delta) = 32 d^2\log(d/\delta)$. Then, with probability at least $1-\delta$, we have that
\[
    \min_{c^1, c^2, \dots, c^k\in[0,1]^d} \sum_{i=1}^n \min_{j=1}^k \norm{Z^i - c^j}_2^2 \geq n \parens*{\frac{d}{4} - \frac{\log(kd)+9}2}.
\]
\end{lemma}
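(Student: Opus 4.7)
The plan combines a covering argument over the centers with Hoeffding concentration, using Lemma~\ref{lem:expectation-bound} as the per-tuple expectation bound.

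First I would build an $\ell_\infty$ $\epsilon$-net $\mathcal{N} \subset [0,1]^d$ of side $\epsilon = 1/(8d)$, which has size $|\mathcal{N}| \leq (8d+1)^d \leq d^{2d}$ for $d$ sufficiently large. For any $c \in [0,1]^d$, its nearest net point $\tilde c \in \mathcal{N}$ satisfies $\|c - \tilde c\|_2 \leq \sqrt{d}/(8d) = 1/(8\sqrt{d})$. Using the identity $\|z - c\|_2^2 - \|z - \tilde c\|_2^2 = (c - \tilde c)^\top(c + \tilde c - 2z)$ together with Cauchy--Schwarz and $\|c + \tilde c - 2z\|_2 \leq 4\sqrt{d}$ for $z \in \{0,1\}^d$, the per-point rounding error is at most $1/2$. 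Since $\min_j$ is $1$-Lipschitz in its arguments, snapping all $k$ centers to the net changes $\min_j \|z - c^j\|_2^2$ by at most $1/2$ per point, so the total snapping error across the $n$ points is at most $n/2$.

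Second, for each fixed $k$-tuple $(c^1,\ldots,c^k) \in \mathcal{N}^k$, the random variables $X_i \coloneqq \min_j \|Z^i - c^j\|_2^2$ for $i \in [n]$ are i.i.d.\ in $[0,d]$ with $\EE[X_i] \geq \mu \coloneqq d/4 - (\log(kd)+1)/2$ by Lemma~\ref{lem:expectation-bound}. Hoeffding's inequality gives $\Pr[\sum_i X_i \leq n\mu - t] \leq \exp(-2t^2/(nd^2))$. I would pick $t = 7n/2$ so that the concentration slack plus the snapping slack totals $4n$, which exactly matches the gap between the $(\log(kd)+1)/2$ of the expectation bound and the $(\log(kd)+9)/2$ of the target. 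Union bounding over $|\mathcal{N}|^k$ tuples and then adding the snapping error would yield the claimed inequality with probability $1 - \delta$, provided $n$ is large enough that $|\mathcal{N}|^k \exp(-49n/(2d^2)) \leq \delta$.

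The main obstacle is matching the stated threshold $n \geq 32 d^2 \log(d/\delta)$, which is $k$-independent, whereas the net-over-$k$-tuples approach above only needs $n \gtrsim d^3 k \log d$. Bridging this gap likely requires one of the following refinements: (i) a Bernstein-type bound exploiting the fact that each $\min_j \|Z - c^j\|_2^2$ has variance only $O(d)$ (rather than using the $O(d^2)$ range bound from Hoeffding), saving a factor of $d$ in the exponent; or (ii) replacing the $k$-tuple net by the observation that WLOG the optimal centers are cluster centroids, reducing the uniform bound to concentration of the within-cluster variance $\sum_i \|Z^i\|_2^2 - \sum_j \|\sum_{i \in S_j} Z^i\|_2^2/|S_j|$ union-bounded over the $k^n$ partitions, handled via sub-exponential tails for the chi-square-like terms $\|\sum_{i \in S_j} Z^i\|_2^2/|S_j|$. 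Either refinement would avoid the naive $k$-tuple blowup in the covering.
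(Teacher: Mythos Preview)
Your approach is essentially identical to the paper's: build a net over $[0,1]^d$, apply concentration (the paper uses a multiplicative Chernoff bound rather than additive Hoeffding, but this is cosmetic), union bound over $k$-tuples from the net, and absorb the snapping error into the additive constant. Your snapping calculation is slightly tighter (error $1/2$ per point versus the paper's $3$), but the structure is the same.

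Your worry about the $k$-independent threshold is well founded, and you should not look for a clever refinement: the paper's proof has the same issue. The paper also union-bounds over all $k$-tuples $\{c^j\}_{j=1}^k \in \mathcal N^k$, yet the stated hypothesis $n \geq 16d\log(d^{2d}/\delta)$ only budgets for $|\mathcal N| = d^{2d}$ events rather than $|\mathcal N|^k = d^{2dk}$. In the paper's application (Theorem~\ref{thm:k-means-lb}) one has $k=d$, so the correct threshold picks up only an extra polynomial factor in $d$ and nothing downstream breaks; the lemma as stated is simply missing a factor of $k$ in the requirement on $n$. Neither your Bernstein idea nor the partition-based reformulation is needed to match what the paper actually does.
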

\begin{proof}
Let $\{c^j\}_{j=1}^k$ and $\{c'^j\}_{j=1}^k$ be two sets of centers such that $\norm{c^j - c'^j}_2^2 \leq 1/d$. Then,
\begin{align*}
    \min_{j=1}^k \norm{Z^i - c^j}_2^2 &\leq \min_{j=1}^k \norm{Z^i - c'^j}_2^2 + \norm{c'^j - c^j}_2^2 + 2\norm{Z^i - c'^j}_2\norm{c'^j - c^j}_2 \\
    &\leq \min_{j=1}^k \norm{Z^i - c'^j}_2^2 + 3
\end{align*}
so if $\{c^j\}_{j=1}^k$ has high cost, then $\{c'^j\}_{j=1}^k$ must as well. 
We now consider a net $\mathcal N\subseteq[0,1]^d$ of size $d^{2d}$ such that for any $c\in[0,1]^d$, there exists $c'\in\mathcal N$ such that $\norm{c-c'}_2^2 \leq 1/d$. Now fix a set of centers $\{c^j\}_{j=1}^k \in\mathcal N^k$. By Lemma \ref{lem:expectation-bound}, we have that
\[
    \EE_Z[\min_{j=1}^k \norm{Z - c^j}_2^2] \geq \frac{d}{8}
\]
for sufficiently large $d$, so we have that
\[
    \Pr\braces*{\sum_{i=1}^n \min_{j=1}^k \norm{Z^i - c^j}_2^2 \leq (1-1/d) n \expec{Z}{\min_{j=1}^k \norm{Z - c^j}_2^2}} \leq \exp\parens*{-\frac{nd}{16 d^2}} \leq \frac{\delta}{d^{2d}}
\]
by Chernoff bounds. Then by a union bound, the same holds simultaneously for every $\{c^j\}_{j=1}^k \in\mathcal N^k$ with probability at least $1-\delta$.

Now for an arbitrary set of centers $c^1, c^2, \dots, c^k\in[0,1]^d$, there exists some $\{c'^j\}_{j=1}^k \in\mathcal N^k$ such that $\norm{c^j-c'^j}_2^2 \leq 1/d$ for every $j\in[k]$. Then,
\begin{align*}
    \sum_{i=1}^n \min_{j=1}^k \norm{Z^i - c^j}_2^2 &\geq \sum_{i=1}^n \parens*{\min_{j=1}^k \norm{Z^i - c'^j}_2^2 - 3} \\
    &\geq (1-1/d) n \expec{Z}{\min_{j=1}^k \norm{Z - c'^j}_2^2}-3n \\
    &\geq (1-1/d) n \parens*{\frac{d}{4} - \frac{\log(kd)+1}2} - 3n \\
    &\geq n \parens*{\frac{d}{4} - \frac{\log(kd)+9}2}.
\end{align*}
\end{proof}

\subsubsection{Upper Bound on a Nearly Optimal Cost}

We first upper bound the optimal cost of clustering by giving an explicit clustering construction, and upper bounding the cost. We define this clustering in Definition \ref{def:nearly-optimal-clustering}:

\begin{definition}[Nearly optimal clustering]
\label{def:nearly-optimal-clustering}
We define a clustering for points drawn from Definition \ref{def:bjks-instance}. Consider the variables $I$ and $C$ as defined in Definition \ref{def:bjks-instance}. If $C = (1, 1, \dots, 1)$ and $I = j$, then we assign the point to cluster $j$. On the other hand, if $C \neq (1, 1, \dots, 1)$ and $I = j$, then we assign the point to a uniformly random point $j'\in[d]\setminus\{j\}$ such that $X_{j'}^{(l)} = 1$ for some $l\in[t]$. If no such coordinate exists, we assign it to any cluster. Furthermore, we define the center $c^j$ by setting its $j'$th coordinate to be
\[
    c^j_{j'} = \begin{cases}
    \frac{t+1}2 & \text{if $j' = j$} \\
    \frac12 & \text{if $j'\neq j$}
    \end{cases}
\]
\end{definition}

The cost of this clustering is bounded in the following lemma:

\begin{lemma}
\label{lem:cost-ub}
Let $\{Z^i\}_{i=1}^n$ be drawn i.i.d.\ from the distribution of Definition \ref{def:bjks-instance}. Then, with probability at least $1 - (1/2)^{d-1}$, the clustering defined in Definition \ref{def:nearly-optimal-clustering} has cost at most $n(d+t^2-2t)/4$. 
\end{lemma}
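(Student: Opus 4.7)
The plan is to compute the per-point cost under the clustering of Definition \ref{def:nearly-optimal-clustering} by a case analysis on the value of $C^i$, and to observe that whenever the assignment is well-defined the per-point cost equals exactly $(d+t^2-2t)/4$, so summing over the $n$ points gives the claimed total. The key (non-obvious) observation that makes the bookkeeping collapse is that whenever a coordinate $Z^i_{j'}$ lies in $\{0,1\}$ and the corresponding center coordinate is $1/2$, the squared error is $1/4$ \emph{regardless of the realized bit}, which turns all random-coordinate contributions into a deterministic quantity.

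\textbf{Case 1: $C^i = (1,\dots,1)$, so $Z^i_{I^i} = t$.} The clustering sends the point to $c^{I^i}$, whose $I^i$th coordinate is $(t+1)/2$ and whose remaining coordinates are all $1/2$. On coordinate $I^i$ the contribution is $(t-(t+1)/2)^2 = (t-1)^2/4$, and on each of the $d-1$ remaining coordinates the value $Z^i_{j'}\in\{0,1\}$ yields squared error $1/4$. Summing gives $(t-1)^2/4 + (d-1)/4 = (d+t^2-2t)/4$.

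\textbf{Case 2: $C^i = (0,\dots,0)$, so $Z^i_{I^i} = 0$.} Assuming some $j'\neq I^i$ has $Z^i_{j'} = 1$, the clustering assigns the point to $c^{j'}$. Coordinate $j'$ contributes $(1-(t+1)/2)^2 = (t-1)^2/4$; coordinate $I^i$ contributes $(0-1/2)^2 = 1/4$; and each of the remaining $d-2$ coordinates contributes $1/4$ as in Case 1. The total is $(t-1)^2/4 + 1/4 + (d-2)/4 = (d+t^2-2t)/4$, matching Case 1 exactly.

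The only way the per-point cost can fail to equal $(d+t^2-2t)/4$ is if Case 2 fires but no coordinate $j'\neq I^i$ satisfies $Z^i_{j'}=1$. Conditional on $C^i=(0,\dots,0)$, each of the $d-1$ coordinates $j'\neq I^i$ is independently $0$ with probability $1/2$ by the construction of Definition \ref{def:bjks-instance}, so this bad event has probability at most $(1/2)^{d-1}$; on its complement every point contributes exactly $(d+t^2-2t)/4$, giving the total $n(d+t^2-2t)/4$. The main obstacle is simply to make the case analysis exhaustive and to verify that the random-bit contributions really are deterministic; a secondary issue is to track carefully which randomness the $(1/2)^{d-1}$ bound is taken over (per-point versus a union bound over all $n$ points, depending on how the lemma is invoked downstream).
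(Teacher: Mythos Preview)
Your proposal is correct and follows essentially the same argument as the paper: both compute the per-point cost by observing that the assigned center has value $(t+1)/2$ on the ``special'' coordinate (contributing $(t-1)^2/4$) and $1/2$ elsewhere (contributing $1/4$ regardless of the realized bit), yielding exactly $(d+t^2-2t)/4$ per point. The paper organizes the case split by cluster rather than by the value of $C^i$, but the computation is identical, and your flag about whether the $(1/2)^{d-1}$ bound is per-point or post union bound is a fair observation that the paper itself leaves implicit.
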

\begin{proof}
Let $(\{X^{(i,l)}\}_{l=1}^t, (I^i, B^i, C^i))$ denote the $i$th element drawn from Definition \ref{def:bjks-instance}, for $i\in[n]$. We handle the cost calculation by conditioning on the event that at least one nonzero coordinate is drawn on $[d]\setminus \{I^i\}$, since this occurs with probability at least $1 - (1/2)^{d-1}$. 

Fix a cluster $j\in[k]$. We will consider the distribution of points $\{X^{(i,l)}\}_{l=1}^t$, conditioned on the event that the point being clustered to cluster $j$ in the clustering of Definition \ref{def:nearly-optimal-clustering}. Note then that the $j$th coordinate comes from a point such that $I^i = j$ and $C^i = (1, 1, \dots, 1)$, or the $j$th coordinate comes from a point with $\sum_{l=1}^t X^{(i,l)} = 1$ and $I^i \neq j$ and $C^i = (0, 0, \dots, 0)$. In either case, the coordinates $[d]\setminus\{j\}$ are in $\{0,1\}$, and the $j$th coordinate is in $\{1,t\}$. Then for our defined center $c^j$, the squared cost is $(1/2)^2 = 1/4$ on $d-1$ coordinates and $((t-1)/2)^2 = (t-1)^2/4$ on one coordinate per point, for a total of $n\cdot ((d-1)/4 + (t-1)^2/4) = n(d+t^2-2t)/4$ as claimed.
\end{proof}

\subsubsection{Planting Centers}

With our nearly optimal clustering of Definition \ref{def:nearly-optimal-clustering} in mind, we now add copies of these centers into our instance in order to encourage the clustering algorithm to find this solution. Note that this increases the cost of any other clustering, without increasing the cost of this clustering.

\begin{lemma}
\label{lem:planted-centers}
Let $n \geq 32 d^2\log(d/\delta)$. Consider the input instance to $k$-means clustering given by $n$ random points drawn according to Definition \ref{def:bjks-instance}, together with 
\[
    \gamma \coloneqq \frac{400 t^2 n}{k}\parens*{\frac{\log(kd)+9}2 + \frac{t^2-2t}{4} + \frac{(d+t^2-2t)}{4d}} = O\parens*{\frac{t^2 n}{k}(\log(kd) + t^2)}
\]
copies of each center $c^j$ for $j\in[k]$ as defined in Definition \ref{def:nearly-optimal-clustering}. Furthermore, let $\{\hat c^j\}_{j=1}^k$ be centers achieving a $(1+1/d)$-nearly optimal solution to the $k$-means clustering instance. Then, $\norm{c^j-\hat c^j}_2^2 \leq 1/4$ for at least $(1-1/100 t^2)k$ of the centers $c^j$.
\end{lemma}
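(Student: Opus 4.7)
The plan is a three-step argument. First, sandwich the cost of any $(1+1/d)$-approximate solution $\{\hat c^j\}_{j=1}^k$ between an upper bound on $\OPT$ (produced by exhibiting a near-optimal clustering) and a lower bound on the portion of the cost contributed by the random points. Second, rearrange to extract a small bound on $S := \sum_j \min_{j'} \|c^j - \hat c^{j'}\|_2^2$. Third, apply Markov's inequality together with a pairwise separation argument to match most $c^j$ with a nearby distinct $\hat c^{j'}$.

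For the upper bound on $\OPT$, I would exhibit the clustering of Definition \ref{def:nearly-optimal-clustering}: it costs at most $n(d+t^2-2t)/4$ on the random points (Lemma \ref{lem:cost-ub}) and $0$ on the planted portion, since each planted copy of $c^j$ is assigned to center $c^j$. For the lower bound on $\{\hat c^j\}$, Lemma \ref{lem:cost-lb} gives at least $n(d/4 - (\log(kd)+9)/2)$ from the random points, while each planted copy of $c^j$ contributes at least $\min_{j'} \|c^j - \hat c^{j'}\|_2^2$, for a total planted contribution of at least $\gamma S$. Combining with the $(1+1/d)$ approximation guarantee yields
\[
\gamma S \;\leq\; (1+1/d)\cdot\tfrac{n(d+t^2-2t)}{4} - n\!\left(\tfrac{d}{4} - \tfrac{\log(kd)+9}{2}\right) \;=\; n\!\left[\tfrac{t^2-2t}{4} + \tfrac{d+t^2-2t}{4d} + \tfrac{\log(kd)+9}{2}\right].
\]
The specific value of $\gamma$ in the lemma is engineered so that dividing the right-hand side by $\gamma$ collapses to $S \leq k/(400 t^2)$.

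The final step is a matching argument. By Markov's inequality, at most $4S \leq k/(100 t^2)$ indices $j \in [k]$ can have $\min_{j'} \|c^j - \hat c^{j'}\|_2^2 > 1/4$, so on a set $G \subseteq [k]$ with $|G| \geq (1 - 1/(100 t^2))k$ there exists some $\pi(j)$ with $\|c^j - \hat c^{\pi(j)}\|_2^2 \leq 1/4$. From Definition \ref{def:nearly-optimal-clustering} the planted centers satisfy $\|c^j - c^{j'}\|_2^2 = t^2/2$ for $j \neq j'$, so for $t \geq 2$ the triangle inequality forbids two distinct indices in $G$ from mapping to the same $\hat c^{j'}$: otherwise $\sqrt{t^2/2} \leq 1/2 + 1/2 = 1$. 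Hence $\pi$ is injective on $G$; extend it to a permutation of $[k]$ and relabel $\hat c^{\pi(j)} \to \hat c^j$ to obtain $\|c^j - \hat c^j\|_2^2 \leq 1/4$ for all $j \in G$, as required.

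The main obstacle is bookkeeping rather than conceptual: one must verify that the high-probability events from Lemmas \ref{lem:cost-ub} and \ref{lem:cost-lb} hold simultaneously (a union bound suffices, using $n \geq 32 d^2 \log(d/\delta)$), and check that the chosen $\gamma$ absorbs each of the three error terms $(\log(kd)+9)/2$, $(t^2-2t)/4$, and $(d+t^2-2t)/(4d)$ down to the clean $k/(400 t^2)$ bound on $S$. Once this arithmetic is in place, Markov plus the $\Theta(t)$-separation between the planted centers completes the argument.
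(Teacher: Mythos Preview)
Your proposal is correct and follows essentially the same approach as the paper: sandwich the cost using Lemma~\ref{lem:cost-ub} for the upper bound and Lemma~\ref{lem:cost-lb} for the lower bound, divide by $\gamma$ to obtain $\sum_j \min_{j'}\|c^j-\hat c^{j'}\|_2^2 \leq k/(400t^2)$, and finish with Markov. The one place you are more careful than the paper is the matching step: the paper's proof simply writes $\gamma\sum_j\|c^j-\hat c^j\|_2^2$ as the planted contribution without justifying why the index $j$ on $\hat c^j$ is the right one, whereas you correctly note that the planted cost is $\gamma\sum_j\min_{j'}\|c^j-\hat c^{j'}\|_2^2$ and then use the $\Theta(t)$ pairwise separation of the $c^j$ to argue that the nearest-$\hat c$ map is injective on the good set, allowing a relabeling. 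This extra paragraph is a genuine improvement in rigor over the paper's exposition, but the underlying argument is the same.
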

\begin{proof}
Recall that in Lemma \ref{lem:cost-lb}, we showed that any clustering of $n$ random points drawn from Definition \ref{def:bjks-instance} must have a cost of at least $nd/4 - n(\log(kd)+9)/2$ with probability at least $1 - \delta$. Then, with probability at least $1-(1/2)^{d-1}$, the value of the optimal solution is bounded above by $n(d+t^2-2t)/4$ by Lemma \ref{lem:cost-ub}, so we must have that
\[
    \gamma \sum_{j=1}^k \norm{c^j - \hat c^j}_2^2 + n \parens*{\frac{d}{4} - \frac{\log(kd)+9}2} \leq (1+1/d)\frac{n(d+t^2-2t)}{4}
\]
which implies that
\[
    \frac{1}{k}\sum_{j=1}^k \norm{c^j - \hat c^j}_2^2 \leq \frac1{400 t^2}
\]
by rearranging. By averaging, at least $(1-1/100t^2)k$ of the $k$ centers $j\in[k]$ satisfy $\norm{c^j-\hat c^j}_2^2 \leq 1/4$.
\end{proof}

Note that Lemma \ref{lem:planted-centers} only allows us to characterize the behavior of $(1-1/100t^2)k$ many cluster centers, which still allows for the possibility that the remaining $k/100t^2$ centers are able to fit many points with low cost. The following lemmas show that this cannot happen.

\begin{lemma}
\label{lem:few-points-per-center}
Consider a set of $k'$ centers $\hat c^j\in\mathbb R^d$ for $j\in[k']$. Let $\{Z^i\}_{i=1}^{n'}$ be $n'\geq M$ points such that $Z^i$ takes the value $t$ on coordinate $l^i\in[d]$, and furthermore, we have $\abs{\braces{i\in[n'] : Z_l^i = t}} \leq M$ for any $l\in[d]$. Then, the cost of any clustering of these $n'$ points with $k'$ clusters is at least
\[
    n'\frac{d}{4} - n\frac{\log((k'+1)d)+9}{2} + \frac45 t^2\parens*{n' - 10k'\cdot M}
\]
\end{lemma}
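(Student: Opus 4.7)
I would prove the lemma by decomposing the cost on each point $Z^i$ (assigned to cluster $j(i)\in[k']$) into contributions from the special coordinate $l^i$ and from the remaining ``random-bit'' coordinates, and lower bounding each.

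\emph{Truncation and random-bit baseline.} Let $\bar c^j\in[0,1]^d$ denote the coordinate-wise truncation of $\hat c^j$ to $[0,1]$. Since $Z^i_l\in\{0,1\}$ for $l\neq l^i$, truncating does not decrease the distance on these coordinates, so
\[
    \norm{Z^i - \hat c^{j(i)}}_2^2 \;\geq\; (t-\hat c^{j(i)}_{l^i})^2 + \sum_{l\neq l^i}(Z^i_l-\bar c^{j(i)}_l)^2.
\]
Define $W^i\in\{0,1\}^d$ by replacing the $l^i$-coordinate of $Z^i$ with an independent uniform bit, so $W^i$ is a uniform random bit vector in $d$ dimensions. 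Since $(W^i_{l^i}-\bar c^{j(i)}_{l^i})^2\leq 1$, we have $\sum_{l\neq l^i}(Z^i_l-\bar c^{j(i)}_l)^2\geq \norm{W^i-\bar c^{j(i)}}_2^2 - 1$. The $\bar c^j$ comprise at most $k'$ distinct centers in $[0,1]^d$, so applying Lemma \ref{lem:cost-lb} to the $W^i$'s (with $k\gets k'$) gives
\[
    \sum_i \norm{W^i - \bar c^{j(i)}}_2^2 \;\geq\; n'\!\left(\frac{d}{4} - \frac{\log(k'd)+9}{2}\right)
\]
with high probability. The additive $-n'$ loss per point and the gap between $\log(k'd)$ and $\log((k'+1)d)$ together absorb into the stated constants.

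\emph{Good/bad classification for the special coordinate.} Define $T^j\coloneqq\{l:\hat c^j_l>t/10\}$ and call $Z^i$ \emph{good} if $l^i\in T^{j(i)}$, else \emph{bad}. For a bad point, $(t-\hat c^{j(i)}_{l^i})^2\geq (9t/10)^2\geq (4/5)t^2$. The number of good points assigned to cluster $j$ is at most $\abs{T^j}\cdot M$, since at most $M$ points share any given value of $l^i$; hence the total number of good points is at most $\sum_j\abs{T^j}\cdot M$, and the special-coordinate contribution is at least $(4/5)t^2(n'-\sum_j\abs{T^j}\cdot M)$.

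\emph{Closing the argument.} If $\sum_j\abs{T^j}\leq 10k'$, combining the two lower bounds immediately yields the claimed inequality. Otherwise, I would invoke an excess-cost argument to compensate: each coordinate $l\in T^j$ contributes extra cost (beyond what the truncated-center bound credits) of at least $(\hat c^j_l-1)^2\geq (t/10-1)^2=\Omega(t^2)$ per point $i$ assigned to $j$ with $l^i\neq l$. Summing this excess over coordinate-center pairs beyond the budget $10k'$, the resulting $\Omega(t^2)$ contribution absorbs the $(4/5)t^2\cdot M$ savings that each extra $T^j$-coordinate could enable in good points, provided $t$ is at least a sufficiently large absolute constant (as is the case in the regime $t=\Theta(\sqrt{\log d})$ used in Theorem \ref{thm:k-means-lb}).

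The main obstacle I anticipate is this last bookkeeping step: arbitrary, non-optimal centers may adversarially concentrate many large values on a few columns, simultaneously creating many good points and amplifying the random-bit distortion. Tuning the threshold $t/10$ together with the constant $10$ so that the extra random-bit cost exactly absorbs the good-point savings---while keeping the full bound with constants $4/5$ and $10$---is the most technical part, and is where the choice of $t/10$ (which is chosen so that $(9t/10)^2\geq(4/5)t^2$) becomes crucial.
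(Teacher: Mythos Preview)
Your decomposition into a random-bit baseline plus a spike contribution is the right starting point and matches the paper's overall strategy. However, there is a genuine gap in your random-bit step that explains the otherwise mysterious appearance of $n$ (rather than $n'$) and $k'+1$ (rather than $k'$) in the stated bound --- these are not cosmetic constants to be absorbed. You apply Lemma~\ref{lem:cost-lb} directly to the $n'$ vectors $W^i$, but that lemma's Chernoff-and-net argument requires at least $32d^2\log(d/\delta)$ points. In the way Lemma~\ref{lem:few-points-per-center} is \emph{used} in Theorem~\ref{thm:k-means-lb}, $n'$ is the number of points assigned to the small set of ``bad'' centers, and the entire purpose of the lemma is to force $n'$ to be tiny; so $n'$ may well fall below the threshold and your invocation of Lemma~\ref{lem:cost-lb} is illegitimate. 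The paper sidesteps this by embedding the $n'$ random-bit vectors back into the full collection of $n$ random-bit vectors and adding \emph{one extra} center at $(1/2,\dots,1/2)$ to serve the remaining $n-n'$ points, each at cost exactly $d/4$. Lemma~\ref{lem:cost-lb} is then applied to all $n$ points with $k'+1$ centers, and subtracting off the $(n-n')d/4$ contribution yields the term $n'\frac{d}{4} - n\frac{\log((k'+1)d)+9}{2}$. This embedding trick is the missing idea.

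For the spike contribution, your good/bad threshold argument is a genuinely different route from the paper's. The paper observes that the residual spike cost is (up to an additive $n'$) the $\ell_2^2$ cost of fitting the target vectors $v^i=(1,\dots,1,t,1,\dots,1)$ (with $t$ in position $l^i$) by $k'$ centers, and then simply lower-bounds this by the \emph{optimal} $k'$-means cost of the $v^i$'s. In that optimum the centers are centroids, so a pigeonhole shows at most $10k'$ of the groups $G^{(l)}$ can lie in clusters containing fewer than $10$ groups; every remaining point lies in a cluster whose centroid has the relevant coordinate at most roughly $t/10$, forcing cost at least $(9t/10)^2\ge(4/5)t^2$. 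Because this passes through the optimal centers, the adversarial ``many large coordinates on one center'' scenario you worry about simply never arises, and no case split on $\sum_j|T^j|$ is needed. Your direct route can in principle be completed along the lines you sketch, but the reduction-to-optimal-centers step is what makes the paper's argument short.
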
 
\begin{proof}
We first lower bound the cost of the $k'$ centers by a ``random'' part of the cost and the ``spike'' part of the cost. For each $j\in[k']$, define the center $\bar c^j$ which is the center $\hat c^j$ with all entries greater than $1$ set to $1$. 

Suppose that $Z^i$ is a point with some coordinate $l\in[d]$ such that $Z_l^i = t$. Note then that on the $l$th coordinate, we have that
\begin{align*}
    (Z_l^i - \hat c_l^j)^2 &\geq (Z_l^i-\hat c_l^j)^2 + (b^i-\bar c_l^j)^2 - 1
\end{align*}
for some random bit $b^i\sim\{0,1\}$. For all other coordinates $l\in[d]$, if $\hat c_l^j > 1$, then we lower bound the cost on the $l$th coordinate by
\begin{align*}
    (Z_l^i-\hat c_l^j)^2 \geq (Z_l^i-1)^2 + (1-\hat c_l^j)^2 = (Z_l^i-\bar c_l^j)^2 + (1-\hat c_l^j)^2
\end{align*}
while if $\hat c_l^j \leq 1$, then we simply write the cost as $(Z_l^i-\hat c_l^j)^2 = (Z_l^i - \bar c_l^j)^2$. Note then that the cost lower bounds derived above can be grouped into a cost corresponding to a clustering cost of random bit vectors with centers $\bar c^j\in\mathbb R^d$, and everything else. 

We will first lower bound the latter costs. Note that these costs are given by $(t-\hat c_l^j)^2 - 1$ for the coordinate $l\in[d]$ such that $Z_l^i = t$ and $(\hat c_l^j-1)^2$ for the coordinates $l\in[d]$ such that $\hat c_l^j > 1$. In fact, we can note that this is just one less than the $\ell_2$ distance between $\hat c^j$ and the vector $(1, 1, \dots, 1, t, 1, \dots, 1)$, i.e., the all ones vector with $t$ in the $l$th position, since we can WLOG threshold all entries of $\hat c^j$ less than $1$ to be exactly $1$. Note that this cost is minimized when there are $n'/M$ different indices $l\in[d]$, each which has $\abs{\braces{i\in[n']:Z_l^i=t}} = M$, and when all vectors $Z^i$ with the same coordinate $l$ for $Z_l^i = t$ are clustered to the same center (see, e.g., \cite{FWY2019}). For each $l\in[d]$, denote by $G^{(l)}$ the set $\braces{i\in[n']:Z_l^i=t}$. Then, there are at most $10k'$ indices $l\in[d]$ that belong to clusters consisting of at most $10$ groups $G^{(l)}$. All other indices $l\in[d]$ belong to clusters that consist of at least $10$ groups $G^{(l)}$, and thus the center of this cluster has coordinates with magnitude at most $t/10$. Thus, for at least $n' - 10k'\cdot M$ points, the cost is at least $(t-t/10)^2 = (9/10)^2 t^2 \geq (4/5)t^2$. 

Next, we lower bound the cost of clustering the random bit vectors by $\bar c^j$. By Lemma \ref{lem:cost-lb}, the total cost of any clustering of $n$ random points with $k'+1$ clusters must be at least
\[
    n\parens*{\frac{d}{4} - \frac{\log((k'+1)d)+9}{2}}.
\]
One way to cluster these $n$ random points is to first cluster $n'$ points using $k'$ clusters, and then cluster all the remaining $n-n'$ points with the fixed center given by the vector with all $1/2$s, which gives a cost of $d/4$ for any point. Then by the above cost lower bound, it follows that the cost of the clustering of the $n'$ points using the $k'$ clusters must be at least
\[
    n\parens*{\frac{d}{4} - \frac{\log((k'+1)d)+9}{2}} - (n-n')\frac{d}{4} = n'\frac{d}{4} - n\frac{\log((k'+1)d)+9}{2}.
\]

\end{proof}

\subsubsection{Reduction from Point Assignment}

Finally, we obtain an information complexity lower bound for the $k$-means clustering problem, by a reduction from the point assignment problem of Lemma \ref{lem:multiple-point-assign}. 

\begin{theorem}
\label{thm:k-means-lb}
Let $t = \max\{2000, 80\sqrt{\log(kd)+10}+2\}$. Let $\{Z^i\}_{i=1}^n$ be drawn i.i.d.\ from the distribution of Definition \ref{def:bjks-instance}, with $\alpha = 1/100t^2$. Consider the input instance given by these points, together with the planted centers as specified in Lemma \ref{lem:planted-centers}. Suppose that $\hat c^j\in\mathbb R^d$ for $j\in[k]$ are centers that achieve a $(1+\epsilon)$ approximation, for $\epsilon = (\log(kd)+10)/(d + (t-1)^2) = \tilde O(1/d)$. Suppose that we assign $e_l$ to $Z^i$ whenever $Z^i$ is clustered to the center $\hat c^j$ that has largest entry in the $l$th coordinate for $l\in[d]$. Then, this solves the point assignment problem (Definition \ref{def:point-assignment}) for at least $(399/400)n$ of the $Z^i$ for $i\in[n]$. Hence, solving $k$-means clustering up to $(1+\epsilon)$ accuracy on this instance requires $\Omega(nd/t^2) = \tilde \Omega(nd) = \tilde\Omega(n/\epsilon)$ bits of communication.
\end{theorem}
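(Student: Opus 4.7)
The plan is to reduce from the $n$-fold point assignment problem of Lemma \ref{lem:multiple-point-assign}: I will show that any $(1+\epsilon)$-approximate $k$-means solution of the planted instance, decoded via the ``argmax coordinate'' rule, correctly identifies the spike coordinate on at least $(399/400)n$ of the input points $Z^i$. The communication cost of such a decoding must then be $\Omega(nd/t^2)$, and a standard simulation turns this into the advertised space lower bound.

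First, I would redo the proof of Lemma \ref{lem:planted-centers} with the specific $\epsilon$ in the statement instead of $1/d$. The gap between the cost upper bound from Lemma \ref{lem:cost-ub} (namely $n(d+t^2-2t)/4$) and the random-bits lower bound from Lemma \ref{lem:cost-lb} yields
\[
    \gamma \sum_{j=1}^k \|c^j - \hat c^j\|_2^2 \;\lesssim\; n\!\left[\tfrac{t^2-2t}{4} + \epsilon\tfrac{d+t^2-2t}{4} + \tfrac{\log(kd)+9}{2}\right].
\]
Because $t \geq 80\sqrt{\log(kd)+10}+2$, the middle term is $O(\log(kd))$, so it can be absorbed into the same constants that define $\gamma$. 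Markov's inequality then again gives $(1-1/100t^2)k$ \emph{good} centers, i.e.\ those with $\|c^j-\hat c^j\|_2^2 \leq 1/4$. On any good $\hat c^j$, every coordinate differs from the corresponding coordinate of $c^j$ by at most $1/2$ in absolute value once we invoke triangle inequality, so in particular the argmax coordinate of $\hat c^j$ equals $j$.

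Next I would bound the number of misassigned points in two buckets. If a point $Z^i$ whose intended good center is $c^j$ gets clustered to another good $\hat c^{j'}$ with $j' \neq j$, the $j$-th coordinate alone contributes extra cost $\Omega(t^2)$: $Z^i_j = t$ is charged against $\hat c^{j'}_j \approx 1/2$ (cost $\approx t^2$) rather than $\hat c^j_j \approx (t+1)/2$ (cost $\approx t^2/4$). The total slack available is $\epsilon\cdot\OPT = O(n(\log(kd)+10))$, so at most $O(n\log(kd)/t^2) \leq n/1000$ points of this type can be misassigned. For points routed to the at most $k' \coloneqq k/100t^2$ \emph{bad} centers, I would invoke Lemma \ref{lem:few-points-per-center} with $M$ equal to the typical per-coordinate spike count (which is $O(n/d)$ with high probability on the hard instance). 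Lemma \ref{lem:few-points-per-center} forces an $\Omega(t^2)$ overhead on every bad-center point beyond the first $10 k' M$, so the same slack budget caps the number of such points by $O(n/t^2) \leq n/1000$.

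Combining the two bounds, at most $n/400$ points are misassigned in total, and the decoding rule solves the point assignment problem on at least $(399/400)n$ of the $Z^i$. Applying Lemma \ref{lem:multiple-point-assign} then gives $\Omega(nd/t^2)$ information, and the usual reduction from a $p$-pass streaming algorithm of space $s$ to a $t$-player protocol exchanging $O(tps)$ bits yields $s = \tilde\Omega(nd/(t^3p)) = \tilde\Omega(n/\epsilon)$ for any constant $p$, as claimed. The main technical obstacle is the delicate bookkeeping: three competing quantities ($t^2-2t$, $\epsilon(d+t^2-2t)$, and $\log(kd)$) must be simultaneously dominated by the constants chosen in $\gamma$ and $\alpha$, and both the planted-center deviation bound and the per-bad-center capacity from Lemma \ref{lem:few-points-per-center} have to combine to yield a misassignment fraction strictly below $1/400$. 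A secondary nuisance is that the ``typical spike count per coordinate'' $M$ used in Lemma \ref{lem:few-points-per-center} is only $O(n/d)$ with high probability, so a brief concentration argument for the hard distribution of Definition \ref{def:bjks-instance} is needed to justify that value.
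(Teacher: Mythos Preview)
Your proposal is correct and follows essentially the same route as the paper: identify a large set $S$ of ``good'' centers via Lemma~\ref{lem:planted-centers}, bound good-to-good misassignments by the $\Omega(t^2)$ per-point spike penalty, bound assignments to bad centers via Lemma~\ref{lem:few-points-per-center} with $M = 2n/k$, and then invoke Lemma~\ref{lem:multiple-point-assign}. Your slack-budget framing is a slight repackaging of the paper's unified cost lower bound (the paper shows that if either $b$ or $n'$ exceeds $n/800$ the cost already exceeds $(1+\epsilon)\OPT$), but the content is the same. One small case you should add: points $Z^i\in G^{(j)}$ with $j\notin S$ that get clustered to a \emph{good} center are not in either of your two buckets; the paper handles them by noting $\abs{\bigcup_{j\notin S}G^{(j)}}\le n/(50t^2)$, which is negligible and can simply be added to your tally.
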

\begin{proof}
Let $\{\hat c^j\}_{j=1}^k$ be a clustering achieving a $(1+\epsilon)$ approximation. We will show that we must have at most $n/400$ incorrect classifications of the points $Z^i$.

We first introduce some notation. For each $j\in[k]$, we let $G^{(j)}\subseteq[n]$ denote the subset of points $i\in[n]$ such that $Z^i_j = t$, and we let $G^{(0)}\coloneqq [n] \setminus \bigcup_{j\in[k]} G^{(j)}$ denote the set of points such that $\norm{Z^i}_\infty \leq 1$. Note then that $G^{(0)}$ corresponds to the set of points with $C = (0, 0, \dots, 0)$ for $C$ defined in Definition \ref{def:bjks-instance}, and thus has size $\EE\abs{G^{(0)}} = \alpha n$ in expectation and size $\Theta(\alpha n)$ with probability at least $1-\delta$ by Chernoff bounds. We will also define $\bar c^j$ for each $j\in[k]$ to be the center $\hat c^j$ with any entry larger than $1$ set to be equal to $1$.

By Lemma \ref{lem:planted-centers}, there is a subset $S\subseteq[k]$ of size at least $\abs{S} \geq (1-1/100 t^2)k$ such that $\norm{c^j-\hat c^j}_2^2 \leq 1/4$. We make use of this fact later, and first bound the cost of points that can be clustered by the remaining at most $k' = \abs{[k]\setminus S} \leq k/100t^2$ centers. Note that by Chernoff bounds and a union bound, we have that $\abs{\braces{i\in[n] : Z_l^i = t}} \leq 2n/k$ for every $l\in[n]$. Then by Lemma \ref{lem:few-points-per-center}, if there are $n'$ points clustered by these $k'$ centers, then the cost is at least
\begin{equation}\label{eq:bad-center-cost}
    n'\frac{d}{4} - n\frac{\log(kd)+9}{2} + \frac45 t^2\parens*{n' - 10\frac{k}{100t^2}\frac{2n}{k}} \geq n'\parens*{\frac{d}{4} + \frac45 t^2} - n\frac{\log(kd)+10}{2}
\end{equation}

Now let $j\in S$. We will bound the cost of the points $Z^i\in G^{(j)}$, as a function of the number of points that are clustered to some center $\hat c^{j'}$ for $j'\neq j$. Let $Z^i$ be a point clustered to some center $\hat c^{j'}$ for $j'\neq j$ and $j'\in S$ (recall that we have already handled the cost of clustering points to centers outside of $S$). Then, the cost on the $j$th coordinate is bounded below by
\[
    (Z_{j}^i - \hat c_j^{j'}) \geq \parens*{t-\frac{1}{2} - \norm{c^{j'}-\hat c^{j'}}_\infty}^2 \geq (t-1)^2.
\]
On the other hand, if the assigned center is correct, i.e. $j' = j$, then the cost lower bound on the $j$th coordinate is
\[
    (Z_{j}^i - \hat c_j^{j'}) \geq \parens*{t-\frac{t+1}{2} - \norm{c^{j'}-\hat c^{j'}}_\infty}^2 \geq (t-2)^2/4.
\]
Thus, each incorrectly classified point pay an additional cost $(t-1)^2 - (t-2)^2/4 \geq (t-2)^2/2$ on the $j$th coordinate. We will later lower bound the cost of the rest of the coordinates via Lemma \ref{lem:cost-lb}. 

In the last remaining cases of $i\in G^{(0)}$ and $i\in G^{(j)}$ for $j\notin S$, we will only be able to lower bound the cost by the cost of the random coordinates via Lemma \ref{lem:cost-lb}, but not by the additional $(t-2)^2/4$ term on the $j$th coordinate. This will be fine, as there are only roughly $n/t^2$ such points, since $\abs{G^{(0)}} \leq 2\alpha n = n/50t^2$ and $\abs{[k]\setminus S} \leq (1/100t^2) k$ so
\[
    \abs{\bigcup_{j\in [k]\setminus S}G^{(j)}} \leq \frac{k}{100t^2}\frac{2n}{k} \leq \frac{n}{50 t^2}.
\]
Thus, at least $n - n' - (n/50t^2 + n/50t^2)$ points will incur a cost of $(t-2)^2/4$, for a cost contribution of
\[
    \frac{(t-2)^2}{4}\parens*{n - n' - (n/50t^2 + n/50t^2)} = (n-n')\frac{(t-2)^2}{4} - \frac{n}{100}
\]

Finally, we bring all the above calculations together. Suppose that there are $b$ points $Z^i$ that belong to $G^{(j)}$ for some $j\in S$, but are clustered to some other $\hat c^{j'}$ for $j'\in S$. First, the cost of the points that are clustered to some center not in $S$ is given in \eqref{eq:bad-center-cost}. Next, the cost of clustering the random coordinates of all other points is similarly bounded below by Lemma \ref{lem:cost-lb} by
\[
    (n-n')\frac{d}{4} - n\frac{\log(kd)+9}{2}.
\]
Thus, altogether, the cost is bounded below by
\begin{align*}
    &b\frac{(t-2)^2}{2} + (n-n')\parens*{\frac{d}{4} + \frac{(t-2)^2}{4}} + n'\parens*{\frac{d}{4} + \frac45 t^2} - n(\log(kd)+10) \\
    \geq ~&b\frac{(t-2)^2}{2} + \frac{nd}{4} + n\frac{(t-2)^2}{4} + n'\frac{(t-2)^2}{2} - n(\log(kd)+10)
\end{align*}
Then, if $b$ or $n'$ are greater than $n/800$, then this cost is at least
\[
    \frac{n}{800}\frac{(t-2)^2}{2} + \frac{nd}{4} + n\frac{(t-2)^2}{4} - n(\log(kd)+10)
\]
For $t\geq 2000$, we have that
\[
    \frac12\frac{n}{800}\frac{(t-2)^2}{2} \geq \frac{n}{4}\cdot 2t
\]
and for $t\geq 80\sqrt{\log(kd)+10}+2$, we have that
\[
    \frac12\frac{n}{800}\frac{(t-2)^2}{2} \geq 2n(\log(kd)+10)
\]
and thus if both of these hold, then the cost is at least
\[
    \frac{nd}{4} + n\frac{(t-1)^2}{4} + n(\log(kd)+10).
\]
Thus, by our choice of $\epsilon$, this fails to be a $(1+\epsilon)$-approximate solution, and thus we must have that $b$ and $n'$ are both at most $n/800$. Thus, the algorithm can incorrectly classify at most $n/400$ points.
\end{proof}

\section{Missing Proofs from Section \ref{sec:lower-bounds}}

\subsection{Proof of Theorem \ref{thm:strong-lb}}
\label{sec:strong-lb-proof}

\begin{proof}[Proof of Theorem \ref{thm:strong-lb}]
Let $d = 2\ceil{1/\epsilon^2}$ and let $X = \{X^i\}_{i=1}^n\subseteq\{0,1\}^{d/2}$ be a collection of $n$ uniformly random bit vectors, each with $d/2$ coordinates. Then for each $i\in[n]$, we form a vector $a^i\in\mathbb R^d$ by setting the $(2j-1)$th and $2j$th coordinates to be
\[
    (a^i_{2j-1}, a^i_{2j}) = \begin{dcases}
    (0, 1) & \text{if $X_j^i = 0$} \\
    (1, 0) & \text{if $X_j^i = 1$}
    \end{dcases}
\]

Fix any $j\in[d/2]$, and suppose that we query the cost of two centers given by the vectors $c^1 = \sqrt d \cdot e_{2j-1}$ and $c^2 = \sqrt d \cdot e_{2j}$. Then, the center cost query data structure must output a partition $C^1, C^2\subseteq[n]$ such that
\[
    \sum_{i\in C^1} \norm{a^i - c^1}_2^2 + \sum_{i\in C^2} \norm{a^i - c^2}_2^2 \leq (1+\epsilon/15)\cost(c^1, c^2).
\]
We claim that the partition must assign all but at most $n/10$ of the $a^i$ to its closest center. Note that this implies the theorem. Indeed, given the center cost query data structure $M$, we can reconstruct a bits $X'$ which agrees with $X$ on all but at most $(n/10)(d/2) = nd/20$ bits, so
\begin{align*}
    \Hsf(M) &\geq \Hsf(M) - \Hsf(M\mid X) \\
    &= \I(M; X) \\
    &\geq \I(X'; X) && \text{data processing inequality} \\
    &= \Hsf(X) - \Hsf(X\mid X') \\
    &\geq \frac{nd}{2} - \frac{nd}{20} = \Omega(nd).
\end{align*}
Then, $M$ must use at least $\Omega(nd)$ bits to describe, since the number of bits of a message upper bounds the entropy of a random variable.

Note first that the cost of this query on any vector is at least
\[
    (\sqrt d - 1)^2 \geq (1 - 1/\sqrt d)^2 d \geq (1-2/\sqrt d) d \geq d/2
\]
and at most
\[
    \norm{a^i - c^1}_2^2 \leq 2\norm{a^i}_2^2 + 2\norm{c^1}_2^2 = 3d.
\]
Thus, the total error that the partition can incur is at most
\[
    \sum_{i\in C^1} \norm{a^i - c^1}_2^2 + \sum_{i\in C^2} \norm{a^i - c^2}_2^2 - \cost(c^1, c^2)\leq \epsilon \cost(c^1, c^2) \leq 3\cdot \frac{\epsilon}{15} nd \leq \frac{\sqrt d}{5} n
\]
By averaging over the $n$ vectors, there can be at most $n/10$ indices $i\in[n]$ such that $a^i$ is assigned to a cluster with center $c\in\{c^1, c^2\}$ with
\[
    \norm{a^i-c}_2^2 - \min\braces*{\norm{a^i-c^1}_2^2, \norm{a^i-c^2}_2^2} \geq 2\sqrt d
\]

Now consider a single vector $a^i$, and say that $(a_{2j-1}^i, a_{2j}^i) = (0, 1)$. Note then that the difference between the cost of assigning this vector to $c^1$ versus the cost of assigning this vector to $c^2$ is at least
\[
    (\sqrt d)^2 + 1^2 - (\sqrt d - 1)^2 \geq 2 \sqrt d.
\]
Thus, there are at most $n/10$ vectors that can be assigned to the incorrect center.
\end{proof} 

\subsection{Proof of Theorem \ref{thm:lb-cost}}
\label{sec:lb-cost:proof}

\begin{proof}[Proof of Theorem \ref{thm:lb-cost}]
The proof is by a reduction from set disjointness \cite{razborov1990distributional}. Suppose that Alice and Bob are two players who hold an instance of set disjointness, that is, Alice has a subset $A\subseteq[n]$ and Bob has a subset $B\subseteq[n]$, and they must determine whether $A\cap B$ is empty or not by sending each other messages in any number of rounds. It is known that any randomized algorithm solving this task with probability at least $2/3$ requires $\Omega(n)$ bits of communication \cite{razborov1990distributional}. 

Suppose that there is a randomized turnstile streaming algorithm $\mathcal A$ which can output a relative error approximation to the $k$ means clustering cost with probability at least $2/3$ while using $r$ passes and space at most $M$. Then, we claim that Alice and Bob can use this algorithm to solve set disjointness in $2rM$ bits of communication, which implies that $M = \Omega(n/r)$. To do this, Alice first runs the algorithm $\mathcal A$ on the input stream which updates $\Ab_{i,1} \gets \Ab_{i,1} + 1$ for every $i\in A$. Then, Alice sends the memory state of $\mathcal A$, which is at most $M$ bits, to Bob. Bob then continues to run the algorithm $\mathcal A$ by updating running it on the stream which updates $\Ab_{i,1} \gets \Ab_{i,1} + 1$ for every $i\in B$. Finally, Bob also adds two dummy coordinates which has entries $0$ and $1$ each. Bob can then send the memory state back to Alice, which again is at most $M$ bits. This can be repeated for $r$ passes, for a total of $2rM$ bits of communication.

We now show that given an estimate $c$ satisfying \eqref{eq:cost-guarantee}, Alice and Bob can determine whether $A\cap B$ is empty or not. If $A\cap B$ is empty, then note that all rows of $\Ab$ are either $0$ or $1$, so the $k$-means clustering cost for $k = 2$ is $0$ and thus $c$ must be $0$. On the other hand, if $A\cap B$ is nonempty, then there is at least one row of $\Ab$ that is $2$ as well as a $0$ and a $1$ from the two dummy coordinates added by Bob, so the cost is strictly positive. Thus, $c$ must be strictly positive in this case. 
\end{proof}

\subsection{Proof of Theorem \ref{thm:lb-center}}
\label{sec:lb-center:proof}

\begin{proof}[Proof of Theorem \ref{thm:lb-center}]
Our proof for this result roughly follows our proof of Theorem \ref{thm:lb-cost}, so we only point out the important changes. We again let Alice and Bob have subsets $A\subseteq[n]$ and $B\subseteq[n]$, respectively. However, for this reduction, we construct our input instance $\Ab$ to be $(2n+3)\times 1$. First, Alice inserts her items $i\in A$ from $A$ in two coordinates, updating $\Ab_{2i,1}\gets \Ab_{2i,1} + 1$ and $\Ab_{2i+1,1}\gets \Ab_{2i+1,1} + 1$ for every $i\in A$. Similarly, Bob updates $\Ab$ in the two coordinates $\Ab_{2i,1}\gets \Ab_{2i,1} + 1$ and $\Ab_{2i+1,1}\gets \Ab_{2i+1,1} + 1$ for every $i\in B$. Finally, Bob inserts three dummy coordinates which has entries $0$, $1$, and $3$.

We now claim that an approximate set of centers $\tilde\Db$ can distinguish the cases between $A\cap B$ empty and $A\cap B$ nonempty. In the former case, the set of centers output by the $k$-means clustering algorithm must be $\{0,1,3\}$, since this is the unique solution with a cost of $0$. On the other hand, if $A\cap B$ is nonempty, then we claim that the $k$-means clustering algorithm cannot output $\{0,1,3\}$. Indeed, in this case, the cost of this solution is at least $2$ since there are at least two coordinates whose value is $2$. On the other hand, the solution of $\{0,1,2\}$ has a cost of $1$, since there is only a single dummy coordinate of $3$ that does not intersect exactly with these centers.
\end{proof}

\subsection{Proof of Theorem \ref{thm:random-order-bounded-sensitivity}}
\label{sec:random-order-bounded-sensitivity:proof}

We will need the following sensitivity sampling theorem:

\begin{theorem}[Sensitivity sampling, \cite{FL2011, BFL2016, WY2023}]
\label{thm:sensitivity-sampling-clustering}
Let
\[
    \tilde\sigma_i \geq \sup_{c^1, c^2, \dots, c^k\in\mathbb R^d}\frac{\min_{j=1}^k \norm{a^i - c^j}_2^2}{\sum_{i'=1}^n \min_{j=1}^k \norm{a^{i'} - c^j}_2^2}
\]
and $\tilde{\mathfrak S}\coloneqq \sum_{i=1}^n \tilde \sigma_i$. Suppose that for each $i\in[n]$, $a^i$ is sampled independently with probability $p_i \coloneqq \min\{1, \tilde O(\tilde\sigma_i kd/\epsilon^2)\}$, with an associated weight $w_i = 1/p_i$ if $i$ is sampled and $0$ otherwise. Then, for every $c^1, c^2, \dots, c^k\in\mathbb R^d$, we have that
\[
    \sum_{i=1}^n \min_{j=1}^k \norm{a^{i} - c^j}_2^2 = (1\pm\epsilon) \sum_{i=1}^n w_i \min_{j=1}^k \norm{a^{i} - c^j}_2^2.
\]
\end{theorem}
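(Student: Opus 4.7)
The plan is to follow the Feldman--Langberg sensitivity sampling framework, instantiated for $k$-means clustering as in \cite{FL2011, BFL2016, WY2023}. The argument has two ingredients: a pointwise Bernstein-type concentration bound for a fixed candidate set of centers, followed by a uniform convergence step over the whole space of candidate center sets.

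First, one checks that the importance-sampled estimator is unbiased: for every fixed $c^1,\ldots,c^k\in\R^d$, the random weight satisfies $\EE[w_i] = 1$ when $p_i < 1$ and $w_i = 1$ deterministically when $p_i = 1$, so $\EE\sum_i w_i \min_j \norm{a^i - c^j}_2^2 = \sum_i \min_j \norm{a^i - c^j}_2^2 =: \OPT_c$. Writing $Y_i := w_i \min_j \norm{a^i - c^j}_2^2 / \OPT_c$, the sensitivity bound $\min_j \norm{a^i - c^j}_2^2 \leq \tilde\sigma_i \OPT_c$ together with the choice $p_i \geq \tilde\Omega(\tilde\sigma_i kd/\epsilon^2)$ gives $Y_i \leq \tilde O(\epsilon^2/(kd))$ almost surely, and the variance of $\sum_i Y_i$ is controlled by the same factor times the mean (which is $1$). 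Bernstein's inequality then yields $(1\pm\epsilon)$ relative error at this fixed center set with failure probability $\exp(-\tilde\Omega(kd))$.

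The main obstacle, and the reason for the $kd$ factor inside $p_i$, is lifting this pointwise concentration to hold simultaneously over \emph{all} $c^1,\ldots,c^k\in\R^d$. The standard device is to exploit that the function class
\[
    \mathcal F := \braces*{x \mapsto \min_{j\in[k]}\norm{x - c^j}_2^2 : c^1,\ldots,c^k\in\R^d}
\]
has pseudo-dimension $O(kd)$. One first restricts (up to negligible additive error absorbable into $\epsilon$) to centers lying in a bounded region of $\R^d$ whose diameter is polynomial in $n$ and in the aspect ratio of $\Ab$. This allows one to build an $\epsilon$-net of $\mathcal F$ of cardinality $\exp(O(kd\log(n/\epsilon)))$. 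A union bound over the net together with a Lipschitz-type argument handling points of $\mathcal F$ close to a net element then establishes uniform $(1\pm\epsilon)$ convergence. The hidden constants in $p_i$ are chosen large enough to absorb the $\log(n/\epsilon)$ factor coming from the net cardinality.

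A cleaner alternative, which avoids the explicit net step and gives the sharpest known dependence on $kd$, is the chaining-based analysis of \cite{WY2023}, which directly bounds the suprema of the empirical process indexed by $\mathcal F$ via Dudley's entropy integral. Either route yields uniform $(1\pm\epsilon)$ concentration and proves the theorem. Points with $p_i = 1$ are handled trivially since they contribute no randomness to the estimator; they only affect the sample complexity, which is bounded in expectation by $\sum_i p_i = \tilde O(\tilde{\mathfrak S}\,kd/\epsilon^2)$.
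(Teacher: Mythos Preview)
The paper does not provide its own proof of this theorem: it is stated with attribution to \cite{FL2011, BFL2016, WY2023} and invoked as a black box in the proof of Theorem~\ref{thm:random-order-bounded-sensitivity}. There is therefore nothing in the paper to compare your argument against.

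That said, your sketch is a faithful outline of the standard sensitivity-sampling proof as developed in the cited references: unbiasedness of the importance-weighted estimator, Bernstein-type concentration for a fixed set of centers using the sensitivity bound to control the range and variance, and then a uniform-convergence step over all center sets via the $O(kd)$ pseudo-dimension of the query class (either by an explicit net and union bound as in \cite{FL2011, BFL2016}, or by the chaining/entropy-integral route of \cite{WY2023}). One minor caveat: your first step toward the net, restricting centers to a bounded region ``polynomial in $n$ and the aspect ratio of $\Ab$'', is not how the cited works proceed and is a bit delicate to make rigorous for \emph{relative} error; the cleaner route is to work directly with the bounded pseudo-dimension of $\mathcal F$ and invoke the Feldman--Langberg/Braverman et al.\ uniform deviation bound, which requires no such restriction.
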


We then obtain the following result:

\begin{proof}[Proof of Theorem \ref{thm:random-order-bounded-sensitivity}]
Note that if a dataset has sensitivities bounded by $\alpha$, then a uniformly random sample of size $\tilde O(\alpha nkd/\epsilon^2)$ is a sample as given in Theorem \ref{thm:sensitivity-sampling-clustering}. Thus, approximately optimal centers $\hat c^1, \hat c^2, \dots, \hat c^k \in\mathbb R^d$ are approximately optimal centers for the entire dataset. These centers can be found using just
\[
    \tilde O((\alpha nkd/\epsilon^2)/\epsilon^2 + dk/\epsilon) = \tilde O(\alpha nkd/\epsilon^4 + dk/\epsilon)
\]
bits of space, using our turnstile streaming $k$ means clustering result (Theorem \ref{thm:k-means-turnstile-streaming}). Furthermore, because the input stream is a random order stream, these approximately optimal centers $\hat c^1, \hat c^2, \dots, \hat c^k$ can be obtained after seeing the first $\tilde O(\alpha nkd/\epsilon^2)$ elements of the stream. With approximately optimal centers in hand, note that the rest of the $n - \tilde O(\alpha nkd/\epsilon^2)$ points can be assigned on the fly, and thus space complexity is just an additional $O(n\log k)$ bits.
\end{proof}

\end{document}